\newcommand{\dd}{\mathrm{d}}
\def\Res{\mathop{\mathrm{Res}}}
\def\d{\partial}
\newcommand{\beq}{\begin{equation}}
\newcommand{\eeq}{\end{equation}}
\newcommand{\bea}{\begin{eqnarray}}
\newcommand{\eea}{\end{eqnarray}}
\newtheorem{theorem}{Theorem}[section]
\newtheorem{proposition}[theorem]{Proposition}
\newtheorem{lemma}[theorem]{Lemma}
\newtheorem{corollary}[theorem]{Corollary}
\theoremstyle{remark}
\newtheorem{example}{Example}[section]
\newtheorem{remark}[example]{Remark}
\newtheorem{assumption}[example]{Assumption}
\begin{document}

\title[]{From topological recursion to wave functions and PDEs quantizing hyperelliptic curves}
\author{Bertrand Eynard}
\address{Institut de Physique Th\'{e}orique de Saclay and Institute des Hautes \'{E}tudes Scientifiques}
\email{bertrand.eynard@ipht.fr}
\author{Elba Garcia-Failde}
\address{Institut de Physique Th\'{e}orique de Saclay and Institute des Hautes \'{E}tudes Scientifiques}
\email{garciafailde@ihes.fr}

\begin{abstract}
Starting from loop equations, we prove that the wave functions constructed from topological recursion on families of degree $2$ spectral curves with a global involution satisfy a system of partial differential equations, whose equations can be seen as quantizations of the original spectral curves. The families of spectral curves can be parametrized with the so-called times, defined as periods on second type cycles, and with the poles. These equations can be used to prove that the WKB solution of many isomonodromic systems coincides with the topological recursion wave function, which proves that the topological recursion wave function is annihilated by a quantum curve. This recovers many known quantum curves for genus zero spectral curves and generalizes this construction to hyperelliptic curves.
\end{abstract}

\maketitle

\tableofcontents

% % % % % % % % % % % % % % % %
% % % % % % % % % % % % % % % %
% % % % % % % % % % % % % % % %
% % % % % % % % % % % % % % % %
% % % % % % % % % % % % % % % %
% % % % % % % % % % % % % % % %

\section{Introduction} 

Topological recursion is a powerful tool, which was first discovered in the context of large size asymptotic expansions in random matrix theory \cite{E1MM,CEO06,CE061,CE062} and established as an independent universal theory around 2007 \cite{EO07}. Its most important role was to unveil a common structure in many different topics in mathematics and physics, which helped building bridges among them and gaining general context. For instance, it has been related to fundamental structures in enumerative geometry and integrable systems, such as intersection theory on moduli spaces of curves and cohomological field theories.

A quantum curve is a Schr\"{o}dinger operator-like non-commutative analogue of a plane curve that annihilates the so-called wave function, which can be seen as a WKB asymptotic solution of the corresponding differential equation. Inspired by the intuition coming from matrix models, it has been conjectured that there exists such a quantum curve associated to a spectral curve, which is the input of the topological recursion, and whose WKB asymptotic solution is reconstructed by the topological recursion output.

This claim was verified in \cite{BouchardEynard_QC} for a class of spectral curves called admissible, which are basically spectral curves whose Newton polygons have no interior point. Admissible spectral curves include a very large number of spectral curves of genus $0$. Therefore they recover many cases previously studied in the literature in various algebro-geometric contexts. In the present work, we go beyond admissible spectral curves and study the quantum curve problem for spectral curves with a global involution, given by algebraic curves whose defining polynomials are of the form $y^2=R(x)$, with $R$ a rational function on $x$. This setting includes the genus $0$ spectral curves with a global involution $y\mapsto -y$, such as the well-known Airy curve, which are many less that the set of admissible curves, but it also includes all genus $1$ spectral curves, i.e.~all elliptic curves, and all hyperelliptic curves, which are curves of genus $\hat{g}>1$ where $R$ is a polynomial in $x$.

Quantum curves encode enumerative invariants in an interesting way, and help building the bridge between the geometry of integrable systems and topological recursion. One of the most celebrated applications of quantum curves is in the context of knot theory, where the quantum curve of the $A$-polynomial of a knot provides a conjectural constructive generalization of the volume conjecture \cite{DijkFuji,DijkFujiMana, BEApol}.

\subsection{Quantum curves and topological recursion} We start by presenting the idea of quantum curves and their relation to topological recursion. Consider 
$P\in \mathbb{C}[x,y]$ and let
$$
\mathcal{C}=\{(x,y)\in\mathbb{C}^2\mid P(x,y)=0\}
$$ be a plane curve. 

Consider $\hbar>0$ a formal parameter. A quantization of the plane curve $\mathcal C$,  is a differential operator $\widehat{P}$ of the form
$$
\widehat{P}(\widehat{x},\widehat{y};\hbar)=P_0(\widehat{x},\widehat{y}) +O(\hbar),
$$
where $\widehat{x}=x\cdot\,$, $\widehat{y}=\hbar\frac{\dd}{\dd x}$. In fact, $\widehat P$ is a normal-ordered operator valued (all the $\widehat{y}$ in a monomial are placed to the right of all the $\widehat{x}$) formal series (or transseries) whose leading order term $P_0(\widehat{x},\widehat{y})$ recovers the polynomial equation of the original spectral curve. Actually, in general, $P_0(x,y)$ can be a reducible polynomial with $P(x,y)$ one of its factors.
The operators $\widehat{x}$ and $\widehat{y}$ satisfy the following commutation relation which justifies the name ``quantization'':
$$
[\widehat{y},\widehat{x}]=\hbar.
$$
One can consider a Schr\"{o}dinger-type differential equation
\begin{equation}\label{quantumC}
\widehat{P}(\widehat{x},\widehat{y})\psi(z,\hbar)=0, \text{ with } z\in \mathcal{C},
\end{equation}
whose solution can be calculated via the WKB method, that is we require $\psi$ to have a formal series (resp.~transseries) $\hbar$ expansion of the form
$$
\psi(z,\hbar)=\exp\bigg(\sum_{m\geq 0}\hbar^{m-1}S_m(z)\bigg)
$$
(resp.~of the form of a formal series in powers of exponentials of inverse powers of $\hbar$ whose coefficients are formal power series in $\hbar$).
The coefficients $S_k(z)$ are determined recursively via \eqref{quantumC}. One fundamental question is if the formal solution $\psi$ can be computed directly from the original plane curve $\mathcal{C}$. The conjectural answer is provided by the topological recursion and is already established in many cases.

The input data of the topological recursion is called spectral curve. For the purpose of this paper, a \emph{spectral curve} will be given as in \cite{GIS}, i.e.~by the data $(\Sigma,x,y, \omega_{0,1},\omega_{0,2})$, where $\Sigma$ is a compact Riemann surface, and $x$ and $y$ are meromorphic functions on $\Sigma$ such that the zeroes of $\dd x$ do not coincide with the zeroes of $\dd y$. Then $x$ and $y$ must be algebraically dependent, i.e.~$P(x,y)=0$ with $P\in\mathbb{C}[x,y]$. We consider $\omega_{0,1}\coloneqq y\dd x$ and $\omega_{0,2}$ a symmetric bi-differential $B$ on $\Sigma^2$ with double poles along the diagonal and vanishing residues. The output of the topological recursion are symmetric meromorphic multi-differentials $\omega_{g,n}(z_1,\ldots,z_n)$ on $\Sigma^n$. In particular, for $n=0$, the $F_g\coloneqq \omega_{g,0}$ are complex numbers or expressions depending on parameters of the spectral curves. In the present work, we assume that the poles of $y$ but not of $x$ cannot be ramification points of $x$.

The perturbative wave function $\psi(z)$ constructed from topological recursion is defined (see \cite{EO07}) as 
$$
\frac{1}{x(z)-x(o)}\exp\bigg(\sum_{g\geq 0, n\geq 1}\frac{\hbar^{2g-2+n}}{n!}\int_o^z\cdots\int_o^z \Big(\omega_{g,n}(z_1,\ldots,z_n)-\delta_{g,0}\delta_{n,2}\frac{\dd x(z_1)\dd x(z_2)}{(x(z_1)-x(z_2))^2}\Big) \bigg),
$$
where $o\in \Sigma$ is a chosen base-point for integration. 
In general the quantum curve is obtained by choosing $o=\zeta$ such that $x(\zeta)=\infty$, and one may need to regularize the $(g,n)=(0,1)$ term in the limit $o\to \zeta$. We choose a  regularization for $(g,n)=(0,2)$ which slightly differs from some part of the literature and produces the first factor of the expression. We emphasize that our definition transforms as a spinor $\frac12$-form under change of coordinates. Actually, we will generalize the definition of the wave function by allowing integration over any divisor as in \cite{GIS}.

A further question is if the differential operator $\widehat{P}$ can be directly constructed from the topological recursion. This has also been answered affirmatively in many cases. In this article, we actually construct an operator that we believe is a more fundamental object, which appears more naturally for curves of genus $\hat{g}>0$, and provides a PDE which also allows to reconstruct the wave function~$\psi$. Our system of PDEs will also imply a quantum curve in the more classical sense considered in this section.

\subsection{Generalized cycles} Let us now recall the concept of generalized cycles on a Riemann surface as in \cite{GIS}, which will help us introduce suitable local coordinates in the space of spectral curves. 
These local coordinates can be seen as deformation parameters giving rise to families of spectral curves and will play a key role when producing our system of PDEs for a large class of spectral curves.

The so-called times $t_i$, introduced in \cite{GIS}, can be viewed as local coordinates in the space of spectral curves. Time deformations $\partial_{t_i}$ belong to the tangent space, which is isomorphic to the space of meromorphic differential forms on the spectral curve, and via form-cycle duality it can be identified with the space of generalized cycles on the spectral curve.
In \cite{GIS}, generalized cycles are defined as elements of the dual of the space of meromorphic forms on $\Sigma$ such that integrating $\omega_{0,2}=B$ on them gives meromorphic $1$-forms.

In practice a generating family is given by three kinds of cycles, dual to three kinds of forms:
\begin{itemize}
\item \textbf{1st kind cycles:} This type of cycles are usual non-contractible cycles, i.e.~elements $[\gamma]\in \textrm{H}_1(\Sigma,\mathbb{C})$. If $\Sigma$ is compact of genus $\hat{g}$, then $\dim \textrm{H}_1(\Sigma,\mathbb{C})=2 \hat{g}$.
\item \textbf{2nd kind cycles:} A cycle of second type $\gamma=\gamma_p . f$ consists of a small circle $\gamma_p$ around a point $p\in \Sigma$ weighted by a function $f$ holomorphic in a neighborhood of $\gamma_p$ and meromorphic in a neighborhood of $p$, with a possible pole at $p$ (of any degree), i.e.~by definition $\int_\gamma \omega \coloneq 2\pi i\Res_{p} f \omega$.
\item \textbf{3rd kind cycles:} They are open chains $\gamma=\gamma_{q\rightarrow p}$ (paths up to homotopic deformation with fixed endpoints), whose boundaries $\partial\gamma=[p]-[q]$ are degree zero divisors.
\end{itemize}

Let $x : \Sigma \rightarrow \mathbb{C}$ be the meromorphic function that makes the spectral curve a branched cover of the Riemann sphere. A basis of functions which are meromorphic in a neighborhood of $p\in\Sigma$ is given by
$$
\{\xi_p^k\}_{k\in\mathbb{Z}}, \text{ with } \xi_p=(x-x(p))^{1/\text{ord}_p(x)}.
$$
If $x(p)=\infty$, we set $\xi_p=x^{1/\text{ord}_p(x)}$, with $\text{ord}_p (x)<0$.
The following set of cycles generates an integer lattice in the space of second kind cycles:
\begin{align}\label{ABcycles}
\mathcal{A}_{p,k} & =\gamma_p.\xi_p^k, \, \, p\in\Sigma, k\geq 0, \\
\mathcal{B}_{p,k} & =\frac{1}{2\pi i}\gamma_p.\frac{\xi_p^{-k}}{k}, \, \, p\in\Sigma, k\geq 1. \label{2ndTypeCycles}
\end{align}

Given a meromorphic $1$-form $\omega$ on $\Sigma$, for every pole $p$ of $\omega$, we define for every $j\geq 0$ the \emph{KP times}:
\begin{equation}
t_{p,j}\coloneqq  \frac{1}{2\pi i}\int_{\mathcal{A}_{p,j}}\omega = \frac{1}{2\pi i}\int_{\gamma_{p}}\xi_p^j \,\omega = \underset{p}{\mathrm{Res}}\, (\xi_p)^j \omega\,,
\end{equation}
so that 
$$
\omega \sim \sum_{j=0}^{\text{deg}_p(\omega)} t_{p,j} \xi_p^{-j-1}d\xi_p + \text{analytic at }p.
$$
Since we assumed $\Sigma$ to be compact, the number of poles is finite. Moreover, all the times with $j\geq \mathrm{deg}_p \omega$ are vanishing. Therefore, only a finite number of times are non-zero.
\subsection{Context and outline} One of our motivations to study the problem of quantum curves for any spectral curve with a global involution was to be able to recover the whole isomonodromic system associated to Painlev\'{e} I just from loop equations. We also aimed to give the first quantum curves for spectral curves of genus $\hat{g}>1$ and we were especially interested to see how introducing deformations with respect to the times could give rise to systems of PDEs that we consider more natural in general.

\subsubsection{Comparison to the literature}

In \cite{Bouchard_Quantizing}, they generalize the techniques employed in \cite{BouchardEynard_QC} to find the quantum curves for admissible curves to apply them to the family of genus one spectral curves given by the Weierstrass equation. They find an order two differential operator that annihilates the perturbative wave function $\psi$. However, it is not a quantum curve, since it contains infinitely many $\hbar$ corrections which are not meromorphic functions of $x$. They also check the first orders of the conjectural quantum curve \cite{Eynard_2009, EMhol, BEInt} for the non-perturbative wave function.

In \cite{IwakiSaenz}, they focus on the Painlev\'{e} I spectral curve, which is a degenerate torus, and from topological recursion they get a PDE that annihilates the wave function, which is compatible with the isomonodromic system and, together with another identity coming from integrable systems, provides a quantum curve that annihilates the wave function. In \cite{Iwaki}, the first author slightly generalizes the same results to the case of any elliptic curve, that is he considers not only the degenerate case of Painlev\'{e} I, but tori where none of the two cycles are pinched. In both papers, they show that the $\hbar$ corrections from \cite{Bouchard_Quantizing} can be controlled by a derivative with respect to a deformation parameter. The quantum curves still contain infinitely many $\hbar$-correction terms, but in this case, these corrections are given by the asymptotic expansion of the solution of Painlev\'{e} I around $\hbar\rightarrow 0$.

In \cite{IwakiMarchalSaenz}, the approach is reversed: they prove that Lax pairs associated with $\hbar$-dependent Painlev\'{e} equations satisfy the topological type property of \cite{BergereBorotEynard}, which implies that one can reconstruct the $\hbar$-expansion of the isomonodromic $\tau$-function from topological recursion. Finally, in \cite{MarchalOrantin}, they generalize this result showing that it is always possible to deform a differential equation $\partial_x\Psi(x)=\mathcal L(x)\Psi(x)$, with $\mathcal L(x)\in\mathfrak{sl}_2(\mathbb{C})$ by introducing a formal parameter $\hbar$ in such a way that it satisfies the topological type property.

In the present work, we recover the PDE from \cite{IwakiSaenz,Iwaki} from loop equations (which are necessary for topological recursion, but not sufficient\footnote{Topological recursion provides a specific solution of loop equations. General solutions of loop equations don't necessarily satisfy the so-called projection formula, which implies that the purely holomorphic part of $\omega_{g,n}$ vanishes (in the sense of \cite[Section 2.2.2]{BSblob}), and they are governed by a generalisation called blobbed topological recursion, which was introduced in \cite{BSblob}.}) and as part of a system that we obtain because we consider a wave function where the integrals are over any divisor of degree zero. With our system, we are able to recover the whole isomonodromic system associated to Painlev\'{e} I just from loop equations. We also give an additional meaning to the deformation parameter that appears naturally in \cite{IwakiSaenz,Iwaki} for the case of elliptic curves, making use of the powerful idea of deforming with respect to the generalized cycles introduced in the previous section. The elliptic curve case is a very concrete case in which there is only one such deformation parameter, but we see that we need to consider several in the higher genus cases.

\subsubsection{Outline}

In Section~\ref{section2} we introduce the type of curves we consider in this work and relate them to the concept of spectral curves as input of the topological recursion. We also give the link to the spectral curves in the setting of isomonodromy systems, which serves as a motivation to us. Moreover, we compute the deformation parameters of the family of elliptic curves, which recovers the Painlev\'e I isomonodromy system setting in the degenerate case; in particular, the so-called KP times.

In Section~\ref{section3} we recall the loop equations for our specific setting and deduce some interesting consequences relating them to time deformations, which appear when considering spectral curves of genus $\hat{g}>0$.

In Section~\ref{section4} we prove our main result. From loop equations we obtain a system of partial differential equations that annihilates our wave function defined from topological recursion integrating over a general divisor. We also give the shape of this system in the particular cases of genus zero and elliptic curves.
Finally, we consider our system for the particular case of a two-point divisor, which then consists of only two PDEs, with differentials with respect to two spectral variables and the deformation parameters, called times. We are able to combine the two PDEs in such a way that we eliminate one of the spectral variables.

In Section~\ref{section5}, we argue that if the spectral curve comes from an isomonodromic system, then the topological recursion non-perturbative wave function has to coincide with the solution of the isomonodromic system, which implies an ODE, which is the quantum curve we were looking for. As particular interesting cases, we recover the first Painlev\'e system and equation, and its higher analogues defined in terms of Gelfand--Dikii polynomials.

In Appendix~\ref{appendix}, we prove that the integrable kernel of any isomonodromic system satisfies the same PDE that we found for the two-point wave function constructed from topological recursion.

\subsection*{Acknowledgements}This paper is partly a result of the ERC-SyG project, Recursive and Exact New Quantum Theory (ReNewQuantum) which received funding from the European Research Council (ERC) under the European Union's Horizon 2020 research and innovation programme under grant agreement No 810573. B.E.'s is also partly supported by the ANR grant Quantact: ANR-16-CE40-0017.
E.G.-F.~was supported by the public grant ``Jacques Hadamard'' as part of the
Investissement d'avenir project, reference ANR-11-LABX-0056-LMH,
LabEx LMH.

While we were finishing this manuscript, we learnt that O.~Marchal and N.~Orantin were working on the same topic using different methods, and we are grateful to them for informing us, and for valuable discussions about this. We plan to work together on the upcoming generalization to plane curves of arbitrary degrees\footnote{While the revised version of this article was produced, the first affirmative answer to the quantum curve conjecture for generic plane curves of arbitrary degrees was released \cite{EGMO21}.}.

We would also like to thank R.~Belliard, M.~Berg\`ere, G.~Borot, V.~Bouchard, J.~Zinn-Justin and A.~Voros for valuable discussions on the subject.

\section{Spectral curves with a global involution}\label{section2} 

In this article we focus on algebraic plane curves of the form
\beq
y^2=P(x),
\eeq
with $P(x)\in \mathbb C[x]$ an arbitrary polynomial of $x$, and we will generalize to $y^2=R(x)$ with $R(x)\in \mathbb C(x)$ an arbitrary rational function of $x$.

The degree of the polynomial is related to the genus of the curve. For example, in the case in which $P$ is a polynomial of degree $2m+1$ or $2m+2$ the curve has genus $\hat{g}\leq m$, with equality if the plane curve is smooth. 
If the degree is odd, the curve has one point at infinity and if the degree is even, the curve has two points at infinity. If $\hat{g}>1$, the curve is called hyperelliptic; if $\hat{g}=1$ (with a distinguished point), it is called elliptic, and if $\hat{g}=0$, it is called rational.

\subsection{Spectral curves as input of the topological recursion}
The method of topological recursion associates to a spectral curve $\mathcal{S}$ a doubly indexed family of meromorphic multi-differentials $\omega_{g,n}$ on $\Sigma^n$:
\begin{center}
TR:\ \ Spectral curve $\mathcal{S}=(\Sigma, x, y\dd x, B)$ $\leadsto$ Invariants $\omega_{g,n}$ ($F_g = \omega_{g,0}$).
\end{center}
A spectral curve is the data of a Riemann surface $\Sigma$, a holomorphic projection $x:\Sigma\to \mathbb C \mathrm{P}^1$ to the base $\mathbb C \mathrm{P}^1$ which turns $\Sigma$ into a ramified cover of the sphere, a meromorphic 1-form $y\dd x$ on $\Sigma$, and $B$ a 2nd kind fundamental differential, i.e.~a symmetric $1\boxtimes 1$ form on $\Sigma\times \Sigma$ with normalized double pole on the diagonal and no other pole, behaving near the diagonal as:
$$
B(z_1,z_2) = \frac{\dd z_1 \dd z_2}{(z_1-z_2)^2} + \text{holomorphic at }z_1=z_2.
$$
In case the spectral curve is of genus $0$, i.e.~$\Sigma = \mathbb{C}\mathrm{P}^1$, it is known that such a $B$ is unique and is worth
$$
B(z_1,z_2)=\frac{\dd z_1\dd z_2}{(z_1-z_2)^2}.
$$
If the genus $\hat{g}$ of $\Sigma$ is $\geq 1$, $B$ is not unique since one can add any symmetric bilinear tensor product of holomorphic 1-forms. 
A way to find a unique one is to choose a Torelli marking, which is a choice of a symplectic basis $\{\{\mathcal{A}_i\}_{i=1}^{\hat{g}},\{\mathcal{B}_i\}_{i=1}^{\hat{g}}\}$ of $H_1(\Sigma,\mathbb{Z})$. There exists a unique $B$  normalized on the $\mathcal{A}$-cycles of $H_1(\Sigma,\mathbb{Z})$ 
$$
\oint_{\mathcal{A}_i}B(z_1,\cdot)=0.
$$
Such a bi-differential has a natural construction in algebraic geometry and is called the normalized \emph{fundamental differential of the second kind} on $\Sigma$. 
See \cite{Eynard2018B} for constructing $B$ for general algebraic plane curves.

We define the \emph{filling fractions} as the $\mathcal{A}$-periods of $\omega_{0,1}$:
\beq\label{filling}
\epsilon_j\coloneqq \frac{1}{2\pi i}\oint_{\mathcal{A}_j}y \dd x, \text{ for } j=1,\ldots,\hat{g}.
\eeq

\begin{remark}
The coordinate $x: \Sigma \rightarrow \mathbb{C}\mathrm{P}^1$ in the definition of spectral curve can be thought as a ramified covering of the sphere. We call \emph{degree} of the spectral curve the number of sheets of the covering, i.e.~the number of preimages of a generic point. In this article, we focus on spectral curves of degree $2$ with a global involution $(x,y)\mapsto (x,-y)$.
\end{remark}

\subsection{Spectral curves from isomonodromic systems}

Painlev\'{e} transcendents have their origin in the study of special functions and of isomonodromic deformations of linear differential equations. They are solutions to certain nonlinear second-order ordinary differential equations in the complex plane with the Painlev\'{e} property, i.e.~the only movable singularities are poles.

An $\hbar$-dependent \emph{Lax pair} is a pair $(\mathcal L(x,t;\hbar),\mathcal R(x,t;\hbar))$ of $2\times 2$ matrices, whose entries are rational functions of $x$ and holomorphic in $t$ such that the system of partial differential equations
$$
\left\{
\begin{aligned}
\hbar \frac{\partial}{\partial x}\Psi(x,t) & = \mathcal L(x,t;\hbar)\Psi(x,t), \\
\hbar \frac{\partial}{\partial t}\Psi(x,t) & =\mathcal R(x,t;\hbar)\Psi(x,t) 
\end{aligned}
\right.
$$
is compatible. We call such a system an \emph{isomonodromy system}.

The compatibility condition, i.e.~$\frac{\partial}{\partial t\partial x}\Psi = \frac{\partial}{\partial x\partial t}\Psi$, is equivalent to the so-called zero-curvature equation:
$$
\hbar\frac{\partial \mathcal L}{\partial t}-\hbar\frac{\partial \mathcal R}{\partial x} + [\mathcal L,\mathcal R]=0.
$$
In \cite{JimboMiwa}, Jimbo and Miwa gave a list of the Lax pairs whose compatibility conditions are equivalent to the six Painlev\'{e} equations.

Let us consider the expansion around $\hbar =0$ of the first equation of the system: $\mathcal L(x,t;\hbar)= \sum_{k\geq 0}\hbar^k \mathcal L_k(x,t)$.
The associated \emph{spectral curve} is given by
\beq
\det(y\,\mathrm{Id}-\mathcal L_0(x,t))=0,
\eeq
which is actually a family of algebraic curves parametrized by $t$.

\subsubsection{Motivational example: The first Painlev\'{e} equation}

Let us consider the first Painlev\'{e} equation with a formal small parameter $\hbar$:
$$
P_I \colon \; \frac{\hbar^2}{2}\frac{\partial^2}{\partial t^2}U - 3U^2=t.
$$
The leading term $u=u(t)$ of a formal power series solution $U(t,\hbar) = u(t) + \sum_{k\geq 1}\hbar^{2k}u_k(t)$ satisfies $t=-3u^2$ and determines the subleading terms recursively:
$$
u_k =c_k u^{1-5k}, c_k \in\mathbb{Q}
$$
by the recursion
$$
c_0=1 \, , \quad\quad
2c_{k+1}  = \frac{25 k^2-1}{6^3} c_k - \sum_{j=1}^k c_j c_{k+1-j}.
$$
The coefficient $u_k(t)$ has a singularity at $u=0$, i.e.~$t=0$. This special point is called a turning point of $P_I$. We shall assume that $t\neq 0$.
We denote by $\dot{U}$ the derivative with respect to $t$ of $U(t,\hbar)$.

The Tau-function $\mathcal{T}(t)$ is defined in such a way that
$$
U(t) = -\hbar^2 \ \frac{\partial^2}{\partial t^2} \log\mathcal{T}.
$$
The Painlev\'e equation ensures that $\mathcal{T}$ is an entire function with simple zeros at the movable poles of $U$.

The Lax pair associated to the first Painlev\'{e} equation is given by
\beq\label{eqPainleveLax}
\mathcal L(x,t;\hbar)  \coloneqq \begin{pmatrix}
\frac{\hbar}{2}\dot{U} & x-U \\
(x-U)(x+2U)+\frac{\hbar^2}{2}\ddot{U} & -\frac{\hbar}{2}\dot{U}
\end{pmatrix}\text{ and }\,
\mathcal R(x,t;\hbar)  \coloneqq \begin{pmatrix}
0 & 1 \\
x+2U & 0
\end{pmatrix}.
\eeq

The leading term of $\mathcal L$ in its expansion around $\hbar =0$ is given by
$$
\mathcal L_0(x,t) = \begin{pmatrix}
0 & x-u \\
x^2+ux-2u^2 & 0
\end{pmatrix}.
$$
The spectral curve reads
\beq\label{degenerate}
\det(y\,\mathrm{Id}-\mathcal L_0(x,t))= y^2-(x-u)^2(x+2u)=0,
\eeq
which is actually a family of algebraic curves parametrized by $t$. Since we have assumed $t\neq 0$, the two roots $x=u$ and $x=-2u$ of $R(x)=(x-u)^2(x+2u)$ are distinct, but the root $x=u$ has multiplicity $2$. These curves have genus $0$ or, more precisely, constitute a family of tori with one of the cycles pinched.

In general, we want to study the family of tori given by
\beq
y^2=x^3+tx+V,
\eeq
where $R(x)=x^3+tx+V$ has three different roots.
The case $t=-3u^2$, $V=2u^3$ recovers the particular degenerate case \eqref{degenerate}.

\subsection{Parametrizations and deformation parameters of the elliptic case}
In the elliptic case, we give the parametrizations and compute the coordinates because it has more structure than the genus $0$ case, since we need to introduce one deformation parameter, and it also illustrates how the general case works. The degenerate case corresponds to Painlev\'{e} I, which is our prototypical example when making the connection to isomonodromy systems.
\subsubsection{Degenerate case}

When $V=2u^3$,
 consider the parametrization of the spectral curve given by 
$$
\begin{cases}
x(z)  = z^2-2u, \\
y(z)  = z^3-3uz,
\end{cases}
$$
with $z\in\Sigma=\mathbb{C}\mathrm{P}^1$ and the fundamental form of the second kind on $\Sigma$: 
$$B(z_1,z_2)=\frac{\dd z_1 \dd z_2}{(z_1-z_2)^2}.$$
It satisfies
$$
y^2 = x^3 + tx+V,
$$
with $t=-3u^2$, $V=2u^3$.

This curve has one ramification point at $z=0$ and one pole at $z=\infty$.

Near $z=\infty$, we have
$$
y \sim x^{\frac32} +\frac{t}{2x^{\frac12}}+\frac{V}{2x^{\frac32}} - \frac{t^2}{8 x^{\frac52}} - \frac{tV}{4 x^{\frac72}} + O(x^{-\frac92}).
$$
This implies that
$$
t_{\infty,1}=\frac{1}{2\pi i} \int_{\mathcal A_{\infty,1}} y\dd x = \underset{z\rightarrow\infty}{{\mathrm Res}} \ x^{-\frac12} \ y \dd x = - t,
$$
$$
\int_{\mathcal B_{\infty,1}} y \dd x = \underset{z\rightarrow\infty}{{\mathrm Res}} \ x^{\frac12} \ y \dd x = -V,
$$
$$
t_{\infty,5}=\frac{1}{2\pi i} \int_{\mathcal A_{\infty,5}} y\dd x = \underset{z\rightarrow\infty}{{\mathrm Res}} \ x^{-\frac52} \ y \dd x = -2,
$$
$$
\int_{\mathcal B_{\infty,5}} y \dd x = \frac{1}{5}\underset{z\rightarrow\infty}{{\mathrm Res}} \ x^{\frac52} \ y \dd x = \frac{tV}{10},
$$
where we use the generalized cycles $\mathcal A_{p,k}$ and $\mathcal B_{p,k}$ defined in \eqref{ABcycles} and we have considered only the values $k=1,5$ for which $t_{\infty,k}\neq 0$.

The degenerate cycle $\mathcal A$ corresponds to a small simple closed curve encircling $z_0=\sqrt{3u}$ and the cycle $\mathcal B$ corresponds to the chain $(-\sqrt{3u}\to\sqrt{3u})$. Therefore, we have
$$
\epsilon = \frac{1}{2\pi i} \oint_{\mathcal A} y \dd x = 0,
\quad \quad
I = \oint_{\mathcal B} y \dd x = \frac{-8}{15} \ (3u)^{\frac52}.
$$

The prepotential $F_0=\omega_{0,0}$, defined in general in \cite{GIS, EO07}, is worth
$$
F_0 = 
\frac12\bigg( \epsilon I  + \sum_k t_{\infty,k} \int_{\mathcal B_{\infty,k}} y\dd x \bigg)
=\frac12\bigg(tV - 2 \frac{tV}{10}\bigg)
=\frac{2}{5} tV = -\frac{12}{5}u^5
$$
and satisfies
$$
\frac{\partial F_0}{\partial t} = 2u^3=V,
\quad  \quad
\frac{\partial^2 F_0}{\partial t^2} = \frac{\partial V}{\partial t}=-u.
$$

\subsubsection{Non-degenerate case: elliptic curves}

When $V\neq 2u^3$, we shall consider the Weierstrass parametrization of the torus of modulus $\tau$, and with a scaling $\nu$:
$$
\begin{cases}
x(z)  = \nu^2 \wp(z), \\
y(z)  = \frac{\nu^3}{2} \wp'(z),
\end{cases}
$$
with $z\in\Sigma=\mathbb{C}/\mathbb{Z}+\tau\mathbb{Z}$ the torus of modulus $\tau$.
The fundamental form of the second kind on $\Sigma$, normalized on the $\mathcal A$ cycle is: 
$$
B(z_1,z_2)=(\wp(z_1-z_2)+G_2(\tau))\dd z_1 \dd z_2.$$ 
with $G_k(\tau)$ the $k^{\rm th}$ Eisenstein series.
It satisfies
$$
y^2 = x^3 + tx+V,
$$
with 
$$
t = -15 \nu^4 G_4(\tau),
\quad \quad
V= -35 \nu^6  G_6(\tau).
$$
Instead of parametrizing the spectral curve with $t$ and $V$, we shall parametrize it with $t$ and $\epsilon$, where
$$
\epsilon = \frac{1}{2\pi i} \oint_{\mathcal A} y\dd x = \frac{3 \nu^5}{\pi i}\big(2G_2(\tau)G_4(\tau)-7G_6(\tau)\big)=- 3 \nu^5 G'_4(\tau).
$$
We shall now write
$$
V=V(t,\epsilon).
$$
We have
$$
\dd V = \frac53 \pi i \nu \dd\epsilon - \frac23 \nu^2 G_2(\tau) \dd t.
$$

This curve has 3 ramification points at $z=\frac12, \frac\tau{2},\frac{1+\tau}{2}$, and one pole at $z=0$.

Near $z=0$, we have
$$
y \sim x^{\frac32} +\frac{t}{2x^{\frac12}}+\frac{V}{2x^{\frac32}} - \frac{t^2}{8 x^{\frac52}} - \frac{tV}{4 x^{\frac72}} + O(x^{-\frac92}).
$$
This implies that
$$
t_{\infty,1}=\frac{1}{2\pi i} \int_{\mathcal A_{\infty,1}} y\dd x = \underset{z\rightarrow 0}{{\mathrm Res}} \ x^{-\frac12} \ y\dd x = -t,
$$
$$
\int_{\mathcal B_{\infty,1}} y\dd x = \underset{z\rightarrow 0}{{\mathrm Res}} \ x^{\frac12} \ y\dd x = -V.
$$
$$
t_{\infty,5}=\frac{1}{2\pi i} \int_{\mathcal A_{\infty,5}} y\dd x = \underset{z\rightarrow 0}{{\mathrm Res}} \ x^{-\frac52} \ y \dd x = -2,
$$
$$
\int_{\mathcal B_{\infty,5}} y \dd x = \frac{1}{5}\underset{z\rightarrow 0}{{\mathrm Res}} \ x^{\frac52} \ y \dd x = \frac{tV}{10}.
$$
We also consider
$$
I = \oint_{\mathcal B} y\dd x .
$$
The prepotential $F_0=\omega_{0,0}$ is worth
$$
F_0 = \frac12\left( t V -2\frac{tV}{10}+I\epsilon\right) =  \frac25 tV +\frac12 I\epsilon,
$$
and satisfies
$$
\dd F_0 = V \dd t+I \dd \epsilon .
$$

\begin{remark}
In particular, for elliptic curves, we can express the independent term of the curve $V$ as a deformation of the prepotential $\omega_{0,0}$ with respect to the coefficient of the linear term $t$:
\beq \label{prepotentialV}
\frac{\partial}{\partial t} \omega_{0,0}= V.
\eeq
This is a consequence of $\frac{\partial}{\partial t_{\infty,1}}\omega_{0,0}=\int_{\mathcal{B}_{\infty,1}}\omega_{0,1}$ (which is already proved in \cite[Theorem~5.1]{EO07}).
\end{remark}

In terms of the torus modulus $\tau$ and scaling $\nu$, we have
$$
t=-15 \nu^4 G_4(\tau),
\quad  \quad
V= -35 \nu^6 G_6(\tau),
\quad  \quad
\epsilon =-3 \nu^5 G'_4(\tau),
\quad  \quad
I= 2\pi i \tau \epsilon + \frac45 \nu t.
$$

We also have
$$
\omega_{1,0}=F_1 = \frac{1}{48}\log{(4t^3+27 V^2)} + \frac{1}{4} \log{\frac{2}{\nu}}.
$$

\section{Loop equations and deformation parameters}\label{section3}

We start by recalling the loop equations for the topological recursion applied to any spectral curve of degree $2$ with a global involution. Let $y^2=R(x)$, with $R\in\mathbb{C}(x)$. The family of curves that we consider has the global involution $z\mapsto -z$, i.e.~$x(z)=x(-z)$.

Let $\omega_{0,1}(z) \coloneqq y(z) \dd x(z)$, $\omega_{0,2}(z_1,z_2)\coloneqq B(z_1,z_2)$ and $\omega_{g,n}$ for $2g-2+n>0$ be defined as the topological recursion amplitudes for this initial data \cite{EO07}.

The loop equations for this particular case read:

\begin{theorem}\cite[Proposition~2.8]{BEO}\label{loop}
Let $g, n\in \mathbb{N}$. The linear loop equations read:
\beq 
\omega_{g,n+1}(z,z_1,\ldots,z_n)+\omega_{g,n+1}(-z,z_1,\ldots,z_n)
= \delta_{g,0}\delta_{n,1} \ \frac{\dd x(z)\dd x(z_1)}{(x(z)-x(z_1))^2}.
\eeq

The quadratic loop equations claim that the following expression
\beq\label{quadraticloop}
\frac{1}{\dd x(z)^2}\Bigg(\omega_{g-1,n+2}(z,-z,z_1,\ldots,z_n)+\sum_{\substack{g_1+g_2=g, \\ I_1\sqcup I_2 = \{z_1,\ldots,z_n\}}} \omega_{g_1,1+|I_1|}(z,{I_1})\omega_{g_2,1+|I_2|}(-z,{I_2})\Bigg)
\eeq
is a rational function of $x(z)$ with no poles at the branch-points.
\end{theorem}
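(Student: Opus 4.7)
My plan is to prove both loop equations simultaneously by induction on the Euler characteristic $\chi=2g-2+(n+1)$ of the multi-differential on the left-hand side. The base cases reduce to direct checks. For the linear equation, the case $(g,n)=(0,0)$ is just $y(-z)=-y(z)$, while the case $(g,n)=(0,1)$ asks that $B(z,z_1)+B(-z,z_1)=\frac{\dd x(z)\dd x(z_1)}{(x(z)-x(z_1))^2}$; choosing $B$ invariant under the global involution $\sigma:z\mapsto-z$ in each argument (possible with a $\sigma$-compatible Torelli marking), the symmetrized form descends to a symmetric bi-differential on $\mathbb{CP}^1\times\mathbb{CP}^1$ with a normalized double pole along $x(z)=x(z_1)$ and no other poles, forcing the identity. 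For the quadratic equation, the base case $(g,n)=(0,0)$ is $\omega_{0,1}(z)\omega_{0,1}(-z)/\dd x(z)^2=-y(z)y(-z)=y(z)^2=R(x(z))$, rational in $x$ with poles only at the poles of $R$, hence never at branch points (which are zeros of $R$).

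For the induction step I use the topological recursion formula
\[
\omega_{g,n+1}(z,z_1,\dots,z_n)=\sum_a \Res_{z'\to a} K(z,z')\,\mathcal Q_{g,n+1}(z';z_1,\dots,z_n),
\]
where $\mathcal Q_{g,n+1}$ is exactly the bracket in \eqref{quadraticloop} and $K(z,z')=\tfrac12\bigl(\omega_{0,1}(z')-\omega_{0,1}(-z')\bigr)^{-1}\int_{-z'}^{z'}B(\cdot,z)$. The crucial observation is that the base-case linear identity applied inside the numerator, combined with $x(-z')=x(z')$, gives $\int_{-z'}^{z'}\frac{\dd x(w)\dd x(z)}{(x(w)-x(z))^2}=0$, so $K(z,z')+K(-z,z')=0$. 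Adding the recursion formula at $z$ and at $-z$ therefore kills the kernel entirely and delivers the linear loop equation at the new level $\chi\geq 1$.

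For the quadratic statement at the same level, $\sigma$-invariance of $\mathcal Q_{g,n+1}(z;z_1,\dots,z_n)$ in $z$ is automatic: $\omega_{g-1,n+2}(z,-z,\dots)$ is manifestly symmetric, and the product sum is invariant under the swap $(g_1,I_1)\leftrightarrow(g_2,I_2)$. Hence $\mathcal Q_{g,n+1}/\dd x(z)^2$ depends only on $x(z)$; rationality in $x(z)$ follows from the inductively maintained fact that each $\omega_{g',n'}$ with $\chi'>0$ is meromorphic on $\Sigma$ with poles only at branch points. Regularity at a branch point $a$ is then checked in a local coordinate $\zeta$ with $x-a=\zeta^2$: the linear loop equation forces each $\omega_{g',n'}(\cdot,\dots)$ appearing in $\mathcal Q_{g,n+1}$ to have an even Laurent expansion in $\zeta$ in the distinguished argument, and one verifies — by combining these expansions and using $\dd x(z)^2\sim 4\zeta^2\dd\zeta^2$ — that the most singular contributions assemble and cancel, leaving a function of $x(z)$ regular at $a$.

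The main obstacle is precisely this branch-point bookkeeping: carrying the pole structure of the $\omega_{g',n'}$ through the induction and verifying the order-by-order cancellation in $\zeta$, since the individual factors in $\mathcal Q_{g,n+1}$ can have genuine poles at branch points and only their specific combination is regular after division by $\dd x(z)^2$. This is the same mechanism responsible for the residue formula producing forms with poles only at branch points, and organizing the joint induction so that both loop equations are available for every pair of factors in $\mathcal Q_{g,n+1}$ is what makes the argument close.
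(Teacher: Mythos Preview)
The paper does not supply a proof of this theorem; it is quoted from \cite{EO07} as a known result. So there is no in-paper argument to compare against, and your proposal should be judged on its own.

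Your treatment of the linear loop equation is complete and correct: the antisymmetry $K(z,z')+K(-z,z')=0$ follows from the $(0,1)$ base case exactly as you say, and this immediately gives $\omega_{g,n+1}(z,\dots)+\omega_{g,n+1}(-z,\dots)=0$ for all stable $(g,n)$. Likewise, $\sigma$-invariance of $\mathcal Q_{g,n+1}(z;\dots)/\dd x(z)^2$ is immediate and yields rationality in $x(z)$.

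The gap is the regularity of $\mathcal Q_{g,n+1}/\dd x^2$ at a branch point $a$. Your parity argument only shows that, in a local coordinate $\zeta$ with $x-x(a)=\zeta^2$, one has $\mathcal Q_{g,n+1}=g(\zeta)\,\dd\zeta^2$ with $g$ even in $\zeta$. Regularity after dividing by $\dd x^2\sim 4\zeta^2\dd\zeta^2$ requires $g=O(\zeta^2)$, i.e.\ that \emph{all} pole contributions of the individual factors cancel. Since each $\omega_{g',n'}$ with $2g'-2+n'>0$ carries a genuine pole at $a$ whose order grows with $g',n'$, this cancellation is the entire content of the quadratic statement and does not follow from evenness. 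Your sentence ``one verifies \dots\ that the most singular contributions assemble and cancel'' is an assertion, not a proof.

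The missing computation uses the recursion formula for $\omega_{g,n+1}$ itself. With the linear loop equation just established, separate the $(0,1)$ terms: $\mathcal Q_{g,n+1}=E_{g,n+1}-2\,\omega_{0,1}\,\omega_{g,n+1}$, where $E_{g,n+1}$ is the same bracket with the two $\omega_{0,1}$ factors removed. Near $a$, the residue defining $\omega_{g,n+1}$ can be read as saying that $-2\,\omega_{0,1}(z)\,\omega_{g,n+1}(z,\dots)$ reproduces exactly the principal part of $E_{g,n+1}(z;\dots)$ at $a$, up to terms regular at $a$ coming from the residues at the other branch points. Subtracting then leaves $\mathcal Q_{g,n+1}$ vanishing to order at least two at $a$, hence $\mathcal Q_{g,n+1}/\dd x^2$ regular there. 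Carrying this out explicitly is what closes the induction; your outline names the mechanism but does not execute it.
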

We will make use of an immediate consequence of the loop equations:
\begin{corollary}
For all $g, n \geq 0$,
\bea\label{loopeq}
P_{g,n}(x(z);z_1,\ldots,z_n) & \coloneqq &\frac{-1}{\dd x(z)^2}\Bigg(\omega_{g-1,n+2}(z,-z,z_1,\ldots,z_n) \\ 
 & + &\sum_{\substack{g_1+g_2=g, \\ I_1\sqcup I_2 = \{z_1,\ldots,z_n\}}} \omega_{g_1,1+|I_1|}(z,{I_1})\omega_{g_2,1+|I_2|}(-z,{I_2})\Bigg) \nonumber\\ 
&  +  & \sum_{i=1}^n \dd_i \left( \frac{1}{x(z)-x(z_i)}\frac{\omega_{g,n}(z_1,\ldots,-z_i,\ldots,z_n)}{\dd x(z_i)}\right) \nonumber
\eea
is a rational function of $x(z)$ that
has no poles at the branch-points and no poles when $x(z)=x(z_i)$.
\end{corollary}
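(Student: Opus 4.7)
The plan is to start from Theorem \ref{loop} and match principal parts. Write $W_{g,n}(z;z_1,\ldots,z_n)$ for the expression appearing in \eqref{quadraticloop}, so that
\begin{equation*}
P_{g,n}(x(z);z_1,\ldots,z_n) = -W_{g,n}(z;z_1,\ldots,z_n) + \sum_{i=1}^n \dd_i\!\left( \frac{1}{x(z)-x(z_i)}\frac{\omega_{g,n}(z_1,\ldots,-z_i,\ldots,z_n)}{\dd x(z_i)}\right).
\end{equation*}
First, $W_{g,n}$ is fixed by the involution $z\leftrightarrow -z$: the summand $\omega_{g-1,n+2}(z,-z,z_{[n]})$ is symmetric in its first two arguments, and the quadratic sum is invariant under the swap $(g_1,I_1)\leftrightarrow (g_2,I_2)$. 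Hence $W_{g,n}$ descends to a rational function of $x(z)$, and by the quadratic loop equation it has no poles at the branch points. The correction sum is manifestly rational in $x(z)$ with poles only at the points $x(z)=x(z_i)$ and no poles at branch points. So $P_{g,n}$ is rational in $x(z)$ with no poles at the branch points, and it only remains to verify regularity at each locus $x(z)=x(z_i)$.

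The only contributions to a pole of $W_{g,n}$ at $x(z)=x(z_i)$ come from the terms of the quadratic sum containing a factor $\omega_{0,2}(\pm z,z_i)$, since every other factor $\omega_{g',n'}$ with $2g'-2+n'>0$ has poles only at branch points. Near $z=z_i$ the only offender is the summand with $(g_1,I_1)=(0,\{z_i\})$, namely
\begin{equation*}
\frac{\omega_{0,2}(z,z_i)\,\omega_{g,n}(-z,z_{[n]\setminus\{i\}})}{\dd x(z)^2},
\end{equation*}
while near $z=-z_i$ the symmetric summand with $(g_2,I_2)=(0,\{z_i\})$ contributes and, by the $z\leftrightarrow-z$ symmetry established above, produces the same principal part in $x(z)-x(z_i)$.

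The key computation is the expansion of this principal part. Using the standard same-sheet identity
\begin{equation*}
\frac{\omega_{0,2}(z,z_i)}{\dd x(z)\,\dd x(z_i)} = \frac{1}{(x(z)-x(z_i))^2} + \text{holomorphic in $x(z)$ at $x(z)=x(z_i)$},
\end{equation*}
together with the Taylor expansion in $z$ of the scalar $\omega_{g,n}(-z,z_{[n]\setminus\{i\}})/\dd x(z)$ around $z=z_i$, one reads off the double- and simple-pole coefficients of $W_{g,n}$ at $x(z)=x(z_i)$. On the other side, Leibniz gives
\begin{equation*}
\dd_i\!\left(\frac{1}{x(z)-x(z_i)}\frac{\omega_{g,n}(z_1,\ldots,-z_i,\ldots,z_n)}{\dd x(z_i)}\right) = \frac{\omega_{g,n}(z_1,\ldots,-z_i,\ldots,z_n)}{(x(z)-x(z_i))^2} + \frac{1}{x(z)-x(z_i)}\,\dd_i\!\left(\frac{\omega_{g,n}(z_1,\ldots,-z_i,\ldots,z_n)}{\dd x(z_i)}\right),
\end{equation*}
and one verifies that these principal parts agree, so the contributions to $-W_{g,n}$ and to the $i$-th correction summand cancel.

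The main obstacle is this last matching of principal parts. One must carefully track the interplay between the sheet involution implicit in evaluating $\omega_{g,n}$ at $-z_i$, the normalization by $\dd x(z_i)$, and the expansion of the scalar factor beyond leading order so that both the double- and simple-pole coefficients are accounted for, not merely the leading one. Once each such cancellation is verified for every $i=1,\ldots,n$, one concludes that $P_{g,n}$ is rational in $x(z)$, with no poles at the branch points and no poles at $x(z)=x(z_i)$, as claimed.
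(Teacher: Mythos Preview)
Your approach is correct and follows essentially the same logic as the paper: establish invariance under $z\leftrightarrow -z$ to get rationality in $x(z)$, invoke the quadratic loop equation for regularity at branch points, and then cancel the principal part at each $x(z)=x(z_i)$ against the correction term. The only notable difference is tactical: the paper extracts the singular part of $B(z,z_i)$ via the linear loop equation identity $B(z,z_i)+B(-z,z_i)=\dfrac{\dd x(z)\,\dd x(z_i)}{(x(z)-x(z_i))^2}$ (noting $B(-z,z_i)$ is regular at $z=z_i$) and then uses an add--subtract trick inside $\dd_i$ to avoid explicitly matching the simple-pole coefficient, whereas you use the local diagonal expansion of $B$ together with a Taylor/Leibniz computation. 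Both routes yield the same cancellation; the paper's is slightly slicker in that it bypasses the explicit simple-pole bookkeeping you flag as ``the main obstacle'', but your direct matching is equally valid once carried out.
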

\begin{proof}
First, the expression \eqref{loopeq} is invariant under $z\to -z$ and meromorphic on $\Sigma$, hence a meromorphic function of $x(z)\in \mathbb C\mathrm P^1 $, i.e.~a rational function of $x(z)$.
From the loop equations, it has no pole at branchpoints.
Let us study the behavior at $z=z_i$.
The only term in~\eqref{quadraticloop} that contains a pole at $z=z_i$ is
$$
\frac{1}{\dd x(z)^2} B(z,z_i)\,\omega_{g,n}(-z,z_1,\ldots,\hat{z_i},\ldots,z_n).
$$
Remark that $B(-z,z_i)$ has no pole at $z=z_i$ and 
$$B(z,z_i)+B(-z,z_i) = \frac{\dd x(z) \dd x(z_i)}{(x(z)-x(z_i))^2} = \dd_i \left( \frac{\dd x(z)}{x(z)-x(z_i)}\right).
$$
Therefore we add a term without any poles to the previous one and consider the term with a pole at $z=z_i$ to be
$$
\frac{1}{\dd x(z)^2}\frac{\dd x(z) \dd x(z_i)}{(x(z)-x(z_i))^2}\,\omega_{g,n}(-z,z_1,\ldots,\hat{z_i},\ldots,z_n).
$$
We can write it as 
\begin{multline}
\dd_i\bigg(\left(\frac{1}{x(z)-x(z_i)}\right) \bigg(\frac{\omega_{g,n}(-z,z_1,\ldots,\hat{z_i},\ldots,z_n)}{\dd x(z)}\\ -\frac{\omega_{g,n}(-z_i,z_1,\ldots,\hat{z_i},\ldots,z_n)}{\dd x(z_i)}+\frac{\omega_{g,n}(-z_i,z_1,\ldots,\hat{z_i},\ldots,z_n)}{\dd x(z_i)}\bigg)\bigg).
\end{multline}
The sum of the first two terms does not have a pole at $z=z_i$. Therefore subtracting the last term for all $i=1,\ldots,n$, we obtain an expression with no poles at $z=z_i$.
Since this expression is an even function of $z$, there is no pole at $z=-z_i$ either.
\end{proof}

Recall that in general $\omega_{0,2}=B$ can have poles only at coinciding points and the $\omega_{g,n}$'s with $2g-2+n>0$ can have poles only at ramification points. Therefore, from the corollary we see that $P_{g,n}(x(z);z_1,\ldots,z_n)$ as a function of $z$ can only have poles at the poles of $\omega_{0,1}=y\dd x$.

\subsection{Variational formulas of the topological recursion} We recall one of the most important properties of the topological recursion: we know how the output data behaves when we vary the family of spectral curves to which we apply the topological recursion with respect to some parameters. This result already appeared in the original paper where topological recursion was introduced \cite[Theorem~5.1]{EO07} and was revisited in terms of generalised cycles in \cite[Theorem~3.2]{GIS}. A detailed proof of it with applications in the context of combinatorics can be found in \cite[Section~3.2]{BCCG22}.

\begin{theorem}\label{variational}
Let $(\Sigma,x_t,y_t\dd x_t,B_{\Sigma})$ be a holomorphic family of spectral curves, where $B_{\Sigma}$ is the normalised bidifferential of the second kind on $\Sigma$, depending on a parameter $t\in\mathbb{C}$. Let $\omega_{g,n}^t$ be the topological recursion differentials associated to that family of spectral curves. For the base topologies, we have $\omega_{0,1}^t=y_t\dd x_t$ and $\omega_{0,2}^t=B_{\Sigma}$. Assume there exists a generalised cycle $\gamma$ on $\Sigma$ whose support does not contain the zeroes of $\dd x$ and such that
$$
\partial_t\, \omega_{0,1}^t(z)=\int_{\gamma}B(z,\cdot),
$$
where the derivatives are taken with $x$ fixed. Then,
\begin{equation}\label{variational-eq}
\partial_t\, \omega_{g,n}^t(z_1,\ldots,z_n)=\int_{\gamma}\omega_{g,n+1}^t(z_1,\ldots,z_n,\cdot),
\end{equation}
where the derivatives are taken at $x_i=x(z_i)$ fixed.
\end{theorem}

We remark that the equation \eqref{variational-eq} for $(0,2)$ follows from the formula for $\omega_{0,3}$ in terms of $B_{\Sigma}$ \cite[Theorem~4.1]{EO07} and the Rauch variational formula (see \cite[Lemma~3.13]{BCCG22} and references therein).

%We remark that to obtain \eqref{variational-eq} in the more general case in which one does not consider $\omega_{0,2}^t$ to be the normalised bidifferential of the second kind, one

\subsection{Relation to time deformation for elliptic curves} Now we restrict ourselves to curves described by polynomials of the form
$$
y^2=x^3+tx+V.
$$ 
In this case, $\omega_{0,1}=y\dd x$ can only have poles at $x(z)=\infty$.

The topological recursion amplitudes for $2g-2+n\geq 0$ are analytic away from branchpoints; in particular they are analytic at $\infty$, with the following behavior near $z=\infty$:
$$
\omega_{g,n}(z,z_1,\ldots,z_n)=O(z^{-2}).
$$

In our case, this implies the following behavior at $x(z)=\infty$: 
\beq 
\frac{\omega_{g,n+1}(z,z_1,\ldots,z_n)}{\dd x(z)} = O(x(z)^{\frac{-3}{2}}), \text{ for } 2g-2+n\geq 0.
\eeq

Since the only pole can come from terms that contain $\omega_{0,1}=y\dd x$, $P_{g,n}$ has the following behavior at $x(z)\rightarrow\infty$:
$$
P_{g,n}(x(z);z_1,\ldots,z_n)= 2\frac{y(z) \dd x(z)\, \omega_{g,n+1}(z,z_1,\ldots,z_n)}{\dd x(z)^2} + O(x(z)^{-1}),
$$
i.e.
\begin{equation}
P_{g,n}(x(z);z_1,\ldots,z_n)=2y(z) O(x(z)^{\frac{-3}{2}}) + O(x(z)^{-1})= O(x(z)^0), \text{ for } 2g-2+n\geq 0,
\end{equation}
where the last behavior comes from the fact that in the elliptic curve case, we have $y\sim x^{\frac32}$.

We have seen that $P_{g,n}$ is a polynomial of degree $0$, that is independent of $z$, and can be written:
$$
P_{g,n}(x(z);z_1,\ldots,z_n)= 2\lim_{z\to\infty} x(z)^{\frac32} \  \frac{\omega_{g,n+1}(z,z_1,\ldots,z_n)}{\dd x(z)}.
$$

\begin{corollary}
For $(g,n)\neq (0,0),(0,1)$:
\beq\label{P}
P_{g,n}(x(z);z_1,\ldots,z_n) = -\oint_{\mathcal{B}_{\infty,1}}\omega_{g,n+1}(z,z_1,\ldots,z_n)=\frac{\partial}{\partial t}\omega_{g,n}(z_1,\ldots,z_n),
\eeq
with $\mathcal{B}_{\infty,1}$ the second kind cycle given by $\frac{1}{2\pi i}\mathcal{C}_{\infty} \sqrt{x(z)}$, where $\mathcal{C}_{\infty}$ denotes a small contour around~$\infty$.

Moreover
\beq\label{P0}
P_{0,0}(x(z)) = y(z)^2 = x^3+tx+V = x^3+tx+\frac{\partial}{\partial t} \omega_{0,0},
\eeq
\beq\label{P1}
P_{0,1}(x(z);z_1)= 2\frac{y(z)}{\dd x(z)}B(z,z_1) - \dd_1\left(\frac{y(z)+y(z_1)}{x(z)-x(z_1)} \right)= \frac{\partial}{\partial t} \omega_{0,1}(z_1).
\eeq
\end{corollary}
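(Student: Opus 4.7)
The plan is to split into the three cases $(g,n)=(0,0)$, $(g,n)=(0,1)$, and those with $2g-2+n\geq 0$, establishing in each case the explicit form of $P_{g,n}$ from the definition \eqref{loopeq}, and then identifying the result with $\partial_t\omega_{g,n}$ via the Eynard--Orantin variational formula
\[
\partial_{t_{\infty,1}}\omega_{g,n}(z_1,\ldots,z_n) = \oint_{\mathcal B_{\infty,1}}\omega_{g,n+1}(\cdot,z_1,\ldots,z_n)
\]
(cf.~\cite{EO07,GIS}). Combined with $t_{\infty,1}=-t$ established in Section~\ref{section2}, this yields $\partial_t\omega_{g,n} = -\oint_{\mathcal B_{\infty,1}}\omega_{g,n+1}$, which is the second equality in \eqref{P}.

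For $(g,n)=(0,0)$ the only nontrivial term in the definition of $P_{0,0}$ is $-\omega_{0,1}(z)\omega_{0,1}(-z)/\dd x(z)^2$, which by the involution $y(-z)=-y(z)$ equals $y(z)^2 = x^3+tx+V$; equation \eqref{prepotentialV} then replaces $V$ by $\partial_t\omega_{0,0}$, giving \eqref{P0}. For $(g,n)=(0,1)$ I would substitute the definition, use $\omega_{0,1}(-z)=-y(z)\dd x(z)$, and replace $B(-z,z_1)$ via the linear loop equation by $\dd x(z)\dd x(z_1)/(x(z)-x(z_1))^2-B(z,z_1)$. Rewriting the rational piece that results as $\dd_1(y(z)/(x(z)-x(z_1)))$ and combining with the derivative term already appearing in \eqref{loopeq} produces the closed form \eqref{P1}.

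For the generic case $2g-2+n\geq 0$, the discussion immediately preceding the corollary already shows $P_{g,n}$ is constant in $x(z)$ and equal to $2\lim_{z\to\infty} x(z)^{3/2}\,\omega_{g,n+1}(z,\vec z)/\dd x(z)$. Writing $\omega_{g,n+1}/\dd x = a\,x^{-3/2}+O(x^{-2})$ near $z=\infty$ with $a$ depending only on $\vec z$, and passing to the local coordinate $\xi_\infty = x^{-1/2}$, a direct residue computation gives $\oint_{\mathcal B_{\infty,1}}\omega_{g,n+1} = \Res_\infty \xi_\infty^{-1}\omega_{g,n+1} = -2a$. Hence $P_{g,n} = 2a = -\oint_{\mathcal B_{\infty,1}}\omega_{g,n+1}$, and the variational formula supplies the identification with $\partial_t\omega_{g,n}$. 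For $(0,1)$ the same residue reasoning applies once one verifies that $P_{0,1}$, which by the preceding corollary has no poles at branchpoints or at $x(z)=x(z_1)$, is also finite at $\infty$: the term $2y(z)B(z,z_1)/\dd x(z)$ is $O(1)$ there because $B(z,z_1)/\dd x(z)=O(x^{-3/2})$, while the only divergent piece of $(y(z)+y(z_1))/(x(z)-x(z_1))$ as $z\to\infty$ is $z_1$-independent and therefore killed by $\dd_1$.

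The main obstacle I expect is the sign and normalization bookkeeping in the residue-to-cycle computation at $\infty$: one must reconcile the orientation of $\mathcal B_{\infty,1}$ with the sign of $\Res_\infty$ in the local coordinate $\xi_\infty$ and with the relation $t_{\infty,1}=-t$, so that the combined signs produce precisely $-\oint_{\mathcal B_{\infty,1}}$ rather than its negative, and the leading constant from $2y(z)\omega_{g,n+1}(z,\vec z)/\dd x(z)^2$ comes out to match $2a$. A secondary subtlety is the separate verification of constancy in $z$ of $P_{0,1}$, which lies outside the $2g-2+n\geq 0$ analysis in the text.
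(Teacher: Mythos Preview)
Your proposal is correct and follows essentially the same approach as the paper: both treat $(0,0)$ and $(0,1)$ by direct substitution into \eqref{loopeq} (using the linear loop equation to rewrite $B(-z,z_1)$ and checking $P_{0,1}$ is bounded at $\infty$), and for the remaining cases both exploit the already-established constancy of $P_{g,n}$ in $x(z)$ to evaluate it as a residue at $\infty$, then invoke the variational formula $\partial_{t_{\infty,1}}\omega_{g,n}=\oint_{\mathcal B_{\infty,1}}\omega_{g,n+1}$ from \cite{EO07,GIS} together with $t=-t_{\infty,1}$. Your explicit local expansion $\omega_{g,n+1}/\dd x = a\,x^{-3/2}+O(x^{-2})$ and your observation that $\dd_1$ kills the $z_1$-independent divergent piece of $(y(z)+y(z_1))/(x(z)-x(z_1))$ are slightly more detailed than the paper's presentation, but the logic is identical.
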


\begin{proof} For $(g,n)=(0,0)$, we obtain $P_{0,0}(x(z)) = y(z)^2$ from $\eqref{loopeq}$, which together with $V=\frac{\partial}{\partial t}\omega_{0,0}$ from \eqref{prepotentialV} gives the expression for this case.

The expression for $(g,n)=(0,1)$ is a direct computation using \eqref{loopeq}:
$$
P_{0,1}(x(z);z_1)=\frac{y(z)}{\dd x(z)}(B(z,z_1)-B(-z,z_1))-\dd_1\left(\frac{y(z_1)}{x(z)-x(z_1)}\right).
$$
In order to get the second equality in \eqref{P1}, observe from the first equality that also $P_{0,1}$ has the following behavior at $x(z)\rightarrow\infty$:
$$
P_{0,1}(x(z);z_1)= 2\frac{y(z) \, B(z,z_1)}{\dd x(z)} + O(x(z)^{-1})=2y(z) O(x(z)^{\frac{-3}{2}}) + O(x(z)^{-1})= O(x(z)^0).
$$

For $(g,n)\neq (0,0)$, since $P_{g,n}$ is constant with respect to $z$, we can write
\bea
P(x(z);z_1,\ldots,z_n) &=& 2 \lim_{x(z)\to\infty} x(z)^{\frac32} \ \frac{\omega_{g,n+1}(z,z_1,\ldots,z_n)}{\dd x(z)} \nonumber \\
&=& -\underset{z\rightarrow 0}{{\mathrm Res}}\;\; \sqrt{x(z)}\; \omega_{g,n+1}(z,z_1,\ldots,z_n)=\nonumber \\ 
&=&-\oint_{\mathcal{B}_{\infty,1}}\omega_{g,n+1}(z,z_1,\ldots,z_n).
\eea
Moreover, since $t=-t_{\infty,1}$, we have $\frac{\partial}{\partial t}\omega_{0,1}=-\frac{\partial}{\partial t_{\infty,1}}\omega_{0,1}=-
\int_{\mathcal B_{\infty,1}} \omega_{0,2}$, and since $B=\omega_{0,2}$ is the Bergman kernel normalized on the $\mathcal A$-cycles, %we also have $\frac{\partial}{\partial t_{\infty,1}}\omega_{0,2}= \int_{\mathcal B_{\infty,1}} \omega_{0,3}$.
%Knowing that,  it is proved in \cite{EO07,GIS} that
Theorem~\ref{variational} implies for any topology that 
\beq
\oint_{\mathcal{B}_{\infty,1}}\omega_{g,n+1}(z,z_1,\ldots,z_n)=\frac{\partial}{\partial t_{\infty,1}}\omega_{g,n}(z_1,\ldots,z_n)=-\frac{\partial}{\partial t}\omega_{g,n}(z_1,\ldots,z_n).\nonumber
\eeq
\end{proof}

\subsection{Generalization to any plane curve with a global involution} Our goal is to find the relation between $P_{g,n}$ from loop equations and an operator depending on time deformations acting on the topological recursion amplitudes. In this section, we generalize the relation that we have just found in the Painlev\'{e}~I case to all plane curves with a global involution.
 
Consider algebraic curves of the form
\beq
y^2=R(x),
\eeq
with $R(x)\in \mathbb C(x)$ an arbitrary rational function. 

This is parametrized by a pair of meromorphic functions $x,y$ on a Riemann surface $\Sigma$.
The ramified covering given by $x: \Sigma \to \mathbb{C}\mathrm{P}^1$ is a double cover.
Depending on the parity of the behavior of $y$ at $x\to \infty$, $x$ has either one double pole (order $d=-2$) or 2 simple poles (order $d=-1$).
Let us denote $\sigma$ the global involution which sends $(x,y)\mapsto (x,-y)$.

Let $\{\lambda_l\}_{l=1}^{N}$ be the set of zeroes of the denominator of $R(x)$. The $1$-form $\omega_{0,1}=y\dd x$ can have a pole over $x=\infty$ and poles over $\{\lambda_l\}_{l=1}^{N}$. We call $\zeta_i$ the poles of $\omega_{0,1}$ of respective degrees given by $m_i$.

Let us define $d_i\coloneqq \mathrm{ord}_{\zeta_i}(x)$. If $\zeta_i$ is not a pole of $x$, we assume that it is not a ramification point, hence  $d_i=1$. If $\zeta_i$ is a pole of $x$, then $d_i$ can be either $-2$ or $-1$ as we commented, depending on $\zeta_i$ being a ramification point or not. Notice that if $\zeta_i$ is a pole, so is $\sigma(\zeta_i)$, and thus poles come in pairs. Moreover, we can only have $\sigma(\zeta_i)=\zeta_i$, if $\zeta_i$ is a double pole of $x$, which corresponds to the case $d_i=-2$.
Near $\zeta_i$ we use the local variable
\beq
\xi_i = (x-x(\zeta_i))^{\frac{1}{d_i}},
\eeq
where we define $x(\zeta_i)=0$, if $\zeta_i$ is a pole of $x$.

We write the Laurent expansion:
\beq\label{KPexpansionydx}
y\dd x \sim \sum_{j=0}^{m_i} t_{\zeta_i,j} \ \xi_i^{-1-j} \dd\xi_i + \text{analytic at } \zeta_i.
\eeq
This defines the local KP times \cite{GIS}
\beq
t_{\zeta_i,j} = \Res_{\zeta_i} \xi_i^j y\dd x
= \frac{1}{2\pi i} \oint_{{\mathcal A}_{\zeta_i,j}} y\dd x.
\eeq
Notice that for poles for which $\sigma(\zeta_i)\neq \zeta_i$, we have
\beq
t_{\sigma(\zeta_i),j}=-t_{\zeta_i,j}.
\eeq

Again, loop equations imply that the $P_{g,n}(x;z_1,\dots,z_n)$ defined in \eqref{loopeq} has no pole at coinciding points or at branchpoints.
It must be a rational function of $x$, whose poles can be only at the poles of $y\dd x$, i.e.~at the poles of $R(x)$ and possibly at $x=\infty$.

For every second type cycle $\mathcal{B}_{p,k}$, $k\geq 1$, we define the operator $\d_{\mathcal{B}_{p,k}}$ as
$$
\d_{\mathcal{B}_{p,k}}\omega_{g,n}(z_1,\ldots,z_n)\coloneqq \int_{\mathcal{B}_{p,k}}\omega_{g,n+1}(\cdot,z_1,\ldots,z_n) =\Res_{x\to x(p)} \frac{\xi_p^{-k}}{k}\omega_{g,n+1}(\cdot,z_1,\ldots,z_n).
$$
\begin{proposition}
Let $\zeta_i$ be the poles of $\omega_{0,1}$ of respective degrees given by $m_i$. The operator
\bea
L(x) &\coloneqq &
 \sum_{i, x(\zeta_i)=\infty}  \sum_{j=1-2d_i}^{m_i} t_{\zeta_i,j} \sum_{0\leq k \leq \frac{1-j}{d_i}-2} x^k  \Big(-\frac{j}{d_i}-k-2\Big) \partial_{\mathcal{B}_{\zeta_i,j+d_i(k+2)}}   \label{operatorL}\\
&& +  \sum_{i, x(\zeta_i)\neq \infty} \sum_{j=0}^{m_i} t_{\zeta_i,j} \sum_{k=0}^{j} (x-x(\zeta_i))^{-(k+1)} (j+1-k) \partial_{\mathcal{B}_{\zeta_i,j+1-k}} \nonumber
\eea
gives
\beq\label{operatorLeq}
P_{g,n}(x;z_1,\dots,z_n) = L(x).\omega_{g,n}(z_1,\dots,z_n).
\vspace{0.1cm}
\eeq

\end{proposition}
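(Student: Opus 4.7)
The strategy is to reconstruct $P_{g,n}(x; z_1, \dots, z_n)$ from its poles. The preceding corollary asserts that $P_{g,n}$ is a rational function of $x$, regular at branchpoints and at the points over $x = x(z_j)$; therefore its only possible poles lie at the values taken by $x$ on the poles $\{\zeta_i\}$ of $y\,\dd x$. Since a rational function of $x$ is uniquely determined by its polynomial part at $x = \infty$ together with its principal parts at each finite pole, the claim reduces to showing that the first sum in \eqref{operatorL} reproduces the polynomial part at $\infty$ (coming from the $\zeta_i$ lying over $x = \infty$) and the second sum reproduces the principal parts at each finite $x(\zeta_i) \neq \infty$.

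To compute the polar behaviour at a given $\zeta_i$, I localize \eqref{loopeq} as $z \to \zeta_i$, assuming $\zeta_i$ is not a branchpoint and differs from every $z_j$. Every term on the RHS other than the two unstable splittings containing $\omega_{0,1}$ is holomorphic at $\zeta_i$ (the $\omega_{0,2}$-unstable splittings have poles on the diagonal $z = z_j$, not at $\zeta_i$, and the boundary $\dd_i$-terms are manifestly regular). Combining those two $\omega_{0,1}$-splittings via $\omega_{0,1}(-z) = -\omega_{0,1}(z)$ and the linear loop equation $\omega_{g,n+1}(-z, \cdot) = -\omega_{g,n+1}(z, \cdot)$ reduces the singular part of the RHS to
\[
\frac{2\,y(z)}{\dd x(z)}\,\omega_{g,n+1}(z, z_1, \dots, z_n).
\]
Both factors are then Laurent expanded in the local parameter $\xi_i$. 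By \eqref{KPexpansionydx}, $y\,\dd x = \sum_{j=0}^{m_i} t_{\zeta_i, j}\,\xi_i^{-1-j}\,\dd\xi_i$ modulo analytic terms, and since $\omega_{g,n+1}$ is holomorphic at $\zeta_i$ for $2g - 2 + n + 1 > 0$, the residue definition of $\partial_{\mathcal{B}_{p, k}}$ gives
\[
\omega_{g,n+1}(z, z_1, \dots, z_n) = \sum_{k \geq 0}(k + 1)\,\big(\partial_{\mathcal{B}_{\zeta_i, k+1}}\omega_{g,n}(z_1, \dots, z_n)\big)\,\xi_i^k\,\dd\xi_i.
\]

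Extracting the polar or polynomial parts of the product of these two Laurent series then separates into the two cases of \eqref{operatorL}. For a finite pole ($d_i = 1$, $\xi_i = x - x(\zeta_i)$), the coefficient of $(x - x(\zeta_i))^{-(k+1)}$, $k \geq 0$, in the product is exactly $2$ times the corresponding summand of the second sum in \eqref{operatorL}. For an infinite pole ($\xi_i = x^{1/d_i}$ with $d_i \in \{-1, -2\}$, $\dd x = d_i \xi_i^{d_i - 1}\,\dd\xi_i$), the polynomial part in $x$ picks out $\xi_i$-powers of the form $\xi_i^{d_i(k+2)}$ with $k \geq 0$, which translates to the range $0 \leq k \leq (1 - j)/d_i - 2$ (and correspondingly $j \geq 1 - 2 d_i$); combining the prefactor $2/d_i^2$ arising from $(\dd x)^{-2}$ with the factor $(k+1)$ from the $\partial_{\mathcal{B}}$-expansion reproduces the coefficient $(-j/d_i - k - 2)$ and the index $j + d_i(k+2)$ appearing in the first sum of \eqref{operatorL}, again up to an overall factor of $2$.

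The remaining factor of $2$ is supplied by summing over the $\sigma$-orbit of $\zeta_i$, which the indexing in $L(x)$ does. When $\sigma(\zeta_i) \neq \zeta_i$, both $\zeta_i$ and $\sigma(\zeta_i)$ appear separately in the defining sum and project to the same value of $x$; they contribute identically because $t_{\sigma(\zeta_i), j} = -t_{\zeta_i, j}$ while $\partial_{\mathcal{B}_{\sigma(\zeta_i), k}}\omega_{g,n} = -\partial_{\mathcal{B}_{\zeta_i, k}}\omega_{g,n}$ (the latter a consequence of $\omega_{g,n+1}(\sigma(z), \cdot) = -\omega_{g,n+1}(z, \cdot)$), so the two minus signs cancel. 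When $\sigma(\zeta_i) = \zeta_i$ (only possible for $d_i = -2$), parity under $\sigma^*\xi_i = -\xi_i$ forces the nonzero $t_{\zeta_i, j}$ to have odd $j$ and the nonzero $\xi_i^k$-coefficient of $\omega_{g,n+1}$ to have even $k$; the would-be half-integer powers of $x$ in the raw product cancel automatically, as they must for $P_{g,n}$ to be rational, and no $\sigma$-orbit doubling is needed (consistent with the fact that my single-point calculation already matches \eqref{operatorL} exactly when $d_i = -2$). The main obstacle in executing this plan is purely combinatorial: carefully tracking the index shifts and the $\sigma$-symmetry to land on exactly the coefficients and summation ranges written in \eqref{operatorL}; no new conceptual input beyond the loop equations and the residue definition of $\partial_{\mathcal{B}}$ is required.
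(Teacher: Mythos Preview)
Your approach is essentially the same as the paper's. Both proofs rest on the observation that $P_{g,n}$ is rational in $x$ with poles only over the $x(\zeta_i)$, that near each $\zeta_i$ the singular behaviour is governed by $\tfrac{2y(z)}{\dd x(z)}\,\omega_{g,n+1}(z,\cdot)$, and that expanding this product via the KP Laurent series for $y\,\dd x$ and the residue Taylor series for $\omega_{g,n+1}$ produces the $\partial_{\mathcal{B}}$-operators. The paper organises this extraction through the Cauchy residue formula in the $x$-variable: lifting the Cauchy contour to the double cover introduces a factor $\tfrac12$ which then cancels the $2$ in $\tfrac{2y}{\dd x}\omega_{g,n+1}$, so the final answer is directly a sum over \emph{all} poles $\zeta_i$ with no further bookkeeping. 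Your direct principal-part extraction reaches the same endpoint but must track the $\sigma$-orbit doubling by hand; your treatment of the split between unramified poles (where the orbit has two elements supplying the missing $2$) and the ramified $d_i=-2$ case (where your single-point calculation already matches) is correct and is exactly what the Cauchy-formula packaging handles uniformly. One small wording slip: the phrase ``again up to an overall factor of $2$'' for the infinite-pole case is accurate only for $d_i=-1$; for $d_i=-2$ the factor is $1$, as you yourself note two sentences later.
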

\begin{proof}
Let us write the Cauchy formula for $x=x(z)$
\bea
P_{g,n}(x;z_1,\dots,z_n)
&=& \Res_{x'\to x} \frac{\dd x'}{x'-x} \ P_{g,n}(x';z_1,\dots,z_n) \nonumber \\
&=& \frac12 \Res_{z'\to z} \frac{\dd x(z')}{x(z')-x(z)} \ P_{g,n}(x(z');z_1,\dots,z_n) \nonumber \\
&& + \frac12 \Res_{z'\to \sigma(z)} \frac{\dd x(z')}{x(z')-x(z)} \ P_{g,n}(x(z');z_1,\dots,z_n) \nonumber \\
&=& \frac12\sum_i \Res_{z'\to \zeta_i} \frac{\dd x(z')}{x(z)-x(z')} \ P_{g,n}(x(z');z_1,\dots,z_n) \nonumber \\
&=& -\frac12 \sum_{i, x(\zeta_i)=\infty} \sum_{k\geq 0} x(z)^k \Res_{z'\to \zeta_i} x(z')^{-(k+1)}  \dd x(z')  \ P_{g,n}(x';z_1,\dots,z_n) \nonumber \\
&&+ \frac12 \sum_{i, x(\zeta_i)\neq \infty} \sum_{k\geq 0} \xi_i(z)^{-(k+1)} \Res_{z'\to \zeta_i} \xi_i(z')^{k}\dd x(z')    \ P_{g,n}(x';z_1,\dots,z_n) .
\eea
Since the behavior at the poles is given by the terms containing $\omega_{0,1}=y\dd x$ in \eqref{loopeq},  near any of the poles $\zeta_i$  we have
\beq
P_{g,n}(x(z);z_1,\dots,z_n)
\sim \frac{2y(z)}{\dd x(z)} \omega_{g,n+1}(z,z_1,\dots,z_n) + O(\xi_i^{-2(d_i+1)}) .
\eeq
First consider the poles over $x=\infty$, with $d_i=-1$ or $d_i=-2$, which contribute to $P_{g,n}$ as 
\bea
&& - \sum_{i, x(\zeta_i)=\infty} \sum_{k\geq 0} x(z)^k \Res_{z'\to \zeta_i} x(z')^{-(k+1)}  y(z')  \omega_{g,n+1}(z',z_1,\dots,z_n) \nonumber \\
&=& - \sum_{i, x(\zeta_i)=\infty} \sum_{j=0}^{m_i} t_{\zeta_i,j} \sum_{k\geq 0} x(z)^k \Res_{z'\to \zeta_i} \xi_i(z')^{-d_i(k+1)}  \xi_i(z')^{-j-1} \frac{1}{d_i\, \xi_i(z')^{d_i-1}} \omega_{g,n+1}(z',z_1,\dots,z_n) \nonumber \\
&=& - \sum_{i, x(\zeta_i)=\infty} \frac{1}{d_i} \sum_{j=0}^{m_i} t_{\zeta_i,j} \sum_{k\geq 0} x(z)^k \Res_{z'\to \zeta_i} \xi_i(z')^{-d_i(k+2)-j}  \omega_{g,n+1}(z',z_1,\dots,z_n) \nonumber \\\
&=&   -\sum_{i, x(\zeta_i)=\infty}  \sum_{j=0}^{m_i} t_{\zeta_i,j} \sum_{k\geq 0} x(z)^k  \Big(k+2+\frac{j}{d_i}\Big) \int_{\mathcal B_{\zeta_i,j+d_i(k+2)}}  \omega_{g,n+1}(z',z_1,\dots,z_n) \nonumber \\
&=& - \sum_{i, x(\zeta_i)=\infty}  \sum_{j=1-2d_i}^{m_i} t_{\zeta_i,j} \sum_{0\leq k\leq \frac{1-j}{d_i}-2} x(z)^k  \Big(k+2+\frac{j}{d_i}\Big)\partial_{\mathcal{B}_{\zeta_i,j+d_i(k+2)}}  \omega_{g,n}(z_1,\dots,z_n) .\nonumber
\eea
The finite poles contribute as
\bea
&& \frac12 \sum_{i, x(\zeta_i)\neq \infty} \sum_{k\geq 0} \xi_i(z)^{-(k+1)} \Res_{z'\to \zeta_i} \xi_i(z')^{k}dx(z')    \ P_{g,n}(x';z_1,\dots,z_n) \nonumber \\
&=&  \sum_{i, x(\zeta_i)\neq \infty} \sum_{k\geq 0} \xi_i(z)^{-(k+1)} \Res_{z'\to \zeta_i} \xi_i(z')^{k}   y(z')  \omega_{g,n+1}(z',z_1,\dots,z_n) \nonumber \\
&=&  \sum_{i, x(\zeta_i)\neq \infty} \sum_{j=0}^{m_i} t_{\zeta_i,j} \sum_{k\geq 0} \xi_i(z)^{-(k+1)} \Res_{z'\to \zeta_i} \xi_i(z')^{k-j-1}    \omega_{g,n+1}(z',z_1,\dots,z_n) \nonumber \\
&=&  \sum_{i, x(\zeta_i)\neq \infty} \sum_{j=0}^{m_i} t_{\zeta_i,j} \sum_{k=0}^{j} \xi_i(z)^{-(k+1)} (j+1-k) \int_{\mathcal B_{\zeta_i,j+1-k}}     \omega_{g,n+1}(z',z_1,\dots,z_n) \nonumber \\
&=&  \sum_{i, x(\zeta_i)\neq \infty} \sum_{j=0}^{m_i} t_{\zeta_i,j} \sum_{k=0}^{j} \xi_i(z)^{-(k+1)} (j+1-k) \partial_{\mathcal{B}_{\zeta_i,j+1-k}}    \omega_{g,n}(z_1,\dots,z_n) . \nonumber
\eea
\end{proof}

Let us now rewrite the operator $L(x)$ as a differential operator with respect to the moduli of the spectral curve, i.e.~the times appearing in the pole structure of $\omega_{0,1}$ and the poles $\Lambda=\{\lambda_l\}_{l=1}^N$ of~$R(x)$.

\begin{lemma}
Let $\zeta_{l}^{(1)}$ and $\zeta_{l}^{(2)}$ be the two preimages of $\lambda_l$ by $x$, which are poles of order $m_l$ of $\omega_{0,1}$. We assume that the filling fractions $\epsilon_i$ are independent of the poles $\lambda_l$, i.e.
$$
\frac{\d}{\d\lambda_l}\epsilon_i=\oint_{\mathcal{A}_i}\frac{\d}{\d\lambda_l}\omega_{0,1}=0.
$$ 
Then we can decompose the operator derivative with respect to $\lambda_l$ as
\beq\label{lambdaOp}
\frac{\d}{\d\lambda_l}\omega_{g,n}(z_1,\ldots,z_n)=\sum_{i=1,2}\sum_{j=0}^{m_l}(j+1)t_{\zeta_l^{(i)},j}\partial_{\mathcal{B}_{\zeta_l^{(i)},j+1}}\omega_{g,n}(z_1,\ldots,z_n),
\eeq
for all $g, n \geq 0$.
\end{lemma}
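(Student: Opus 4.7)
The plan is to reduce the statement to the Eynard--Orantin variational formula of \cite{EO07, GIS}: any deformation $\delta$ of the spectral curve that preserves the filling fractions and whose effect on $y\,\dd x$ takes the form $\delta(y\,\dd x)(z) = \int_\Omega B(\cdot,z)$ for some generalized cycle $\Omega$ induces
$$
\delta\omega_{g,n}(z_1,\dots,z_n) = \int_\Omega \omega_{g,n+1}(\cdot,z_1,\dots,z_n)
$$
on all topological recursion amplitudes. The goal is therefore to identify the cycle $\Omega_l$ that implements $\partial/\partial\lambda_l$ at fixed KP times and fixed filling fractions.

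First I would compute $\partial(y\,\dd x)/\partial\lambda_l$ at fixed $x$. Since $\lambda_l$ is a finite pole of $R$, each preimage $\zeta_l^{(i)}$ is not a ramification point of $x$ and the natural local coordinate is $\xi_i = x-\lambda_l$. Differentiating the Laurent expansion \eqref{KPexpansionydx} at fixed $x$ (so that $\partial\xi_i/\partial\lambda_l = -1$ while $\dd\xi_i = \dd x$ is unaffected) gives the principal part
$$
\sum_{j=0}^{m_l}(j+1)\,t_{\zeta_l^{(i)},j}\,\xi_i^{-j-2}\,\dd\xi_i
$$
at each $\zeta_l^{(i)}$. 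At the other poles $\zeta_{l'}^{(i')}$ with $l'\neq l$, the local coordinate $x-\lambda_{l'}$ is independent of $\lambda_l$ and the corresponding KP times are held fixed, so the principal part of $y\,\dd x$ is unchanged there and $\partial(y\,\dd x)/\partial\lambda_l$ has no pole at $\zeta_{l'}^{(i')}$. The $\mathcal A$-periods of this variation vanish by the standing hypothesis $\partial_{\lambda_l}\epsilon_i = 0$.

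Next I would introduce the candidate cycle
$$
\Omega_l \coloneqq \sum_{i=1,2}\sum_{j=0}^{m_l}(j+1)\,t_{\zeta_l^{(i)},j}\,\mathcal{B}_{\zeta_l^{(i)},j+1}
$$
and verify that $\int_{\Omega_l} B(\cdot,z) = \partial(y\,\dd x)/\partial\lambda_l$. A direct residue computation based on the local expansion $B(w,z) = \dd\xi_p(w)\dd\xi_p(z)/(\xi_p(w)-\xi_p(z))^2 + \text{holomorphic}$ near $w=z=p$ shows that for every $p$ and $k\geq 1$
$$
\int_{\mathcal{B}_{p,k}} B(\cdot,z) = \xi_p(z)^{-k-1}\,\dd\xi_p(z) + \text{holomorphic at }z=p,
$$
and that this 1-form has vanishing $\mathcal A$-periods in $z$ because $B$ is $\mathcal A$-normalized. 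Summing over the terms of $\Omega_l$ yields a meromorphic 1-form whose principal parts at $\zeta_l^{(1)},\zeta_l^{(2)}$ match those of $\partial(y\,\dd x)/\partial\lambda_l$, which has no other poles and zero $\mathcal A$-periods. Uniqueness of a meromorphic 1-form with prescribed principal parts and $\mathcal A$-periods then identifies the two.

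Applying the variational formula to $\Omega_l$ and rewriting the right-hand side through the definition of $\partial_{\mathcal{B}_{\zeta_l^{(i)},j+1}}$ delivers \eqref{lambdaOp}. The most delicate point, in my view, is the rigidity observation that ``holding the KP times at the $\zeta_{l'}^{(i')}$ with $l'\neq l$ fixed'' is precisely what kills any contribution of the variation at those poles: this relies on the fact that the local coordinates $x-\lambda_{l'}$ are themselves insensitive to a motion of $\lambda_l$, so that fixing times and fixing principal parts genuinely coincide away from $\lambda_l$.
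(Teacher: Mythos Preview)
Your proof is correct and follows essentially the same route as the paper: both arguments compute the principal part of $\partial_{\lambda_l}\omega_{0,1}$ at the two preimages, match it against the principal parts of the 1-forms $\int_{\mathcal{B}_{\zeta_l^{(i)},j+1}}B(\cdot,z)$, and invoke the uniqueness of a meromorphic 1-form with prescribed polar parts and vanishing $\mathcal A$-periods to obtain the identity for $\omega_{0,1}$, after which the extension to all $\omega_{g,n}$ is delegated to the variational formulas of \cite{EO07,GIS}. The only cosmetic difference is that the paper separates the base cases $\omega_{0,1}$ (direct) and $\omega_{0,2}$ (Rauch) before citing \cite{EO07,GIS}, whereas you package everything via a single generalized cycle $\Omega_l$ and one application of the variational formula.
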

\begin{proof}
Let $\zeta_{l}$ be one of the two preimages of $\lambda_l$ by $x$ and $\xi_l=x-\lambda_l$ be the local variable around it. Then, for $z\rightarrow\zeta_{l}$, we have
$$
\frac{\d}{\d\lambda_l}\omega_{0,1}(z)  \sim\sum_{j=0}^{m_l}(j+1)t_{\zeta_l,j} (\xi_l(z))^{-j-2}\dd \xi_l + \text{ analytic}.  \\
$$
On the other hand, we have
\begin{align*}
\partial_{\mathcal{B}_{\zeta_l,j}}\omega_{0,1}(z) & = \int_{\mathcal{B}_{\zeta_l,j}}\omega_{0,2}(\cdot,z) = \underset{z'=\lambda_l}{\mathrm{Res}}\, \frac{\omega_{0,2}(z',z)}{j(\xi_p(z'))^j} \\ 
&= -\underset{z'=z}{\mathrm{Res}}\, \frac{\omega_{0,2}(z',z)}{j(\xi_p(z'))^j} + \text{ analytic} = \frac{\dd \xi_l(z)}{(\xi_l(z))^{j+1}} + \text{ analytic}.
\end{align*}
The last equality follows from the fact that the fundamental differential of the second kind $\omega_{0,2}$ only has a double pole along the diagonal and the next-to-last equality follows from the decomposition
$$
\oint_{\gamma_{\lambda_{l}}} \frac{\omega_{0,2}(\cdot,z)}{j(\xi_p(\cdot))^j}=\oint_{\gamma_{z,\lambda_{l}}} \frac{\omega_{0,2}(\cdot,z)}{j(\xi_p(\cdot))^j}-\oint_{\gamma_z}\frac{\omega_{0,2}(\cdot,z)}{j(\xi_p(\cdot))^j}=\text{ analytic} -\oint_{\gamma_z}\frac{\omega_{0,2}(\cdot,z)}{j(\xi_p(\cdot))^j},
$$
where $\gamma_z$ and $\gamma_{\lambda_{l}}$ are small counterclocwise contours around $z$ and $\lambda_l$, and $\gamma_{z,\lambda_{l}}$ is a counterclocwise contour that surrounds the segment from $z$ to $\lambda_l$, which gives rise to an analytic term.

Therefore,
$$
\frac{\d}{\d\lambda_l}\omega_{0,1}(z)-\sum_{i=1,2}\sum_{j=0}^{m_l}(j+1)t_{\zeta_l^{(i)},j}\partial_{\mathcal{B}_{\zeta_l^{(i)},j+1}}\omega_{0,1}(z)
$$
is a holomorphic $1$-form with vanishing $\mathcal{A}$-periods and thus it vanishes. 

This yields the equality for $\omega_{0,1}$. Since we chose $\omega_{0,2}$ to be the Bergman kernel with vanishing $\mathcal{A}$-periods, we obtain the equality for the rest of the $\omega_{g,n}$ from Theorem~\ref{variational}.
%Since we chose $\omega_{0,2}$ to be the Bergman kernel with vanishing $\mathcal{A}$-periods, it is known that the equality for $\omega_{0,2}$ follows from Rauch's variational formula. From the equality for the two base cases, we obtain the equality for the rest of the $\omega_{g,n}$ from \cite{EO07,GIS}.
\end{proof}

\begin{corollary}\label{isomonodromic} Let $\zeta_{\infty}\in x^{-1}(\infty)$ and $\zeta_{l}\in x^{-1}(\lambda_l)$ be poles of $\omega_{0,1}$ of orders $m_{\infty}$ and $m_l$, $l=1,\ldots,N$, respectively. Let $d_{\infty}\coloneqq \mathrm{ord}_{\zeta_{\infty}}(x)$. The operator \eqref{operatorL} is equal to the differential operator $L(x)=L_{\infty}(x)+L_{\Lambda}(x)$, with
\beq\label{inftyPart}
L_{\infty}(x)=  \sum_{j=1-2d_{\infty}}^{m_{\infty}} t_{\zeta_{\infty},j} \sum_{k= 0}^{\frac{1-j}{d_{\infty}}-2} x^k  \Big(-\frac{j}{d_{\infty}}-k-2\Big) \frac{\partial}{\partial t_{\zeta_{\infty},j+d_{\infty}(k+2)}}
\eeq
and
\beq\label{finitePart}
L_{\Lambda}(x) =  \sum_{l=1}^N \left(\frac{1}{x-\lambda_l}\frac{\d}{\d\lambda_l}+ \sum_{j=1}^{m_l-1} t_{\zeta_l,j} \sum_{k=1}^{j} (x-\lambda_l)^{-(k+1)} (j+1-k) \frac{\d}{\partial t_{\zeta_l,j+1-k}} \right).  
\eeq
\end{corollary}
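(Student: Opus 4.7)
The strategy is to recast the generalized-cycle operators $\partial_{\mathcal{B}_{p,k}}$ appearing in \eqref{operatorL} as genuine partial derivatives with respect to the moduli of the family of spectral curves. The two building blocks are the form--cycle duality relation $\partial_{\mathcal{B}_{p,k}}\omega_{g,n} = \frac{\partial}{\partial t_{p,k}}\omega_{g,n}$ (proved in \cite{EO07,GIS}, and already used in the preceding corollary to identify $-\partial_t$ with $\partial_{t_{\infty,1}}$), together with the lemma giving \eqref{lambdaOp}, which expresses the dependence on the position of a finite pole $\lambda_l$ of $R(x)$ in terms of cycle derivatives over its two preimages. Thus the plan is to apply these two identifications, term by term, to the explicit expression of $L(x)$ in the Proposition.

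First, I would treat the infinity part of \eqref{operatorL}. Since we are assuming a single pole $\zeta_\infty$ over $x=\infty$ (of $x$-order $d_\infty\in\{-1,-2\}$), each operator $\partial_{\mathcal{B}_{\zeta_\infty,\,j+d_\infty(k+2)}}$ is replaced directly by $\frac{\partial}{\partial t_{\zeta_\infty,\,j+d_\infty(k+2)}}$; no further manipulation is required, and one reads off $L_\infty(x)$ exactly as in \eqref{inftyPart}.

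For the finite part, which runs over the two preimages $\zeta_l^{(1)}, \zeta_l^{(2)}$ of each $\lambda_l$, I would split the inner sum over $k$ into the cases $k=0$ and $k\geq 1$. The $k=0$ contribution in $L(x)$ reads
$$
\frac{1}{x-\lambda_l}\,\sum_{i=1,2}\,\sum_{j=0}^{m_l}(j+1)\,t_{\zeta_l^{(i)},j}\,\partial_{\mathcal{B}_{\zeta_l^{(i)},j+1}},
$$
whose double sum is exactly the right-hand side of \eqref{lambdaOp}; it therefore collapses to $\frac{1}{x-\lambda_l}\frac{\partial}{\partial \lambda_l}$, producing the first summand of \eqref{finitePart}. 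For $k\geq 1$ I use once more $\partial_{\mathcal{B}_{\zeta_l^{(i)},j+1-k}} = \frac{\partial}{\partial t_{\zeta_l^{(i)},j+1-k}}$; summing over $l$ and over both preimages then reproduces the second summand of \eqref{finitePart}.

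The main bookkeeping point, rather than any conceptual obstacle, is to verify that the $k=0$ part really accounts for the entire $\lambda_l$-derivative with nothing left over; this is guaranteed precisely by the hypothesis of the lemma that the filling fractions $\epsilon_i$ be independent of the $\lambda_l$, so that the discrepancy between $\partial_{\lambda_l}\omega_{0,1}$ and the cycle-derivative combination is a holomorphic differential with vanishing $\mathcal{A}$-periods, hence zero. Once this is in place, and after checking that the index ranges of $(j,k)$ on the two sides match termwise (so that no spurious $k=0$ terms appear in the $\partial/\partial t$ sum), the equality $L(x) = L_\infty(x) + L_\Lambda(x)$ follows immediately from the Proposition.
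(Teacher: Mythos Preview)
Your approach is essentially that of the paper: convert the cycle operators $\partial_{\mathcal{B}_{p,k}}$ in \eqref{operatorL} to time derivatives via the variational formulas of \cite{EO07,GIS}, and use the lemma \eqref{lambdaOp} to absorb the finite-pole contributions into $\partial/\partial\lambda_l$. Your handling of the $k=0$ step is in fact slightly cleaner than the paper's: you apply \eqref{lambdaOp} directly to the full $k=0$ sum over both preimages, whereas the paper isolates the $(k,j)=(0,m_l)$ term through the auxiliary identity \eqref{outerOp} and then observes a cancellation with the remaining $k=0$ terms.

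There is, however, one imprecision worth correcting. The identity $\partial_{\mathcal{B}_{\zeta_l^{(i)},s}} = \partial/\partial t_{\zeta_l^{(i)},s}$ that you invoke for each preimage separately is \emph{not} valid when the pole is unramified (two distinct preimages related by the involution). The paper is explicit on this point: in the unramified case the correct variational formula is
\[
\frac{\partial}{\partial t_{\zeta,s}} \;=\; \partial_{\mathcal{B}_{\zeta,s}} - \partial_{\mathcal{B}_{\sigma(\zeta),s}},
\]
since the times at the two preimages satisfy $t_{\sigma(\zeta),s}=-t_{\zeta,s}$ and are therefore not independent parameters. Your conclusion is still correct, because in \eqref{operatorL} the two preimages always appear together weighted by their respective times, and the sign relation collapses the pair $t_{\zeta,j}\,\partial_{\mathcal{B}_{\zeta,s}} + t_{\sigma(\zeta),j}\,\partial_{\mathcal{B}_{\sigma(\zeta),s}}$ to $t_{\zeta,j}\,\partial/\partial t_{\zeta,s}$. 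But you should state the two-preimage form of the variational formula and carry out that combination explicitly, both for the finite poles (always unramified here) and, when $d_\infty=-1$, for the pole at infinity as well.
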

\begin{proof}
% \cite{EO07,GIS}
From the variational formulas of topological recursion (see Theorem~\ref{variational}), if $\zeta$ is a ramified pole of $\omega_{0,1}$ of order $m$, i.e.~$x(\zeta)=\infty$ and $\mathrm{ord}_{\zeta}(x)=-2$, then we have
$$
\frac{\d}{\d t_{\zeta,j}}\omega_{g,n}(z_1,\ldots,z_n) = \d_{\mathcal{B}_{\zeta,j}}\omega_{g,n}(z_1,\ldots,z_n), \text{ for } j=1,\ldots,m.
$$
On the other hand, if $\zeta$ is an unramified pole of $\omega_{0,1}$ of order $m$, either $x^{-1}(\infty)=\{\zeta,\sigma(\zeta)\}$ and $\mathrm{ord}_{\zeta}(x)=-1$ or $x^{-1}(\lambda_l)=\{\zeta,\sigma(\zeta)\}$ and $\mathrm{ord}_{\zeta}(x)=1$, then the coefficients of the expansion of $\omega_{0,1}$ around $\zeta=\zeta_+$ and the other preimage $\sigma(\zeta)=\zeta_-$ are not independent. Therefore, the variational formulas include residues at the two preimages:
$$
\frac{\d}{\d t_{\zeta,j}}\omega_{g,n}(z_1,\ldots,z_n) = \d_{\mathcal{B}_{\zeta,j}}\omega_{g,n}(z_1,\ldots,z_n)- \d_{\mathcal{B}_{\sigma(\zeta),j}}\omega_{g,n}(z_1,\ldots,z_n), \text{ for } j=1,\ldots,m.
$$
The variational formulas allow to rewrite all the operators $\d_{\mathcal{B}_{\zeta,j}}$ in \eqref{operatorL}, for $j=1,\ldots,m$, in terms of derivatives with respect to the times $t_{\zeta,j}$ appearing as coefficients of the polar part of $\omega_{0,1}$. This is enough to rewrite the first line of \eqref{operatorL} as $L_{\infty}(x)$ in \eqref{inftyPart}.

For $L_{\Lambda}(x)$, we make use of \eqref{lambdaOp} to express the remaining operators $\d_{\mathcal{B}_{\zeta,m_l+1}}$ in terms of allowed times $t_{\zeta,j}$, $j=1,\ldots,m$, and derivatives with respect to the poles $\lambda_l$ of $R(x)$:
\beq\label{outerOp}
(m_l+1)t_{\zeta_l,m_l}\xi_l^{-1}(\d_{\mathcal{B}_{\zeta_l,m_l+1}}-\d_{\mathcal{B}_{\sigma(\zeta_l),m_l+1}}) = \xi_l^{-1} \left(\frac{\d}{\d \lambda_l}-\sum_{j=0}^{m_l-1}(j+1)t_{\zeta_l,j}\frac{\d}{\d t_{\zeta_l,j+1}}\right).
\eeq
The second type of terms of the RHS of \eqref{outerOp} cancels the terms with $k=0$ from second line of \eqref{operatorL}, yielding \eqref{finitePart}.

\end{proof}

We have just found a differential operator $L(x)$ in the times and the poles of $R(x)$, whose coefficients are rational functions of $x$, with poles at $x=\infty$ or $x=x(\zeta_i)$, i.e.~the same poles as $R(x)$, with at most the same degrees.

\begin{example} In the elliptic case of curves of the form $y^2=x^3+tx+V$ we have only one pole, at $\zeta_i=\infty$, of degree $m_i=5$, with $d_i=-2$. The only non-vanishing times are $t_{\infty,5}=-2$ and $t_{\infty,1}=-t$, and thus only the terms with $j=5$ and $k=0$ contribute:
\bea
L(x) 
&=&  \sum_{j=1,5} t_{\infty,j} \sum_{0\leq k \leq -2+(j-1)/2} x^k  (j/2-k-2) \frac{\partial}{\partial t_{\infty,j-2(k+2)}}   \nonumber \\
&=&   t_{\infty,5}  (5/2-0-2) \frac{\partial}{\partial t_{\infty,{5-2(0+2)}}}   \nonumber \\
&=&  -2 \   \Big(\frac52-2\Big) \frac{\partial}{\partial t_{\infty,{1}}}   \nonumber \\
&=&  - \frac{\partial}{\partial t_{\infty,{1}}}   \nonumber \\
&=& \frac{\partial}{\partial t}. \nonumber
\eea
\end{example}

\begin{example} In the Airy case, $y^2=x$, we have only one pole, at $\zeta_i=\infty$, of degree $m_i=3$, with $d_i=-2$. The sum is empty and
$$
L(x)=0.
$$
\end{example}

\begin{remark}
More generally, the admissible curves considered in \cite{BouchardEynard_QC} are those for which
$$
L(x)=0.
$$
\end{remark}

\section{PDEs quantizing any hyperelliptic curve}\label{section4}

For $r\geq 1$, let $D=\sum_{i=1}^r \alpha_i [p_i]$ be a divisor on $\Sigma$, with $p_i\in \Sigma$. We call $\sum_i \alpha_i$ the \emph{degree} of the divisor and denote ${\rm Div}_0(\Sigma)$ the set of divisors of degree $0$. For $D\in {\rm Div}_0(\Sigma)$, we define the integration of a $1$-form $\rho(z)$ on $\Sigma$ as
\beq
\int_D \rho(z) \coloneqq \sum_i\alpha_i\int_{o}^{p_i} \rho(z),
\eeq
where $o\in\Sigma$ is an arbitrary base point. This integral is well defined locally, meaning that it is independent of the base point $o$ because the degree of the divisor is zero, however it depends on a choice of homotopy class from $o$ to $p_i$.

For $(g,n)\neq(0,2)$ consider the functions of $D$, defined locally:
\begin{align*}
F_{g,0}(D)& \coloneqq  F_g = \omega_{g,0}, \\
F_{g,n}(D)& \coloneqq \overbrace{\int_D\cdots\int_D}^{n} \omega_{g,n}(z_1,\ldots,z_n), \\
F_{g,n}^{\prime}(z;D) & \coloneqq \frac{1}{\dd x(z)}\overbrace{\int_D\cdots\int_D}^{n-1} \omega_{g,n}(z,z_2\ldots,z_n), \\ 
F_{g,n}^{\prime\prime}(z,\tilde{z};D) & \coloneqq \frac{1}{\dd x(z)\dd x(\tilde{z})}\overbrace{\int_D\cdots\int_D}^{n-2} \omega_{g,n}(z,\tilde{z},z_3,\ldots,z_n).
\end{align*}
Recall that 
$$
B(z_1, z_2)\coloneqq \dd_1\dd_2\log \left(E(z_1,z_2)\sqrt{\dd x(z_1)\dd x(z_2)}\right),
$$
with $E(z_1,z_2)$ being the prime form, which is defined in \cite{EO07}, and satisfies that it vanishes only if $z_1=z_2$ with a simple zero and has no pole.

For $(g,n)=(0,2)$ define:
\begin{align}
F_{0,2}(D)& \coloneqq  2\sum_{i<j} \alpha_i\alpha_j \log \left( E(p_i,p_j) \sqrt{\dd x(p_i)\dd x(p_j)} \right) , \\
F_{0,2}^{\prime}(z;D) & \coloneqq \frac{1}{\dd x(z)} \dd_z \left( \sum_{i=1}^r \alpha_i  \log \left( E(z,p_i) \sqrt{\dd x(z)\dd x(p_i)} \right) \right) , \label{F02prime} \\ 
F_{0,2}^{\prime\prime}(z,\tilde{z};D) & \coloneqq \frac{B(z,\tilde z)}{\dd x(z)\dd x(\tilde{z})} .
\end{align}

Since the $\omega_{g,n}$ are symmetric, we have the following relations for $(g,n)\neq (0,2)$:
\begin{align}
\frac{\dd}{\dd x_i} F_{g,n}(D) & = n \alpha_i F_{g,n}^{\prime}(p_i;D), \label{1stsymm} \\
\left(\frac{\dd}{\dd x_i}\right)^2 F_{g,n}(D) & = n(n-1) \alpha_i^2 F_{g,n}^{\prime\prime}(p_i,p_i;D)+n\alpha_i \left(\frac{\dd}{\dd \tilde{x}} F_{g,n}^{\prime}(\tilde{p};D)\right)_{\tilde{p}=p_i}, \label{2ndsymm}
\end{align}
where $x_i \coloneqq x(p_i)$, $\tilde{x} \coloneqq x(\tilde{p})$, and $\frac{\dd}{\dd x}$ acts on meromorphic functions by taking exterior derivative and dividing by $\dd x$, which amounts to derivate an analytic expansion of the meromorphic function with respect to a local variable $x$. 

For $(g,n)=(0,2)$ we have
\begin{align}
\label{derivcyl} \frac{\dd}{\dd x_i} F_{0,2}(D) & = 2 \alpha_i \lim_{z\to p_i} \left( F_{0,2}^{\prime}(z;D) - \alpha_i \frac{\dd}{\dd x(z)} \log \left( E(z,p_i) \sqrt{\dd x(z)\dd x(p_i)} \right) \right)  \\
& = 2 \alpha_i \sum_{j\neq i} \alpha_j \frac{\dd}{\dd x(p_i)} \log \left( E(p_i,p_j) \sqrt{\dd x(p_i)\dd x(p_j)} \right)  . \nonumber
\end{align}

\begin{remark} Integrating the first part of Theorem~\ref{loop} over a divisor $D$ of degree $0$, we obtain:
\beq\label{cylprime}
F_{0,2}^{\prime}(z;D)+F_{0,2}^{\prime}(-z;D)=\sum_{i=1}^r\frac{\alpha_i}{x(z)-x(p_i)}.
\eeq
\end{remark}

\begin{lemma}\label{Lemmaintint02} Let $\zeta$ be a pole of $\omega_{0,1}$ and $\mathcal{B}_{\zeta,k}$ the $k$th 2nd type cycle around $\zeta$, with $k\geq 1$. Consider $t_{\zeta,k}$ the corresponding KP time. Then,
\beq\label{intint02}
\int_D\int_D\int_{\mathcal{B}_{\zeta,k}}\omega_{0,3}(z,z_1,z_2)=\int_D\int_D\frac{\partial}{\partial t_{\zeta,k}}\omega_{0,2}(z_1,z_2) =\frac{\partial}{\partial t_{\zeta,k}}F_{0,2}(D).
\eeq
\end{lemma}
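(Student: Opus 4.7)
The plan is to establish the two equalities separately. The first equality is essentially a Fubini-type swap combined with a known variational formula, whereas the second equality requires delicate handling of the diagonal singularity of the prime form.

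\textbf{First equality.} I would invoke the variational formula $\partial_{t_{\zeta,k}}\omega_{0,2}(z_1,z_2)=\partial_{\mathcal{B}_{\zeta,k}}\omega_{0,2}(z_1,z_2)=\int_{\mathcal{B}_{\zeta,k}}\omega_{0,3}(\cdot,z_1,z_2)$, which is part of the EO/GIS differential formulas for $\omega_{0,2}$ recalled earlier in the paper (and which, in the elliptic/Painlev\'e~I case, has already been used as Corollary~\ref{P}). Since $\omega_{0,3}$ is a meromorphic $3$-form whose singularities in the variable $z_0$ lie only at ramification points (and hence are disjoint from the support of $D$ and from $\zeta$), the cycle $\mathcal{B}_{\zeta,k}$ is a small compact contour around $\zeta$, and the paths appearing in $\int_D$ can be chosen to avoid $\zeta$, so Fubini applies and yields the first equality.

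\textbf{Second equality.} I would write $\omega_{0,2}(z_1,z_2)=\dd_{z_1}\dd_{z_2}G(z_1,z_2)$ with
\[
G(z_1,z_2)\coloneqq\log\bigl(E(z_1,z_2)\sqrt{\dd x(z_1)\dd x(z_2)}\bigr),
\]
so that $F_{0,2}(D)=\sum_{i\neq j}\alpha_i\alpha_j\,G(p_i,p_j)$. The key input is the universal behaviour of the prime form on the diagonal,
\[
E(z_1,z_2)\sqrt{\dd x(z_1)\dd x(z_2)}=\bigl(x(z_1)-x(z_2)\bigr)\bigl(1+O(x(z_1)-x(z_2))\bigr),
\]
which means $G(z_1,z_2)=\log(x(z_1)-x(z_2))+K(z_1,z_2)$ with $K$ regular on the diagonal and $K(z,z)=0$ (the leading-order constant is $\log 1$). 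Since the $t_{\zeta,k}$ are $2$nd kind deformation parameters of $y\dd x$, they leave the function $x$ and the $x$-values $x(p_i)$ fixed on the physical spectral curve. Hence $\partial_{t_{\zeta,k}}\log(x(z_1)-x(z_2))=0$, so $\partial_{t_{\zeta,k}}G=\partial_{t_{\zeta,k}}K$ and in particular $\partial_{t_{\zeta,k}}K(p,p)=0$.

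\textbf{Combining.} Using $\sum_i\alpha_i=0$, the computation
\[
\int_D\!\!\int_D\partial_{t_{\zeta,k}}\omega_{0,2}=\sum_{i,j}\alpha_i\alpha_j\bigl[\partial_{t_{\zeta,k}}K(p_i,p_j)-\partial_{t_{\zeta,k}}K(o,p_j)-\partial_{t_{\zeta,k}}K(p_i,o)+\partial_{t_{\zeta,k}}K(o,o)\bigr]
\]
collapses to $\sum_{i,j}\alpha_i\alpha_j\,\partial_{t_{\zeta,k}}K(p_i,p_j)$, and the diagonal terms vanish by $\partial_{t_{\zeta,k}}K(p_i,p_i)=0$. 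On the other side,
\[
\partial_{t_{\zeta,k}}F_{0,2}(D)=\sum_{i\neq j}\alpha_i\alpha_j\,\partial_{t_{\zeta,k}}G(p_i,p_j)=\sum_{i\neq j}\alpha_i\alpha_j\,\partial_{t_{\zeta,k}}K(p_i,p_j),
\]
so the two expressions agree. The main obstacle is precisely the control of the diagonal divergence of $G$: once one observes that the leading singularity $\log(x_1-x_2)$ is independent of the $2$nd kind times, the diagonal contributions that obstruct a naive ``Fubini on $\int_D\!\int_D\omega_{0,2}$'' argument disappear after differentiating in $t_{\zeta,k}$.
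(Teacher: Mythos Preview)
Your proof is correct and follows essentially the same approach as the paper: both arguments hinge on the observation that the $t_{\zeta,k}$-derivative is taken at fixed $x$, so the $x$-dependent regularization of the diagonal singularity of $\omega_{0,2}$ is annihilated by $\partial_{t_{\zeta,k}}$. The paper compresses your prime-form decomposition $G=\log(x_1-x_2)+K$ into the single identity
\[
\int_D\!\!\int_D\!\Big(B-\frac{\dd x_1\,\dd x_2}{(x_1-x_2)^2}\Big)+2\sum_{i<j}\alpha_i\alpha_j\log(x_i-x_j)=F_{0,2}(D),
\]
and then differentiates both sides, whereas you unpack the diagonal cancellation more explicitly; the content is the same.
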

\begin{proof}
\begin{multline}
\int_D\int_D\left(B(z_1,z_2)-\frac{\dd x(z_1)\dd x(z_2)}{(x(z_1)-x(z_2))^2}\right)+2\sum_{i<j}\alpha_i \alpha_j\log(x_i-x_j)= \\
2\sum_{i<j}\alpha_i\alpha_j\log \left( \frac{E(p_i,p_j)\sqrt{\dd x_i \dd x_j}}{x_i-x_j}\right) + 2\sum_{i<j} \alpha_i\alpha_j \log (x_i-x_j) + \sum_{i}\alpha_i^2 \log \frac{\dd x_i}{\dd x_i}=F_{0,2}(D).\label{02convention}
\end{multline}
Taking the derivative with respect to $t_{\zeta,k}$ of the first line gives the left hand side of \eqref{intint02} because we are taking this derivative at fixed $x$.
\end{proof}

Consider $\dd E(z,p_1)\coloneqq\dd_z\log \left(E(z,p_1)\sqrt{\dd x(z)\dd x(p_1)}\right)$ and observe that 
\beq\label{lim0}
\lim_{z\rightarrow p_1}\frac{\dd E(z,p_1)}{\dd x(z)}-\frac{1}{x(z)-x(p_1)} =0.
\eeq
With this notation one can rewrite \eqref{F02prime} as
$$
F_{0,2}^{\prime}(z;D)=\sum_{i=1}^r \alpha_i \frac{\dd E(z,p_i)}{\dd x(z)}
$$
and $\frac{\dd}{\dd x_i} F_{0,2}(D) = 2 \alpha_i \sum_{j\neq i} \alpha_j \frac{\dd E(p_i,p_j)}{\dd x(p_i)}$.

\vspace{0.2cm}
We define 
\beq
S_0(D)\coloneqq \int_{D} y \dd x = F_{0,1}(D), \;\; 
S_1(D)\coloneqq \log \prod_{i<j} \left(E(p_i,p_j)\sqrt{\dd x (p_i)\dd x(p_j)}\right)^{\alpha_i\alpha_j}=\frac{F_{0,2}(D)}{2}, \nonumber
\eeq
\beq
S_m(D)\coloneqq \sum_{\substack{2g-2+n =m-1\\ g\geq 0, n\geq 1}}\frac{F_{g,n}(D)}{n!}, \nonumber
\vspace{-0.3cm}
\eeq
and
\beq
\psi=\psi(D,\hbar)\coloneqq \exp (S(D,\hbar)), \text{ with } S(D,\hbar)\coloneqq \sum_{m=0}^{\infty}\hbar^{m-1}S_m(D).
\eeq

\begin{theorem} Let $F\coloneqq \sum_{g> 0}\hbar^{2g-2}F_g$. For every $k=1,\ldots,r$, assuming $\alpha_k^2=1$, we obtain
\beq\label{eqPDEgeneralL}
\hbar^2\Bigg(\frac{\dd^2}{\dd x_k^2} -\sum_{i\neq k} \frac{\frac{\dd}{\dd x_i} + \frac{\alpha_i}{\alpha_k}\frac{\dd}{\dd x_k}}{x_k-x_i}- L(x_k)-L(x_k).F+\sum_{\substack{i\neq j \\ i\neq k, j\neq k}} \frac{\alpha_i\alpha_j}{(x_k-x_i)(x_i-x_j)}\Bigg)\psi= R(x_k)\psi.
\eeq
\end{theorem}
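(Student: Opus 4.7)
The plan is to derive the PDE by taking an $\hbar$-weighted sum of the loop equations integrated against the divisor $D$. For every $(g,n)$, I will use the identity $L(x_k).\omega_{g,n}(z_1,\ldots,z_n)=P_{g,n}(x_k;z_1,\ldots,z_n)$ of \eqref{operatorLeq} combined with the explicit quadratic formula \eqref{loopeq} for $P_{g,n}$; then evaluate at $z=p_k$, integrate each spectator variable $z_i$ against $D$, multiply by $\hbar^{2g-2+n}/n!$, and sum over all $(g,n)$ including $(0,0)$. Since $\sum_{g,n\geq 0}\hbar^{2g-2+n}F_{g,n}(D)/n!=\hbar^{-2}F_0+S+F$, the $L$-side sums to $L(x_k).(\hbar^{-2}F_0+S+F)$, which reproduces the $-\hbar^2 L(x_k)\psi/\psi-\hbar^2 L(x_k).F$ terms of the PDE; the $(g,n)=(0,0)$ contribution $P_{0,0}(x_k)=R(x_k)$ is what yields the right-hand side.

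On the other side of the loop equation, the key tool is the linear loop equation of Theorem~\ref{loop}, $\omega_{g,n+1}(-z,\ldots)=-\omega_{g,n+1}(z,\ldots)$ for $(g,n+1)\neq (0,2)$, together with the $(0,2)$-correction $\omega_{0,2}(-z,z')=-\omega_{0,2}(z,z')+\dd x(z)\dd x(z')/(x(z)-x(z'))^2$. Applied to the quadratic sum $\sum_{g_1+g_2=g,\,I_1\sqcup I_2}\omega_{g_1}(p_k,I_1)\omega_{g_2}(-p_k,I_2)$, this converts $\omega(-p_k)\mapsto -\omega(p_k)$ for every stable factor, and the resulting double sum factorizes (by the standard $\binom{n}{n_1}/n!=1/(n_1!\,n_2!)$ bookkeeping) into $-\alpha_k^{-2}(\partial_{x_k}S)^2\dd x_k^2$ via \eqref{1stsymm}. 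Applied to the hook term $\omega_{g-1,n+2}(p_k,-p_k,z_1,\ldots,z_n)$, it produces $-\alpha_k^{-2}\partial_{x_k}^2 S\,\dd x_k^2$ via \eqref{2ndsymm}. The hypothesis $\alpha_k^2=1$ is precisely what cancels these prefactors, so that the quadratic and hook pieces together reconstruct $-\hbar^2\partial_{x_k}^2\psi/\psi$.

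The exchange piece $\sum_i\dd_i(\frac{1}{x_k-x(z_i)}\omega_{g,n}(z_1,\ldots,-z_i,\ldots)/\dd x(z_i))$, integrated over $D^n$ using $\int_D \dd_i h=\sum_j\alpha_j h(p_j)$ (valid because $\deg D=0$) and then applying the linear loop equation in the $i$-th variable, produces exactly the first-order operator $\sum_{i\neq k}\frac{\partial_{x_i}S+(\alpha_i/\alpha_k)\partial_{x_k}S}{x_k-x_i}$ that appears in the PDE. The residual constant $\sum_{i\neq j,\,i\neq k,\,j\neq k}\alpha_i\alpha_j/[(x_k-x_i)(x_i-x_j)]$ then arises from the cross-interaction between the $(0,2)$-corrections in the quadratic, hook, and exchange pieces, after applying the cylinder identity \eqref{cylprime} ($F^{\prime}_{0,2}(z,D)+F^{\prime}_{0,2}(-z,D)=\sum_i\alpha_i/(x(z)-x(p_i))$) and the regularization \eqref{derivcyl} of $\partial_{x_k}F_{0,2}(D)$.

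The main obstacle is organising these numerous $(0,2)$-related rational corrections so that they match the PDE term by term. Each $(0,2)$-flip produces terms of the form $\alpha_i/(x_k-x_i)$ or $\alpha_i\alpha_j/[(x_k-x_i)(x_i-x_j)]$ with various divisor weights, and only after carefully using \eqref{cylprime} and \eqref{derivcyl}, and tracking signs, symmetrization factors, and the singular limit that defines $F^{\prime}_{0,2}(p_k,D)$ through the prime form, do they assemble into the precise rational terms stated in \eqref{eqPDEgeneralL}. Once these corrections are matched, the remainder of the argument is a bookkeeping exercise of the three generating-function pieces at each $\hbar$ order, with $\alpha_k^2=1$ used at the end to strip all $\alpha_k^{-2}$ prefactors so that the Laplacian coefficient is exactly $\hbar^2$.
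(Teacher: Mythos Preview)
Your approach is essentially the paper's: integrate the loop equations \eqref{loopeq} against $D^{n}$, set $z=p_k$, convert to $x_k$-derivatives via \eqref{1stsymm}--\eqref{2ndsymm}, and sum with weights $\hbar^{2g-2+n}/n!$, treating the unstable topologies $(0,0),(0,1),(0,2),(1,0)$ separately and using \eqref{cylprime}, \eqref{derivcyl} to organise the $(0,2)$-corrections.

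One refinement is worth flagging. Your account of the hypothesis $\alpha_k^2=1$ as merely ``stripping $\alpha_k^{-2}$ prefactors'' is incomplete. In the paper's bookkeeping the second-derivative piece $\partial_{x_k}^2 S$ actually emerges \emph{without} an $\alpha_k^{-2}$ (after an index shift between the hook at $(g-1,n+2)$ and a leftover $(\dd F'_{g,n}/\dd\tilde x)_{\tilde p=p_k}$ term), while the quadratic piece carries $\alpha_k^{-2}$; combining them into $\partial_{x_k}^2\psi/\psi$ is one place $\alpha_k^2=1$ is needed. The other, more hidden, place is at order $\hbar^2$: the hook at $(g,n)=(1,0)$ gives $-B(p_k,-p_k)/\dd x_k^2=-\mathcal S(p_k)$, while the regularised square $(F'_{0,2}(p_k,D))^2$ in the $(0,2)$ quadratic contributes $+\alpha_k^2\,\mathcal S(p_k)$ (here $\mathcal S(p_k)=\lim_{z\to p_k}\frac{\dd E(z,p_k)}{\dd x}\big(\frac{\dd E(z,p_k)}{\dd x}-\frac{1}{x-x_k}\big)$ is the projective-connection term coming from the prime-form regularisation). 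Their sum $(\alpha_k^2-1)\mathcal S(p_k)$ vanishes only when $\alpha_k^2=1$. This is part of the ``$(0,2)$-corrections in the hook'' you allude to, but it is the one cancellation that does not reduce to the cylinder identity \eqref{cylprime} and deserves to be singled out.
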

\begin{proof}
We will give the proof of the claim for the case $k=1$, but it works exactly the same for every $k$.

Let us first consider the generic situation with $(g,n)\neq (0,0),(0,1),(0,2),(1,0)$. Using \eqref{loopeq} and \eqref{operatorLeq} for $n=0$, we can write $P_{g,0}(x)$, with $x=x(z)$, as 
\beq
F_{g-1,2}^{\prime\prime}(z,z;D)+\sum_{\substack{g_1+g_2=g}}F_{g_1,1}^{\prime}(z;D)F_{g_2,1}^{\prime}(z;D) = L(x).F_{g,0}(D) =L(x).F_{g}.
\eeq
Setting $z=p_1$, $x_1=x(p_1)$ and using \eqref{1stsymm} and \eqref{2ndsymm}, we obtain
\begin{align}\label{expression0}
& \frac{1}{2 \alpha_1^2} \left(\frac{\dd}{\dd x_1}\right)^2 F_{g-1,2}(D) 
- \frac{1}{\alpha_1} \left( \frac{\dd}{\dd  x(\tilde p)}  F_{g-1,2}^{\prime}(\tilde p;D)\right)_{\tilde p=p_1}  +
\sum_{\substack{g_1+g_2=g}} \frac{1}{\alpha_1^2} \frac{\dd F_{g_1,1}(D)}{\dd x_1} \frac{\dd F_{g_2,1}(D)}{\dd x_1} \\ & =L(x_1).F_{g}(D). \nonumber
\end{align}
%Making use of \eqref{1stsymm} again, we get
%\beq\label{expression0}
%P_{g,0}(x(z)) = \sum_{\substack{g_1+g_2=g}} \frac{1}{\alpha_1^2} \frac{\dd F_{g_1,1}(D)}{\dd x_1} \frac{\dd F_{g_2,1}(D)}{\dd x_1} =L(x_1).F_{g}(D).
%\eeq

For $n>0$, we integrate \eqref{loopeq} $n$ times over $D$ and use \eqref{operatorLeq} to get
\begin{multline}
F_{g-1,n+2}^{\prime\prime}(z,z;D)+\sum^{\text{no } (0,2)}_{\substack{g_1+g_2=g \\ n_1+n_2=n}}\binom{n}{n_1}F_{g_1,n_1+1}^{\prime}(z;D)F_{g_2,n_2+1}^{\prime}(z;D)+  \\
-2nF'_{0,2}(-z;D)F'_{g,n}(z;D)
-n\sum_{i=1}^r\alpha_i \frac{F_{g,n}^{\prime}(p_i;D)-F_{g,n}^{\prime}(z;D)}{x(z)-x(p_i)} = L(x).F_{g,n}(D), \nonumber
\end{multline}
where the sum of the second term is taken over $g_i, n_i\geq 0$ and ``no $(0,2)$'' means we exclude the cases with $(g_i,n_i)=(0,1)$ for $i=1,2$, for which we have used that
$$
(F^{\prime}_{0,2}(z;D)-F_{0,2}^{\prime}(-z;D))F_{g,n}^{\prime}(z;D)=-2 F'_{0,2}(-z;D)F'_{g,n}(z;D)+\sum_{i=1}^r\alpha_i\frac{F_{g,n}^{\prime}(z;D)}{x(z)-x(p_i)},
$$
which follows from~\eqref{cylprime}.

Letting $z=p_1$, $x_i=x(p_i)$, dividing by $n!$ and using \eqref{1stsymm} and \eqref{2ndsymm}, we obtain
\begin{multline}
\frac{1}{(n+2)! \alpha_1^2} \left(\frac{\dd}{\dd x_1}\right)^2 F_{g-1,n+2}(D)
- \frac{1}{(n+1)! \alpha_1} \left( \frac{\dd}{\dd  x(\tilde p)}  F_{g-1,n+2}^{\prime}(\tilde p;D)\right)_{\tilde p=p_1}  + \\
\sum^{\text{no } (0,2)}_{\substack{g_1+g_2=g \\ n_1+n_2=n}} \frac{1}{(n_1+1)! (n_2+1)! \alpha_1^2} \frac{\dd F_{g_1,n_1+1}(D)}{\dd x_1} \frac{\dd F_{g_2,n_2+1}(D)}{\dd x_1}   \\
- \frac{1}{n!} \sum_{i=2}^r \frac{1}{x(p_1)-x(p_i)}  \left( \frac{\dd F_{g,n}(D)}{\dd x_i} - \frac{\alpha_i}{\alpha_1}\frac{\dd F_{g,n}(D)}{\dd x_1}\right)+\frac{\alpha_1}{(n-1)!} \left( \frac{\dd F'_{g,n}(\tilde p;D)}{\dd x(\tilde p)}\right)_{\tilde p=p_1}   \\
- \frac{2}{n!\alpha_1} \frac{\dd F_{g,n}(D)}{\dd x_1} \ F'_{0,2}(-p_1;D)
= L(x_1).\frac{F_{g,n}(D)}{n!}. \nonumber
\end{multline}
Using \eqref{cylprime} again, we obtain the following expression for the left hand side:
\begin{multline}
\frac{1}{(n+2)! \alpha_1^2} \left(\frac{\dd}{\dd x_1}\right)^2 F_{g-1,n+2}(D)
- \frac{1}{(n+1)! \alpha_1} \left( \frac{\dd}{\dd  x(\tilde p)}  F_{g-1,n+2}^{\prime}(\tilde p;D)\right)_{\tilde p=p_1}  + \\
\sum^{\text{no } (0,2)}_{\substack{g_1+g_2=g \\ n_1+n_2=n}} \frac{1}{(n_1+1)! (n_2+1)! \alpha_1^2} \frac{\dd F_{g_1,n_1+1}(D)}{\dd x_1} \frac{\dd F_{g_2,n_2+1}(D)}{\dd x_1}   \\
+\frac{\alpha_1}{(n-1)!} \left( \frac{\dd F'_{g,n}(\tilde p;D)}{\dd x(\tilde p)}\right)_{\tilde p=p_1}
- \frac{1}{n!} \sum_{i=2}^r \frac{1}{x(p_1)-x(p_i)}  \left( \frac{\dd F_{g,n}(D)}{\dd x_i} + \frac{\alpha_i}{\alpha_1}\frac{\dd F_{g,n}(D)}{\dd x_1}\right)   \\
- \frac{2}{n!\alpha_1} \frac{\dd F_{g,n}(D)}{\dd x_1} \left(  F'_{0,2}(\tilde p;D)- \frac{\alpha_1}{x(\tilde p)-x(p_1)}  \right)_{\tilde p=p_1}\\
=\frac{1}{(n+2)! \alpha_1^2} \left(\frac{\dd}{\dd x_1}\right)^2 F_{g-1,n+2}(D)
- \frac{1}{(n+1)! \alpha_1} \left( \frac{\dd}{\dd  x(\tilde p)}  F_{g-1,n+2}^{\prime}(\tilde p;D)\right)_{\tilde p=p_1}   \\
+\frac{\alpha_1}{(n-1)!} \left( \frac{\dd F'_{g,n}(\tilde p;D)}{\dd x(\tilde p)}\right)_{\tilde p=p_1} +\sum_{\substack{g_1+g_2=g \\ n_1+n_2=n}} \frac{1}{(n_1+1)! (n_2+1)! \alpha_1^2} \frac{\dd F_{g_1,n_1+1}(D)}{\dd x_1} \frac{\dd F_{g_2,n_2+1}(D)}{\dd x_1}   \\
- \frac{1}{n!} \sum_{i=2}^r \frac{1}{x(p_1)-x(p_i)}  \left( \frac{\dd F_{g,n}(D)}{\dd x_i} + \frac{\alpha_i}{\alpha_1}\frac{\dd F_{g,n}(D)}{\dd x_1}\right),\\
\end{multline}
where for the last equality we have used \eqref{derivcyl} and \eqref{lim0}.

For all $\ell \geq 3$, we sum this expression for all $g\geq 0, n \geq 1$ such that $2g-2+n=\ell-2$ and the corresponding $P_{g,0}(x(z_1))$ for $n=0$ from \eqref{expression0} for all $g\geq 0$ such that $2g-2=\ell-2$ to obtain:
\begin{multline}
\sum_{\substack{2g+n =\ell\\ g\geq 0, n\geq 0}} \Bigg(\frac{1}{(n+2)! \alpha_1^2} \left(\frac{\dd}{\dd x_1}\right)^2 F_{g-1,n+2}(D) \nonumber\\
- \frac{1}{(n+1)! \alpha_1} \left( \frac{\dd}{\dd  x(\tilde p)}  F_{g-1,n+2}^{\prime}(\tilde p;D)\right)_{\tilde p=p_1}+\frac{\alpha_1}{(n-1)!} \left( \frac{\dd F'_{g,n}(\tilde p;D)}{\dd x(\tilde p)}\right)_{\tilde p=p_1}
\nonumber\\ - \frac{1}{n!} \sum_{i=2}^r \frac{1}{x(p_1)-x(p_i)}  \left( \frac{\dd F_{g,n}(D)}{\dd x_i} + \frac{\alpha_i}{\alpha_1}\frac{\dd F_{g,n}(D)}{\dd x_1}\right)\Bigg)  \nonumber \\
+ \frac{1}{\alpha_1^2} \sum_{\ell_1+\ell_2=\ell} 
\Bigg(\sum_{\substack{2g_1-2+n_1=\ell_1-1 \\ g_1\geq 0,n_1\geq 1}} \frac{1}{n_1!} \frac{\dd F_{g_1,n_1}(D)}{\dd x_1} 
\sum_{\substack{2g_2-2+n_2=\ell_2-1 \\ g_1\geq 0,n_2\geq 1}}\frac{1}{n_2!}\frac{\dd F_{g_2,n_2}(D)}{\dd x_1}\Bigg) = \sum_{\substack{2g+n =\ell\\ g\geq 0, n\geq 0}} L(x_1). \frac{F_{g,n}}{n!}. \nonumber \\
\end{multline}

In this sum over topologies, assuming that $\frac{1}{\alpha_1}=\alpha_1$, all the first terms of the second line cancel against all the second terms except the ones with $n=1$, which complete the terms of the first line, using \eqref{2ndsymm} for the special case $n=1$ in which only the second term of the right hand side survives.
%\begin{multline}
%\sum_{\substack{2g+n =\ell\\ g\geq 0, n\geq 1}} \Bigg(\frac{1}{n!}\left(\frac{\dd}{\dd x_1}\right)^2 F_{g,n}(D) - \frac{1}{n!} \sum_{i=2}^r \frac{1}{x(p_1)-x(p_i)}  \left( \frac{\dd F_{g,n}(D)}{\dd x_i} + \frac{\alpha_i}{\alpha_1}\frac{\dd F_{g,n}(D)}{\dd x_1}\right)\Bigg)  \nonumber \\
%+ \frac{1}{\alpha_1^2} \sum_{\ell_1+\ell_2=\ell} 
%\Bigg(\sum_{\substack{2g_1-2+n_1=\ell_1-1 \\ g_1\geq 0,n_1\geq 1}} \frac{1}{n_1!} \frac{\dd F_{g_1,n_1}(D)}{\dd x_1} 
%\sum_{\substack{2g_2-2+n_2=\ell_2-1 \\ g_1\geq 0,n_2\geq 1}}\frac{1}{n_2!}\frac{\dd F_{g_2,n_2}(D)}{\dd x_1}\Bigg) = \sum_{\substack{2g+n =\ell\\ g\geq 0, n\geq 0}} L(x_1). \frac{F_{g,n}}{n!}. \nonumber \\
%\end{multline}

Therefore, for $\ell \geq 3$, we have proved
$$
\left(\frac{\dd}{\dd x_1}\right)^2 S_{\ell-1} +\frac{1}{\alpha_1^2}\sum_{\ell_1+\ell_2=\ell}\frac{\dd}{\dd x_1}S_{\ell_1}\frac{\dd}{\dd x_1}S_{\ell_2} -\sum_{i=2}^r  \frac{\frac{\dd S_{\ell-1}}{\dd x_i} +\frac{\alpha_i}{\alpha_1} \frac{\dd S_{\ell-1}}{\dd x_1}}{x(p_1)-x(p_i)}
$$
$$
= L(x_1).S_{\ell-1}+\begin{cases}L(x_1).F_{\ell/2}, & \ell \text{ even,}\\ 0, & \ell \text{ odd,}\end{cases}
$$
that is
\begin{multline}
[\hbar^{\ell}]\left[\hbar^2\left(\left(\frac{\dd}{\dd x_1}\right)^2 S +  \frac{1}{\alpha_1^2}\frac{\dd}{\dd x_1}S\frac{\dd}{\dd x_1}S -\sum_{i=2}^r  \frac{\frac{\dd S}{\dd x_i} +\frac{\alpha_i}{\alpha_1} \frac{\dd S}{\dd x_1}}{x(p_1)-x(p_i)}\right)\right] = \\
[\hbar^{\ell}]\left[ \hbar^2 (L(x_1).S+L(x_1).F)+R(x_1)\right].
\end{multline}

Let us finally consider the special cases before the assumption $\alpha_1^2=1$:

$\bullet$ For $(g,n)=(0,0)$ we get
\beq
\frac{1}{\alpha_1^2}  \left(\frac{\dd F_{0,1}(D)}{\dd x_1}\right)^2 = R(x_1), \quad \text{i.e. } \quad \frac{1}{\alpha_1^2}  \left(\frac{\dd}{\dd x_1}S_0\right)^2 = R(x_1), \quad \text{ or }
\eeq
$$
[\hbar^0]\bigg(\hbar^2\bigg(\bigg(\frac{\dd}{\dd x_1}\bigg)^2 S + \frac{1}{\alpha_1^2}\frac{\dd}{\dd x_1}S\frac{\dd}{\dd x_1}S -\sum_{i=2}^r  \frac{\frac{\dd S}{\dd x_i} +\frac{\alpha_i}{\alpha_1} \frac{\dd S}{\dd x_1}}{x(p_1)-x(p_i)}
-  L(x_1).S- L(x_1).F\bigg)-R(x_1)\bigg)=0.
$$

$\bullet$ For $(g,n)=(0,1)$ we get
\beq
2F'_{0,1}(z;D)F'_{0,2}(z;D) -\sum_{i=1}^r\alpha_i \frac{F_{0,1}^{\prime}(p_i;D)+F_{0,1}^{\prime}(z;D)}{x(z)-x(p_i)} \ 
= L(x).F_{0,1}(D). \nonumber
\eeq
Thus
\begin{multline}
2F'_{0,1}(z;D)F'_{0,2}(z;D)-2\alpha_1 \frac{F_{0,1}^{\prime}(z;D)}{x(z)-x(p_1)} 
-\alpha_1 \frac{F_{0,1}^{\prime}(p_1,D)-F_{0,1}^{\prime}(z;D)}{x(z)-x(p_1)} 
\\ 
-\sum_{i=2}^r \alpha_i \frac{F_{0,1}^{\prime}(p_i;D)+F_{0,1}^{\prime}(z;D)}{x(z)-x(p_i)} \ 
= L(x).F_{0,1}(D). \nonumber
\end{multline}
At $z=p_1$ this gives
\begin{multline}
2F'_{0,1}(p_1;D) \left( F'_{0,2}(z;D)-\alpha_1 \frac{1}{x(z)-x(p_1)} \right)_{z=p_1}
+\alpha_1  \left( \frac{\dd F_{0,1}^{\prime}(z;D)}{\dd x}\right)_{z=p_1}
\\ 
-\sum_{i=2}^r  \frac{1}{x(p_1)-x(p_i)} \left( \frac{\dd F_{0,1}(D)}{\dd x_i} +\frac{\alpha_i}{\alpha_1} \frac{\dd F_{0,1}(D)}{\dd x_1} \right)\ 
= L(x_1).F_{0,1}(D), \nonumber
\end{multline}
and equivalently:
$$
\bigg(\frac{\dd}{\dd x_1}\bigg)^2 F_{0,1}(D) + \frac{1}{\alpha_1^2}\frac{\dd}{\dd x_1}F_{0,1}(D)\frac{\dd}{\dd x_1}F_{0,2}(D) -\sum_{i=2}^r  \frac{\frac{\dd F_{0,1}(D)}{\dd x_i} +\frac{\alpha_i}{\alpha_1} \frac{\dd F_{0,1}(D)}{\dd x_1}}{x(p_1)-x(p_i)}
= L(x_1).F_{0,1}(D),
$$

$$
\bigg(\frac{\dd}{\dd x_1}\bigg)^2 S_0 + \frac{2}{\alpha_1^2}\frac{\dd}{\dd x_1}S_0\frac{\dd}{\dd x_1}S_1 -\sum_{i=2}^r  \frac{\frac{\dd S_0}{\dd x_i} +\frac{\alpha_i}{\alpha_1} \frac{\dd S_0}{\dd x_1}}{x(p_1)-x(p_i)}
= L(x_1).S_0,
$$

$$
[\hbar]\bigg(\hbar^2\bigg(\bigg(\frac{\dd}{\dd x_1}\bigg)^2 S + \frac{1}{\alpha_1^2}\frac{\dd}{\dd x_1}S\frac{\dd}{\dd x_1}S -\sum_{i=2}^r  \frac{\frac{\dd S}{\dd x_i} +\frac{\alpha_i}{\alpha_1} \frac{\dd S}{\dd x_1}}{x(p_1)-x(p_i)}
- L(x_1).S-L(x_1).F\bigg)-R(x_1)\bigg)=0.
$$

$\bullet$ For $(g,n)=(0,2)$ we first rewrite \eqref{loopeq}:
\begin{multline}
P_{0,2}(x(z),z_1,z_2)-2y(z)\frac{\omega_{0,3}(z,z_1,z_2)}{\dd x}=\\
-\frac{B(z,z_1)B(-z,z_2)}{(\dd x)^2}-\frac{B(-z,z_1)B(z,z_2)}{(\dd x)^2}+\dd_1\frac{B(z_2,-z_1)}{(x-x_1)\dd x_1}+\dd_2\frac{B(z_1,-z_2)}{(x-x_2)\dd x_2}= \\
 2\frac{B(z,z_1)B(z,z_2)}{(\dd x)^2} + \dd_1 \frac{1}{x-x_1}\bigg(\frac{B(z_2,-z_1)}{\dd x_1}-\frac{B(z,z_2)}{\dd x}\bigg)+ \dd_2 \frac{1}{x-x_2}\bigg(\frac{B(z_1,-z_2)}{\dd x_2}-\frac{B(z,z_1)}{\dd x}\bigg)= \\
 2\frac{B(z,z_1)B(z,z_2)}{(\dd x)^2} - \dd_1 \frac{1}{x-x_1}\bigg(\frac{B(z_2,z_1)}{\dd x_1}+\frac{B(z,z_2)}{\dd x}\bigg)- \dd_2 \frac{1}{x-x_2}\bigg(\frac{B(z_1,z_2)}{\dd x_2}+\frac{B(z,z_1)}{\dd x}\bigg) \\
+ \dd_1\dd_2 \frac{1}{x-x_1}\frac{1}{x-x_2}.
\end{multline}
Now we integrate twice over $D$:
\begin{multline}\label{intintP}
\int_D\int_D P_{0,2}(x(z),z_1,z_2)-2F_{0,1}'(z;D)F_{0,3}'(z;D)=\\
2 (F'_{0,2}(z;D))^2- 2\sum_{i=1}^r \alpha_i \frac{1}{x-x_i}F_{0,2}'(z;D)-\sum_{i=1}^r\frac{2\alpha_i}{x-x_i}\sum_{j\neq i} \alpha_j\bigg(\frac{\dd E(p_i,p_j)}{\dd x_i}-\frac{1}{x_i-x_j}\bigg).
\end{multline}
We introduce $\hat{F}_{0,2}'(z;D)=F_{0,2}'(z;D)-\alpha_1\frac{\dd E(z,p_1)}{\dd x}$ and obtain:
\begin{multline}
2 (\hat{F}'_{0,2}(z;D))^2+ 4\alpha_1\hat{F}_{0,2}'(z;D)\frac{\dd E(z,p_1)}{\dd x} + 2\alpha_1^2\frac{(\dd E(z,p_1))^2}{(\dd x)^2}- \frac{2\alpha_1}{x-x_1}\hat{F}_{0,2}'(z;D) - \\
  \frac{2\alpha_1^2}{x-x_1}\frac{\dd E(z,p_1)}{\dd x} 
- 2\sum_{i\neq 1}\frac{\alpha_i}{x-x_i}\hat{F}_{0,2}'(z;D) - 
2\sum_{i\neq 1} \frac{\alpha_1\alpha_i}{x-x_i}\frac{\dd E(z,p_1)}{\dd x}-\\
 \frac{1}{x-x_1}\frac{\dd}{\dd x_1} F_{0,2}(D) -\sum_{i\neq 1} \frac{1}{x-x_i}\frac{\dd}{\dd x_i} F_{0,2}(D)+ 2\sum_{\substack{i\neq j \\ i\neq 1}}\frac{\alpha_i\alpha_j}{(x-x_i)(x_i-x_j)}+ 2\sum_{j\neq 1}\frac{\alpha_1\alpha_j}{(x-x_1)(x_1-x_j)}.
\end{multline}

Observe that 
\begin{equation}\label{limit}
\lim_{z\rightarrow p_1}\bigg(-\sum_{i\neq 1} \frac{\alpha_1\alpha_i}{x-x_i}\frac{\dd E(z,p_1)}{\dd x}+\sum_{j\neq 1}\frac{\alpha_1\alpha_j}{(x-x_1)(x_1-x_j)}\bigg)=\sum_{i\neq 1}\frac{\alpha_1\alpha_i}{(x_1-x_i)^2}.
\end{equation}

Using this, at $z=p_1$ we obtain
\begin{multline}
\frac{1}{2\alpha_1^2} \bigg(\frac{\dd F_{0,2}(D)}{\dd x_1}\bigg)^2+ \bigg(\frac{\dd}{\dd x_1}\bigg)^2 F_{0,2}(D)- \sum_{i\neq 1}\frac{1}{x_1-x_i}\bigg(\frac{\dd F_{0,2}(D)}{\dd x_i}+\frac{\alpha_i}{\alpha_1}\frac{\dd F_{0,2}(D)}{\dd x_1}\bigg) + \\
2\alpha_1^2\mathcal{S}(p_1) + 2\sum_{\substack{i\neq j \\ i\neq 1, j\neq 1}} \frac{\alpha_i\alpha_j}{(x_1-x_i)(x_i-x_j)},
\end{multline}
where we have called $\mathcal{S}(p_1)$ the limit
$$
\lim_{z\rightarrow p_1}\frac{\dd E(z,p_1)}{\dd x(z)}\bigg(\frac{\dd E(z,p_1)}{\dd x(z)}-\frac{1}{x(z)-x(p_1)}\bigg).
$$
Using \eqref{operatorLeq} in this special case, together with Lemma~\ref{Lemmaintint02}, we obtain
\beq
\int_D\int_D P_{0,2}(x;z_1,z_2)= \int_D\int_D L(x).\omega_{0,2}(z_1,z_2) = L(x).F_{0,2}(D).
\eeq
For the first term of \eqref{intintP}, we thus obtain:
\begin{multline}
L(x_1).\frac{F_{0,2}(D)}{2} = \frac{1}{3\alpha_1^2}\frac{\dd}{\dd x_1}F_{0,1}(D)\frac{\dd}{\dd x_1}F_{0,3}(D)+\frac{1}{4\alpha_1^2} \left(\frac{\dd F_{0,2}(D)}{\dd x_1}\right)^2+ \frac{1}{2}\left(\frac{\dd}{\dd x_1}\right)^2 F_{0,2}(D)+\\
\alpha_1^2\mathcal{S}(p_1)-
 \frac{1}{2}\sum_{i\neq 1}\frac{1}{x_1-x_i}\left(\frac{\dd F_{0,2}(D)}{\dd x_i}+\frac{\alpha_i}{\alpha_1}\frac{\dd F_{0,2}(D)}{\dd x_1}\right) + \sum_{\substack{i\neq j \\ i\neq 1, j\neq 1}} \frac{\alpha_i\alpha_j}{(x_1-x_i)(x_i-x_j)}.
\end{multline}

$\bullet$ For $(g,n)=(1,0)$ we get
$$
\frac{-B(z,-z)}{\dd x(z)^2} + 
2F'_{0,1}(z;D)F'_{1,1}(z;D) \ 
= L(x_1).F_{1,0}(D). \nonumber
$$
At $z=p_1$ this gives
$$
\frac{-B(p_1,-p_1)}{\dd x_1^2}+
2 \frac{1}{\alpha_1^2} \frac{\dd F_{0,1}(D)}{\dd x_1} \frac{\dd F_{1,1}(D)}{\dd x_1} 
= L(x_1).F_{1,0}(D).\nonumber
$$
Using that $\frac{B(p_1,-p_1)}{\dd x_1^2}= \mathcal{S}(p_1)$ and summing the expressions for $(0,2)$ and $(1,0)$, we obtain:
\begin{multline}
\bigg(\frac{\dd}{\dd x_1}\bigg)^2 S_{1} +\frac{1}{\alpha_1^2}\bigg(\frac{\dd}{\dd x_1}S_{1}\bigg)^2 +\frac{2}{\alpha_1^2}\frac{\dd}{\dd x_1}S_{0}\frac{\dd}{\dd x_1}S_{2} -\sum_{i=2}^r  \frac{\frac{\dd S_{1}}{\dd x_i} +\frac{\alpha_i}{\alpha_1} \frac{\dd S_{1}}{\dd x_1}}{x(p_1)-x(p_i)} \\
+(\alpha_1^2-1)\mathcal{S}(p_1) + \sum_{\substack{i< j \\ i\neq 1}} \frac{\alpha_i\alpha_j}{(x_1-x_i)(x_1-x_j)} = L(x_1).S_{1}+L(x_1).F_{1},
\end{multline}
that is
\begin{multline}
[\hbar^{2}]\bigg[\hbar^2\bigg(\bigg(\frac{\dd}{\dd x_1}\bigg)^2 S +  \frac{1}{\alpha_1^2}\frac{\dd}{\dd x_1}S\frac{\dd}{\dd x_1}S -\sum_{i=2}^r  \frac{\frac{\dd S}{\dd x_i} +\frac{\alpha_i}{\alpha_1} \frac{\dd S}{\dd x_1}}{x(p_1)-x(p_i)} +(\alpha_1^2-1)\mathcal{S}(p_1) +(\star)\bigg)\bigg] = \\
[\hbar^{2}]\bigg[ \hbar^2 (L(x_1).S+ L(x_1).F)+R(x_1) \bigg],
\end{multline}
with 
\beq\label{star}
(\star)=\sum_{\substack{i\neq j \\ i\neq 1, j\neq 1}} \frac{\alpha_i\alpha_j}{(x_1-x_i)(x_i-x_j)}.
\eeq
From the assumption $\alpha_1 =\frac{1}{\alpha_1}$, and summing over all topologies, we get the claim.
\end{proof}

\begin{remark}
Very often in the literature a different convention is used to regularize the $(0,2)$ term of the wave function:
$$
\tilde{\psi}(D,t,\hbar)\coloneqq \exp{\bigg(\tilde{S}_1(D,t)+\sum_{m\geq 0,m\neq 1} \hbar^{m-1}S_m(D,t)\bigg)},
$$
where $\tilde{S}_1(D,t)\coloneq \frac{1}{2}\int_D\int_D\big(B(z_1,z_2)-\frac{\dd x(z_1)\dd x(z_2)}{(x(z_1)-x(z_2))^2}\big)$. Using \eqref{02convention}, we obtain that the relation to our wave function is the following
$$
\psi(D,t,\hbar)=\tilde{\psi}(D,t,\hbar) \cdot \prod_{i<j}(x_i-x_j)^{\alpha_i\alpha_j}.
$$
\end{remark}

\subsection{PDE for Airy curve}

In this particular case, we had $P_{g,n}=0$.

Therefore, in this case we obtain the following system of PDEs:
\beq
\hbar^2\Bigg(\frac{\dd^2}{\dd x_k^2} -\sum_{i\neq k} \frac{\frac{\dd}{\dd x_i} + \frac{\alpha_i}{\alpha_k}\frac{\dd}{\dd x_k}}{x_k-x_i}+\sum_{\substack{i\neq j \\ i\neq k, j\neq k}} \frac{\alpha_i\alpha_j}{(x_k-x_i)(x_i-x_j)}\Bigg)\psi= x_k\psi,
\eeq
for every $k=1,\ldots,r$.

\begin{example}
Considering the divisor $D=[z_1]-[z_2]$, sending $z_2\rightarrow \infty$ and regularizing the $(0,1)$ factor of the wave function, we recover the Airy quantum curve from our PDE for $k=1$, with $x=x_1$:
$$
\bigg(\hbar^2\frac{\dd^2}{\dd x^2}-x\bigg)\psi=0.
$$
\end{example}

\subsection{PDE for Painlev\'e case} In this case we have $P_{g,n}=\frac{\partial}{\partial t}\omega_{g,n}(z_1,\ldots,z_n)$. 

Therefore, we obtain the following PDE:
$$
\hbar^2\bigg(\frac{\dd^2}{\dd x_k^2} -\sum_{i\neq k} \frac{\frac{\dd}{\dd x_i} + \frac{\alpha_i}{\alpha_k}\frac{\dd}{\dd x_k}}{x_k-x_i}- \frac{\partial}{\partial t}- \frac{\partial}{\partial t} F+(\star)\bigg)\psi(D)= (x_k^3+tx_k+V)\psi=\big(x_k^3+tx_k+\frac{\partial}{\partial t}\omega_{0,0}\big)\psi, 
$$
for every $k=1,\ldots,r$, where $(\star)$ is given by $\eqref{star}$.

\subsection{Reduced equation}

\medskip

Consider a divisor $D=[z]-[z']$ with 2 points, and call $x=x(z), x'=x(z')$.
The equation we have obtained is a PDE: it involves both $\dd/\dd x$ and $\dd/\dd x'$, as well as partial derivatives with respect to times when $L(x)\neq 0$. 
Let us show here that it is possible to eliminate $\dd/\dd x'$ and arrive to an equation involving only $\dd/\dd x$, as well as possibly times derivatives.

Define
\beq
\tilde\psi(z,z') \coloneqq (x-x')\psi([z]-[z'],t,\hbar) e^{F}.
\eeq
Define the differential operators
\bea
\mathcal D & \coloneqq & \hbar^2\frac{\dd^2}{\dd x^2}- \hbar^2L(x)-R(x), \label{operatorD} \\
\mathcal D' & \coloneqq & \hbar^2\frac{\dd^2}{\dd x'^2}- \hbar^2L(x')-R(x').
\eea
Equation \eqref{eqPDEgeneralL}  is equivalent to
\beq\label{eqKD}
\mathcal D \tilde\psi = \frac{\hbar^2}{x-x'} \left(\frac{\dd}{\dd x}+\frac{\dd}{\dd x'}\right)\tilde\psi = -\mathcal D' \tilde\psi.
\eeq
In particular this implies
\beq
\hbar^2 \frac{\dd}{\dd x'}\tilde\psi = -\hbar^2\frac{\dd}{\dd x}\tilde\psi+(x-x')\mathcal D \tilde\psi,
\eeq
and applying $\dd/\dd x'$ again we find
\bea
\hbar^2 \frac{\dd^2}{\dd x'^2}\tilde\psi 
&=& -\mathcal D \tilde\psi -\hbar^2 \frac{\dd}{\dd x}\frac{\dd}{\dd x'}\tilde\psi+(x-x')\mathcal D \frac{\dd}{\dd x'} \tilde\psi \nonumber \\
&=&\hbar^2 \frac{\dd^2}{\dd x^2}\tilde\psi -2\mathcal D \tilde\psi -(x-x') \Big( \frac{\dd}{\dd x}\mathcal D+\mathcal D\frac{\dd}{\dd x}-\hbar^{-2}\mathcal D(x-x')\mathcal D  \Big) \tilde\psi . \nonumber
\eea
Therefore 
\bea
0 &=& (\mathcal D+\mathcal D')\tilde\psi  \nonumber \\
&=& \mathcal D \tilde\psi +\Big(\hbar^2 \frac{\dd^2}{\dd x'^2} - R(x')- \hbar^2L(x')\Big)\tilde\psi \nonumber \\
&=& \bigg(\hbar^2 \frac{\dd^2}{\dd x^2}-\mathcal D - R(x')- \hbar^2L(x') - (x-x') \Big( \frac{\dd}{\dd x}\mathcal D+\mathcal D\frac{\dd}{\dd x}-\hbar^{-2} \mathcal D(x-x')\mathcal D  \Big)\bigg)\tilde\psi \nonumber \\
&=& \bigg(R(x)-R(x')+ \hbar^2(L(x)-L(x'))   -(x-x') \Big( \frac{\dd}{\dd x}\mathcal D+\mathcal D\frac{\dd}{\dd x}-\hbar^{-2} \mathcal D(x-x')\mathcal D  \Big)\bigg) \tilde\psi \nonumber \\
&=& \bigg(R(x)-R(x')+ \hbar^2(L(x)-L(x')) -(x-x') \Big( -\frac{\dd}{\dd x}\mathcal D+\mathcal D\frac{\dd}{\dd x}-\hbar^{-2} (x-x')\mathcal D^2  \Big)\bigg) \tilde\psi \nonumber \\
&=& \bigg(R(x)-R(x')+ \hbar^2(L(x)-L(x')) -(x-x') \Big( \frac{\dd R(x)}{\dd x}+\hbar^2 \frac{\dd L(x)}{\dd x}-\hbar^{-2} (x-x')\mathcal D^2  \Big)\bigg) \tilde\psi, \nonumber
\eea
and thus
\beq
\frac{R(x)-R(x')}{x-x'}\tilde\psi+ \hbar^2\frac{L(x)-L(x')}{x-x'}\tilde\psi =\bigg( \frac{\dd R(x)}{\dd x}+ \hbar^2\frac{\dd L(x)}{\dd x}-\hbar^{-2}(x-x')\mathcal D^2  \bigg) \tilde\psi.
\eeq
Finally,
\beq
\mathcal D^2 \tilde\psi = \frac{\hbar^2}{x-x'}\bigg(\frac{R(x)-R(x')}{x-x'}+ \hbar^2\frac{L(x)-L(x')}{x-x'} - \frac{\dd R(x)}{\dd x}- \hbar^2\frac{\dd L(x)}{\dd x}\bigg) \tilde\psi.
\eeq
This equation is a PDE, with rational coefficients $\in \mathbb C(x)$, involving $\dd/\dd x$ and $\partial/\partial t_k$s but no $\dd/\dd x'$ anymore.

Notice that the right hand side is of order $O(\hbar^2)$ in the limit $\hbar\to 0$, and $\mathcal D\to \hat y^2-R(x)$, where $\hat y=\hbar \dd/\dd x$.

\section{Quantum curves}\label{section5}

%The goal now is to prove that $\psi(D)$ obeys an isomonodromic system type of equation, and in particular this implies the existence of a quantum curve $\hat P(\hat{x},\hat y,\hbar)$ that annihilates $\psi$.
The goal now is to prove that $\psi(D)$ obeys an isomonodromic system of differential equations, and in particular this implies the existence of a quantum curve $\hat P(\hat{x},\hat y,\hbar)$ that annihilates $\psi(D)$.
To this purpose, we first prove that $\psi([z]-[z'])$ coincides with the integrable kernel of an isomonodromic system.
The way to prove it generalizes the method of \cite{bergre2009determinantal}, i.e.~first proving that the ratio of $\psi$ and the integrable kernel has to be a formal series of the form $1+O(z'^{-1})$ and then showing that the only solution of equations \eqref{eqPDEgeneralL} which has that behavior implies that the ratio must be $1$.

\subsection{Painlev\'e I (genus 0 case)}

We shall prove that $\psi([z]-[z'])$ coincides with the integrable kernel associated to the Painlev\'e I kernel.

Consider a solution of the Painlev\'e system \eqref{eqPainleveLax}, written as
$$
\Psi(x) = \begin{pmatrix}
A(x) & B(x) \cr
\tilde A(x) & \tilde B(x)
\end{pmatrix},
\quad
\det \Psi(x)=1,
$$
i.e.~$\Psi(x)$ satisfying \eqref{eqPainleveLax}:
$$
\bigg(\hbar \frac{\partial}{\partial x}- \mathcal L(x,t;\hbar)\bigg)\Psi(x) = 0 
\; , \quad
\bigg(\hbar \frac{\partial}{\partial t}- \mathcal R(x,t;\hbar)\bigg)\Psi(x) = 0.
$$
Define $A(x), \tilde A(x),B(x),\tilde B(x) $ as WKB $\hbar$-formal series solutions, with leading orders
\bea
A(x) &\sim & \frac{i}{\sqrt{2z}}\, e^{\hbar^{-1}\int_0^{x} y\dd x} (1 + O(\hbar)), \nonumber \\
B(x) &\sim & \frac{i}{\sqrt{2z}}\, e^{-\hbar^{-1}\int_0^{x} y\dd x} (1 + O(\hbar)), \nonumber \\
\tilde A(x) &\sim & i\sqrt{\frac{z}{2}}\, e^{\hbar^{-1}\int_0^{x} y\dd x} (1 + O(\hbar) ),\nonumber \\
\tilde B(x) &\sim & -i\sqrt{\frac{z}{2}}\, e^{-\hbar^{-1}\int_0^{x} y\dd x}  (1+ O(\hbar)), \nonumber
\eea
and with each coefficient of higher powers of $\hbar$ in $(1+O(\hbar))$ being a polynomial of $1/z$ that tends to 0 as $z\to\infty$. The choice of normalisation constants for the WKB solutions $C_A=C_B=\frac{i}{\sqrt{2}}$ is made such that $\det \Psi =-2C_A C_B=1$ and $C_A= C_B$.
The integrable kernel is defined as (a WKB formal series of $\hbar$):
\beq
K(x,x')\coloneqq\frac{A(x)\tilde B(x')-\tilde A(x)B(x')}{x-x'}.
\eeq
From the isomonodromic system \eqref{eqPainleveLax}, one can verify that this kernel obeys the same equation \eqref{eqPDEgeneralL} as $\psi([z]-[z'])$.
In fact, they are equal:

\begin{theorem}

We have, as formal WKB power series of $\hbar$
\beq
\psi([z]-[z'],t,\hbar) = \frac{A(x)\tilde B(x')-\tilde A(x)B(x')}{x-x'},
\eeq
where $x=x(z)$ and $x'=x(z')$.

\end{theorem}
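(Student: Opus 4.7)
The plan follows the strategy sketched right before the statement, which adapts the method of \cite{bergre2009determinantal}. The idea is to show that both sides satisfy the same PDE, that their ratio is a formal WKB series with a prescribed normalization at $z'\to\infty$, and then to invoke uniqueness of such solutions to conclude equality.

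First, I would verify that the integrable kernel $K(x,x')$ satisfies the same PDE \eqref{eqPDEgeneralL} (equivalently, the reduced equation from Section~4.3) that the two-point wave function $\psi([z]-[z'])$ is known to satisfy by our main theorem. Starting from $\hbar\partial_x\Psi=\mathcal L(x,t;\hbar)\Psi$, one gets $\hbar^2\partial_x^2\Psi=(\hbar\partial_x\mathcal L+\mathcal L^2)\Psi$, and the spectral relation $\det(y\,\mathrm{Id}-\mathcal L_0(x,t))=y^2-R(x)$ together with the compatibility with $\mathcal R$ produces exactly the operator $\mathcal D=\hbar^2\partial_x^2-\hbar^2 L(x)-R(x)$ of \eqref{operatorD}. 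A direct calculation using $\det\Psi=1$ then shows that the combination $K(x,x')=(A\tilde B-\tilde A B)/(x-x')$ obeys $\mathcal D K=\frac{\hbar^2}{x-x'}(\partial_x+\partial_{x'})K=-\mathcal D'K$, which is precisely \eqref{eqKD}; this is done in detail in the Appendix referred to in the introduction, so I would quote that computation.

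Next, I would form the ratio
\[
\rho(z,z',\hbar)\coloneqq \frac{\psi([z]-[z'],t,\hbar)}{K(x(z),x(z'))}
\]
and show that, order by order in $\hbar$, it is a formal power series of the form $1+O(z'^{-1})$ as $z'\to\infty$. The leading WKB exponents match: both sides carry $\exp\!\big(\hbar^{-1}(S_0([z])-S_0([z']))\big)=\exp\!\big(\hbar^{-1}\int_{x'}^{x} y\,\dd x\big)$, by the prescribed leading behaviors of $A,B,\tilde A,\tilde B$ on one hand, and by the definition of $S_0(D)=\int_D y\,\dd x$ on the other. The prefactor $1/(x-x')$ coming from the regularization of the $(0,2)$ contribution to $\psi$ matches the explicit $1/(x-x')$ in $K$. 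Finally, the hypothesis that each $(1+O(\hbar))$ correction in $A,B,\tilde A,\tilde B$ is a polynomial in $1/z'$ vanishing at infinity pairs with the vanishing of $\omega_{g,n}(\cdots,z')=O(z'^{-2})$ at $x(z')=\infty$, so every coefficient $\rho_m$ tends to $1$ when $z'\to\infty$ inside its formal $\hbar$-expansion.

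Finally, I would use uniqueness. Both $\psi$ and $K$ satisfy the same second-order (in $\partial_x$) linear PDE with meromorphic coefficients, together with the companion equation in $z'$, and the WKB recursion determined by this PDE is a first-order recursion for the coefficients $S_m$ once the leading exponent is fixed. Two WKB solutions with identical leading exponent and identical asymptotic normalization at $z'\to\infty$ must agree to all orders in $\hbar$. Hence $\rho\equiv 1$, which is the claimed identity. The main obstacle I anticipate is the bookkeeping in the second step: matching the precise regularizations of the $(g,n)=(0,1)$ and $(0,2)$ pieces in $\psi$ with the specific normalizations chosen for $A,B,\tilde A,\tilde B$, so that the ratio really is $1$ (and not an $\hbar$-dependent constant) at leading order in $1/z'$; this is essentially a careful comparison of prefactors and of the base point $z_0=\sqrt{3u}$ versus the chosen integration base point $0$ in the WKB ansatz.
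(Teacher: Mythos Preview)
Your overall strategy matches the paper's: both sides satisfy the same PDE (the paper also defers this to the Appendix), the ratio $H=\rho$ is a formal series $1+O(\hbar)$ whose coefficients vanish at $z'\to\infty$, and then one argues $H\equiv 1$. Steps 1 and 2 are fine, modulo the bookkeeping you already flag (the $e^F$ factor is what makes $\tilde K$ satisfy exactly \eqref{eqKD}; this cancels in the ratio, so it does not affect the argument).

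The gap is in your final step. Appealing to ``WKB uniqueness'' via a first-order recursion for the $S_m$ is not enough here: the equation \eqref{eqKD} is a PDE in $x$, $x'$ and the times, and the recursion it induces does not by itself single out a unique solution with the stated asymptotics --- you have not specified which equation governs the subleading coefficients of $H$ in $1/z'$, nor why the normalization at $z'\to\infty$ kills the freedom. The paper makes this precise by deriving a PDE for $H$ itself (substitute $\tilde\psi=H\,\tilde K\,e^F$ into \eqref{eqKD} and use that $\tilde K\,e^F$ already solves it), obtaining an equation of the form
\[
\frac{\dd^2 H}{\dd x'^2} + 2\frac{\dd\ln B(x')}{\dd x'}\frac{\dd H}{\dd x'} + (\text{lower order in }z') = \frac{1}{x'-x}\Big(\frac{\dd}{\dd x}+\frac{\dd}{\dd x'}\Big)H .
\]
The point is that $\dd\ln B(x')/\dd x'\sim -\hbar^{-1}y(z')$ grows at infinity, so if one writes $H=1+H_M(z)z'^{-M}+O(z'^{-M-1})$ with $M\geq 1$ minimal, the dominant term at large $z'$ comes only from $2y(z')\,\partial_{x'}H$ and forces $H_M=0$, a contradiction. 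This explicit dominant-balance argument is the mechanism that replaces your appeal to uniqueness; without it the proof is incomplete.
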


\begin{proof}

Define the ratio
\beq
H(z,z') \coloneq \frac{(x-x')\psi([z]-[z'],t,\hbar)}{A(x)\tilde B(x')-\tilde A(x)B(x')}.
\eeq
It is a formal series of $\hbar$ whose coefficients are rational functions  of $z$ and $z'$. The leading orders show that
$$
H(z,z') = 1+O(\hbar).
$$
More explicitly, we have
\bea
H(z,z') & = & \frac{(z^2-z'^2)\frac{1}{(z-z')\sqrt{2z2z'}}\sum\hbar^kc_k\big(\frac{1}{z},\frac{1}{z'}\big)}{\frac12\sqrt{\frac{z'}{z}}\Big(\sum\hbar^k\alpha_k\big(\frac{1}{z}\big)\Big)\Big(\sum\hbar^{\ell}\tilde{\beta}_{\ell}\big(\frac{1}{z'}\big)\Big)+\frac12\sqrt{\frac{z}{z'}}\Big(\sum\hbar^k\tilde{\alpha}_k\big(\frac{1}{z}\big)\Big)\Big(\sum\hbar^{\ell}\beta_{\ell}\big(\frac{1}{z'}\big)\Big)} \nonumber \\
& = &\frac{\sum\hbar^kc_k\big(\frac{1}{z},\frac{1}{z'}\big)}{\bigg(\frac{1}{z}\Big(\sum\hbar^k\alpha_k\big(\frac{1}{z}\big)\Big)\Big(\sum\hbar^{\ell}\tilde{\beta}_{\ell}\big(\frac{1}{z'}\big)\Big)+\frac{1}{z'}\Big(\sum\hbar^k\tilde{\alpha}_k\big(\frac{1}{z}\big)\Big)\Big(\sum\hbar^{\ell}\beta_{\ell}\big(\frac{1}{z'}\big)\Big)\bigg) \Big/ \big(\frac{1}{z}+\frac{1}{z'}\big)\bigg.}.
\eea
Since we are in the hyperelliptic case, we have that $\beta_k\big(\frac{1}{z}\big)=\alpha_k\big(\frac{-1}{z}\big)$ and $\tilde{\beta}_k\big(\frac{1}{z}\big)=\tilde{\alpha}_k\big(\frac{-1}{z}\big)$, which implies that the first factor of the denominator is divisible by $\frac{1}{z}+\frac{1}{z'}$. Hence the series at the denominator is invertible.

Therefore, at each order of $\hbar$, the coefficient is a polynomial of $1/z,1/z'$, which tends to $0$ at $z,z'\to\infty$:
$$
H(z,z')-1 \in \frac{1}{zz'} \mathbb C[z^{-1},z'^{-1}][[\hbar]].
$$
We shall prove that $H=1$ by following the method of \cite{bergre2009determinantal}. Let us assume that $H\neq 1$, and write
$$
H(z,z') = 1+H_M(z) z'^{-M} + O(z'^{-M-1}),
$$
where $M\geq 1$ is the smallest possible power of $z'$ whose coefficient $H_M$ would be $\neq 0$ as a formal series of $\hbar$. Let $\tilde{K}(x,x')\coloneqq A(x)\tilde B(x')-\tilde A(x)B(x')$. Using equation~\eqref{eqKD} for $\tilde{\psi}(z,z')=H(z,z')\tilde{K}(x,x')e^F$, together with the fact that $\overline{K}(x,x')\coloneqq\tilde{K}(x,x')e^F$ satisfies the same equation~\eqref{eqKD} (which is proved in Appendix~\ref{appendix} in general), we obtain the following equation for $H=H(z,z')$:
\begin{align}\label{eqDHreduced}
&\frac{\dd^2 H}{\dd x'^2} +  2\frac{\dd\ln B(x')}{\dd x'}\frac{\dd H}{\dd x'} +   2\frac{\dd\ln (\tilde A(x)/A(x)-\tilde B(x')/B(x'))}{\dd x'}\frac{\dd H}{\dd x'} \nonumber \\
& \quad\quad -\tilde{K}(x,x')(H L(x').\tilde{K}(x,x')-H\tilde{K}(x,x')L(x'). F)=
 \frac{1}{x'-x} \left( \frac{\dd}{\dd x}+\frac{\dd}{\dd x'}\right)H,
\end{align}
whose leading power of $z'$ comes only from the second term and is
\beq
2y(z')H_M(z)z'^{-M-2}+O(z'^{-M-2})=2H_M(z)z'^{-M-1}+O(z'^{-M-2})=0,
\eeq
implying that $H_M=0$, and thus contradicting the hypothesis that $H\neq 1$. 
This proves the theorem.
\end{proof}

As a corollary this implies that
\beq
\lim_{z'\to\infty} \frac{(x(z)-x(z'))\psi([z]-[z'],t,\hbar)}{\tilde B(x(z'))} = A(x(z)).
\eeq
In other words
\beq
\frac{i}{2\sqrt{z}} e^{\hbar^{-1}\int_0^z \omega_{0,1}} \ e^{\sum_{(g,n)\neq (0,1),(0,2)} \frac{\hbar^{2g-2+n}}{n!} \int_{\infty}^z\dots\int_{\infty}^z \omega_{g,n}} = A(x(z)).
\eeq
In the Painlev\'e system, the function $A(x)$ is annihilated by the quantum curve
\beq
\hat y^2 -\left((x-U)^2(x+2U)+\frac{\hbar^2}{2}\ddot U (x-U)+\frac{\hbar^2}{4}\dot U^2\right) + \frac{\hbar^2}{2(x-U)} \dot U - \frac{\hbar}{x-U} \hat y.
\eeq

\subsection{General genus 0 case}

The same argument applies to any isomonodromic system thanks to the fact that both the $2$-point wave function $\tilde{\psi}=(x-x')\psi e^{F}$ constructed from topological recursion and the integrable kernel $\overline{K}(x,x')=\tilde{K}(x,x')e^F$ satisfy the same PDE \eqref{eqKD}. The first claim is the main result of the article and the second claim was proved in Appendix~\ref{appendix} in general, since we could not find this result in the literature for a general isomonodromic system.
Consider that we have a Lax system of the type
\beq\label{isoSystem_g0}
\left\{\begin{array}{rcl}
\hbar \frac{\partial}{\partial x}\Psi(x) &=& \mathcal L(x;\hbar)\Psi(x) , \\[0.2em]
\hbar \frac{\partial}{\partial t_k}\Psi(x) &=& \mathcal R_k(x;\hbar)\Psi(x).
\end{array}\right.
\eeq
whose spectral curve in the limit $\hbar\to 0$ is a genus zero curve of the form $\det(y-\mathcal L_0(x))=y^2-R(x)=0$, and which has a WKB formal power series solution of the form
\bea
A(x) &\sim & \frac{i}{\sqrt{2\dd x/\dd z}}\, e^{\hbar^{-1}\int_0^{x} y\dd x} (1 + O(\hbar)), \nonumber \\
B(x) &\sim & \frac{i}{\sqrt{2\dd x/\dd z}}\, e^{-\hbar^{-1}\int_0^{x} y\dd x} (1 + O(\hbar)), \nonumber \\
\tilde A(x) &\sim & i\sqrt{\frac12 \dd x/\dd z}\, e^{\hbar^{-1}\int_0^{x} y\dd x} (1 + O(\hbar) ),\nonumber \\
\tilde B(x) &\sim & -i\sqrt{\frac12 \dd x/\dd z}\, e^{-\hbar^{-1}\int_0^{x} y\dd x}  (1+ O(\hbar)), \nonumber
\eea
where each coefficient of higher powers of $\hbar$ in $(1+O(\hbar))$ is a rational function of $z$ that tends to 0 at poles $z\to \zeta$, where we have chosen a pole $x(\zeta)=\infty$. This pole has degree $-d=1$ or $-d=2$, and $\xi=x^{1/d}$ is a local coordinate near the pole.
We emphasize that for all genus 0 spectral curves where $y^2=P_{\text{odd}}(x)$ with $P_{\text{odd}}(x)$ an odd polynomial of $x$, such systems are explicitely known as Gelfand--Dikii systems \cite{Eynbook} described in Section~\ref{sec:GD} below, and for more general spectral curves (genus $0$, possibly with non-empty Newton polygon), it was proved in \cite{MarchalOrantin} that a $2\times 2$ autonomous system $\mathcal L_0$ always admits an $\hbar$-deformation $\mathcal L$ with this property.

We define the following formal series of $\hbar$:
$$
H(z,z') \coloneq \frac{(x(z)-x(z'))\psi([z]-[z'])}{A(x)\tilde B(x')-\tilde A(x)B(x')} = 1+O(\hbar).
$$
We have proved that the integrable kernel $\overline{K}(x,x')$ associated to the Lax system satisfies equation \eqref{eqKD} (see Theorem~\ref{thmApp}), and thus 
$H$ satisfies  the PDE \eqref{eqDHreduced}.

Moreover the coefficients of $H$ are analytic functions of $z'$, which tend to $0$ at $z'\to\zeta$. Let us write 
$H(z,z')=1+O(x'^{1/d})$.
The subleading coefficient of $H=1+H_M(z) x'^{M/d}+O(x'^{M/d-1})$ must satisfy $y(z')H_M(z)x'^{M/d-1}=O(x'^{M/d-2})$, and therefore $H=1$.

This implies that $\psi([z]-[z'])$ coincides with the integrable kernel
$$
\psi([z]-[z']) = \frac{A(x(z))\tilde B(x(z'))-\tilde A(x(z))B(x(z'))}{x(z)-x(z')}.
$$

Then taking the limit $z'\to\zeta$ this implies that
$$
\lim_{z'\to\zeta} \frac{(x(z)-x(z'))\psi([z]-[z'])}{\tilde B(x(z'))} = A(x(z)).
$$
Knowing that $A(x),\tilde A(x)$ satisfy an isomonodromic system with first equation
\beq
\hbar \frac{\partial}{\partial x} \begin{pmatrix}A(x) \cr \tilde A(x)\end{pmatrix} =\mathcal L(x,t,\hbar) \begin{pmatrix}A(x) \cr \tilde A(x)\end{pmatrix},
\eeq
with
\beq
\mathcal L(x,t,\hbar) = \begin{pmatrix}
\alpha(x,t,\hbar) & \beta(x,t,\hbar) \cr
\gamma(x,t,\hbar) & \delta(x,t,\hbar)
\end{pmatrix},
\eeq
where $\alpha,\beta,\gamma,\delta$ are rational functions of $x$, with coefficients being formal power series of $\hbar$, 
we get the quantum curve annihilating $A(x)$:
\beq
\hat y^2 -(\alpha+\delta)\hat y +(\alpha\delta-\beta\gamma) - \hbar\left( \dd\alpha/\dd x - \alpha\frac{\dd \beta/\dd x}{\beta} + \frac{\dd \beta/\dd x}{\beta}\hat y\right).
\eeq
Its classical part $\hbar\to 0$ is indeed the spectral curve
\beq
y^2-(\alpha(x,t,0)+\delta(x,t,0))y (\alpha(x,t,0)\delta(x,t,0)-\beta(x,t,0)\gamma(x,t,0)) = \det (y\mathrm{Id}-\mathcal L_0(x)).
\eeq

\subsection{Higher genus case}

If the curve $y^2=R(x)$ has genus $\hat{g}>0$, it was verified in \cite{Bouchard_Quantizing} (for $\hat{g}=1$), and argued in \cite{BEInt}, that the perturbative wave  function cannot satisfy the quantum curve: in fact just because it is not a function  (order by order in $\hbar$) on  the spectral curve. Indeed, multiple integrals of type $\int_o^z\dots\int_o^z \omega_{g,n}$ are not invariant after $z$ goes around a cycle, and do not transform as Abelian differentials.
It was argued in \cite{EO07, BEInt, GIS} that only the non-perturbative wave function of \cite{Eynard_2009, EMhol} can be a wave function and can obey a quantum curve, and this was proved up to the 3rd non trivial powers of $\hbar$ for arbitrary curves in \cite{BEInt}, and verified to many orders for elliptic curves in~\cite{Bouchard_Quantizing}. 

It is also useful to introduce the partition function, which is independent of any divisor: $Z(\hbar)=\psi(D=\emptyset,\hbar)$, namely
$$
Z(\hbar)\coloneqq \exp \Big(\sum_{g\geq 0}\hbar^{2g-2}\omega_{g,0}\Big),
$$
where the $\omega_{g,0}$ were also denoted by $F_g$. From now on, we omit the dependence on $\hbar$ of both partition and wave functions and we will introduce the dependence on the filling fractions.

Consider a curve $y^2=R(x)$ with genus $\hat{g}>0$ and let $\mathcal A_i\cap \mathcal B_j=\delta_{i,j}$ be a symplectic basis of cycles of $H_1(\Sigma,\mathbb Z)$ (i.e.~integer cycles). We choose the Bergman kernel normalized on $\mathcal A$-cycles.

Recall the 1st kind times 
\beq
\epsilon_j = \frac{1}{2\pi i} \oint_{\mathcal A_j} y\dd x, \;\; j=1,\ldots, \hat{g},
\eeq
and that the $\mathcal B_j$-period of $\omega_{g,n+1}$ is the variation of $\omega_{g,n}$ with respect to $\epsilon_j$:
\beq
\frac{\partial}{\partial\epsilon_j}\omega_{g,n}(z_1,\ldots,z_n) =\oint_{\mathcal B_j} \omega_{g,n+1}(\cdot,z_1,\ldots,z_n).
\eeq

Since \eqref{eqPDEgeneralL} is a linear PDE, any linear combination of solutions is a solution.
Moreover, since the coefficients of the PDE do not involve the times $\epsilon_j$, we remark that shifting $\epsilon_j\to \epsilon_j+n_j$ is another solution, which we denote as follows
\beq
\psi((\epsilon_j\to\epsilon_j+n_j)_j;[z]-[z']).
\eeq
The transseries linear combination introduced in \cite{Eynard_2009, EMhol, BEInt, GIS}
\beq\label{trans}
\hat\psi([z]-[z']) \coloneqq \frac{1}{\hat {\mathcal T}} \sum_{n_1,\dots,n_{\hat{g}}\in \mathbb Z^{\hat{g}}} 
\psi((\epsilon_j\to\epsilon_j+n_j)_j;[z]-[z']) Z((\epsilon_j\to\epsilon_j+n_j)_j),
\eeq
where
\beq
\hat{\mathcal T}\coloneqq\sum_{n_1,\dots,n_{\hat{g}}\in \mathbb Z^{\hat{g}}} 
 Z((\epsilon_j\to\epsilon_j+n_j)_j),
\eeq
is thus also a solution of the same PDE. The construction of the transseries \eqref{trans} is actually a sum over integer 1st kind $\mathcal{B}$-cycles, since
$$
\psi((\epsilon_j);[z+\mathcal{B}_j]-[z'])=e^{\frac{\partial}{\partial\epsilon_j}}\psi((\epsilon_j)_j;[z]-[z']) = \psi((\epsilon_1,\ldots,\epsilon_j\to\epsilon_j+1,\ldots,\epsilon_{\hat{g}});[z]-[z']).
$$
%The combination $\hat\psi$ is hence obviously independent \cite{GIS} of the integration homotopy class from $z'$ to $z$, and it is, order by order as a transseries of $\hbar$ a function of $z$ and $z'$ on the spectral curve.

The combination $\hat\psi$ hence remains invariant \cite{GIS} if we modify the integration homotopy class from $z'$ to $z$ by adding 1st kind $\mathcal{B}$-cycles, and it is, order by order as a transseries of $\hbar$, a function of $z$ and $z'$ on the spectral curve. Shifts by other kinds of cycles of the spectral curve were already trivial for $\psi$, in the sense that they result in the multiplication by simple factors, which one expects from a solution of an isomonodromic system.

From there, the same argument as for the genus zero case applies.
Assume that there is an isomonodromic system 
\beq\label{isoSystem_anyg}
\left\{\begin{array}{rcl}
\hbar \frac{\partial}{\partial x}\Psi(x) &=& \mathcal L(x;\hbar)\Psi(x) , \\[0.2em]
\hbar \frac{\partial}{\partial t_k}\Psi(x) &=& \mathcal R_k(x;\hbar)\Psi(x).
\end{array}\right.
\eeq 
whose associated spectral curve is our spectral curve, with
$$
\Psi(x) = \begin{pmatrix}
A(x) & B(x) \cr
\tilde A(x) & \tilde B(x)
\end{pmatrix}
$$
a formal transseries solution.
Then  the formal transseries
\beq
H(z,z') = \frac{(x(z)-x(z'))\hat{\psi}([z]-[z'])}{A(x)\tilde B(x')-\tilde A(x)B(x')}
=1+O(\hbar)
\eeq
satisfies  the PDE \eqref{eqDHreduced}, and is such that $H(z,z')=1+O(x'^{1/d})$.
The subleading order of $H=1+H_M(z) x'^{M/d}+O(x'^{M/d-1})$ must satisfy $y(z')H_M(z)x'^{M/d-1}=O(x'^{M/d-2})$, and therefore $H=1$.
This implies that
\beq
\hat{\psi}([z]-[z']) = \frac{A(x(z))\tilde B(x(z'))-\tilde A(x(z))B(x(z'))}{x(z)-x(z')}.
\eeq
This also implies that
\beq
\lim_{z'\to\zeta} \frac{(x(z)-x(z'))\hat{\psi}([z]-[z'])}{\tilde B(x(z'))} = A(x(z)).
\eeq
Since $A(x),\tilde A(x)$ satisfy the isomonodromic system
$$
\hbar \frac{\partial}{\partial  x} \begin{pmatrix}A(x) \cr \tilde A(x)\end{pmatrix} =\mathcal L(x) \begin{pmatrix}A(x) \cr \tilde A(x)\end{pmatrix},
$$
where
$$
\mathcal L(x,t,\hbar) = \begin{pmatrix}
\alpha(x,t,\hbar) & \beta(x,t,\hbar) \cr
\gamma(x,t,\hbar) & \delta(x,t,\hbar)
\end{pmatrix},
$$
we find the quantum curve annihiliating $A(x)$:
\beq
\hat y^2 -(\alpha(x)+\delta(x))\hat y +(\alpha(x)\delta(x)-\beta(x)\gamma(x)) - \hbar\left( \alpha'(x) - \alpha(x)\frac{\beta'(x)}{\beta(x)} + \frac{\beta'(x)}{\beta(x)}\hat y\right).
\eeq
Its classical part $\hbar\to 0$ is indeed the equation
\beq
\det (y\mathrm{Id}-\mathcal L(x,t,0))=0.
\eeq

\subsection{Examples: Gelfand--Dikii systems}
\label{sec:GD}
These systems generalize the Painlev\'e I equation; they appear in the enumeration of maps in the large size limit \cite{Eynbook}.
For these Gelfand--Dikii systems, the proof that $\psi([z]-[z'])$ coincides with the integrable kernel (which then implies the quantum curve) can be found in \cite[Chapter 5]{Eynbook}, by another method. Here let us provide another proof with our current method.

The Gelfand--Dikii polynomials are defined as differential polynomials of a function $U(t)$, by the recursion
\beq
R_0(U) = 2  \,, \quad\quad
\frac{\d }{\d t} R_{k+1}(U) = -2U \frac{\d R_k(U)}{\d t} - R_k(U) \frac{\d U}{\d t} + \frac{\hbar^2}{4} \frac{\d ^2 R_k(U)}{\d t^2}.
\eeq
At each step the integration constant is chosen so that $R_k(U)$ is homogeneous in powers of $U$ and $\d ^2/\d t^2$.
The first few are given by
\bea
R_0 & = & 2, \nonumber \\
R_1 & = & -2U, \nonumber \\
R_2 & = & 3 U^2 - \frac{\hbar^2}{2}\ddot U, \nonumber \\
R_3 & = & -5 U^3 + \frac{5\hbar^2}{2} U \ddot U+\frac{5\hbar^2}{4}\dot U^2 - \frac{\hbar^4}{8}\frac{\d ^4U}{\d t^4}.
%\dots
\eea
Let $m\geq 1$ be an integer, and let $\tilde t_0,\tilde t_1,\tilde t_2,\dots,\tilde t_m$ be a set of ``times''.
Let $U(t;\tilde t_0,\tilde t_1,\tilde t_2,\dots,\tilde t_m)$ be a solution of the following non-linear ODE:
\beq\label{GelfandDikkiiUODE}
\sum_{j=0}^m \tilde t_j R_{j+1}(U) = t.
\eeq
Notice that, formally, $t=-2\tilde t_{-1}$.
For $m=1$, the equation \eqref{GelfandDikkiiUODE} is the Painlev\'e I equation. The case $m=2$ is called the Lee--Yang equation. The case $m=0$ is simply $U(t)=-\frac{t}{2\tilde t_0}$.

Consider the Lax pair (adopting the normalizations of \cite{Eynbook}) given by
\beq
\mathcal R(x,t,\hbar) = \begin{pmatrix}
0 & 1 \cr
x+2U(t) & 0
\end{pmatrix}
\eeq
and
\beq
\mathcal L(x,t,\hbar) = \sum_{j=0}^{m} \tilde t_j \mathcal{L}_j(x,t,\hbar),
\eeq
where
$$
\mathcal L_j(x,t,\hbar)
=\begin{pmatrix}
\alpha_j(x,t) & \beta_j(x,t) \cr
\gamma_j(x,t) & -\alpha_j(x,t)
\end{pmatrix} \text{ with},
$$
$$
\beta_j(x,t) = \frac12 \sum_{k=0}^j x^{j-k}R_k(U)
 , \quad
\alpha_j(x,t) = -\frac{\hbar}{2} \frac{\d}{\d t}\beta_j(x,t)
 , \quad
\gamma_j(x,t) = (x+2U)\beta_j(x,t)+\hbar \frac{\d}{\d t}\alpha_j(x,t).
$$
One can easily verify that the Gelfand--Dikii polynomials are such that the zero curvature equation is satisfied
\beq
\hbar \frac{\d}{\d t} \mathcal L(x,t,\hbar) + \hbar \frac{\d}{\d x}\mathcal R(x,t,\hbar) = [\mathcal R(x,t,\hbar),\mathcal L(x,t,\hbar)].
\eeq

The differential equation \eqref{GelfandDikkiiUODE} admits a formal power series solution $U(t,\hbar)$ with only even powers of $\hbar$:
\beq
U(t,\hbar) = \sum_k \hbar^{2k} u_k(t),
\eeq
whose first term $u(t)=u_0(t)$ satisfies an algebraic equation
$$
\sum_{j=0}^m \frac{(2j+1)!}{j!(j+1)!} \tilde t_j (-u/2)^{j+1} = -\frac14 t.
$$
In the Painlev\'e I case, $m=1$, $\tilde t_k=\delta_{k,1}$, we recover $t=-3u^2$.

The spectral curve, in the limit $\hbar\to 0$:
\beq
\det (y-\mathcal L(x,t,0)) =0
\eeq
is always a genus 0 curve. It admits the rational parametrization
\beq
\left\{\begin{array}{l}
x(z) = z^2-2u(t) \cr
y(z) = \sum_{j=0}^m \tilde t_j \left(z^{2j+1}(1-2u(t)/z^2)^{j+\frac12}\right)_+
\end{array},\right.
\eeq
where $()_+$ means the positive part in the Laurent series expansion near $z=\infty$, i.e.
$$
y(z) = \sum_{j=0}^m \tilde t_j \sum_{k=0}^{j} (-u)^k \frac{(2j+1)!!}{(2j-2k+1)!!} \ z^{2j-2k+1}.
$$
In the Painlev\'e I case, $\tilde t_j=\delta_{j,1}$, we recover $y(z)=z^3-3uz$.\\
In the case $m=0$, with $\tilde t_0=1$, we recover the Airy system
\beq
\mathcal L(x,t,\hbar) = \begin{pmatrix}
0 & 1 \cr
x-t & 0
\end{pmatrix}
\eeq
with spectral curve $y^2=x-t$.

Let $\Psi(x,t,\hbar)$ as follows
$$
\Psi(x,t,\hbar) = \begin{pmatrix}
A(x) & B(x) \cr
\tilde A(x) & \tilde B(x)
\end{pmatrix}
$$
be a WKB  $\hbar$ formal series solution of
\beq
\hbar\frac{\d}{\d x} \Psi(x,t,\hbar) = -\mathcal L(x,t,\hbar)\Psi(x,t,\hbar)
\, , \quad\quad
\hbar\frac{\d}{\d t} \Psi(x,t,\hbar) = \mathcal R(x,t,\hbar)\Psi(x,t,\hbar).
\eeq
Our previous results show that the formal series $\psi([z]-[\infty],\tilde t,\hbar)$ coincide with 
\beq
A(x,t,\hbar) = \frac{1}{\sqrt{2z}}e^{\hbar^{-1}\int_0^z ydx} e^{\sum_{(g,n)\neq (0,1),(0,2)} \frac{\hbar^{2g-2+n}}{n!} \int_{\infty}^z\dots\int_\infty^z \omega_{g,n}},
\eeq
and is annihilated by the quantum curve
\beq
\hat y^2 -(\alpha(x)+\delta(x))\hat y +(\alpha(x)\delta(x)-\beta(x)\gamma(x)) - \hbar\left( \alpha'(x) - \alpha(x)\frac{\beta'(x)}{\beta(x)} + \frac{\beta'(x)}{\beta(x)}\hat y\right).
\eeq

\appendix

\section{PDE for any isomonodromic system}\label{appendix}

The goal of this appendix is to show that the integrable kernel $\overline{K}(x,x')\coloneqq\tilde{K}(x,x')e^F$ of an isomonodromic system satisfies the same PDE \eqref{eqKD} that we obtained for the $2$-point wave function built from topological recursion. Here $F=\log \mathcal T$, with $\mathcal T$ the tau function of the isomonodromic system.

Consider a solution of a $2\times 2$ isomonodromic system written as:
$$
\Psi(x) = \begin{pmatrix}
A(x) & B(x) \cr
\tilde A(x) & \tilde B(x)
\end{pmatrix},
\quad
\det \Psi(x)=1,
$$
i.e.~$\Psi(x)$ satisfying the (compatible) system of equations:
\beq\label{isoSystem}
\left\{\begin{array}{rcl}
\hbar \frac{\partial}{\partial x}\Psi &=& \mathcal L(x;\hbar)\Psi , \\[0.2em]
\hbar \frac{\partial}{\partial t_k}\Psi &=& \mathcal R_k(x;\hbar)\Psi.
\end{array}\right.
\eeq
The equations with respect to the isomonodromic times $t_k$ can be seen as isomonodromic deformations of the first equation.

Consider the deformed spectral curve
\beq
P(x,y;\hbar)=\det(y\,\mathrm{Id}-\mathcal L(x;\hbar))=y^2+R(x)+\sum_{m\geq 1}\hbar^mP_m(x,y)=P_0(x,y)+O(\hbar).
\eeq
We will make use of the following technical assumption: 
\begin{assumption}\label{assumption}
Let $\mathcal{N}$ be the Newton polygon associated to $P_0(x,y)=y^2-R(x)$ and let $\mathring{\mathcal{N}}$ be its interior. For $m\geq 1$ we only allow $P_m(x,y)=\sum_{i,j}P_{i,j}x^i y^j$ whose only non-zero coefficients $P_{i,j}\neq 0$ are such that $(i+1,j+1)\in\mathring{\mathcal{N}}$, which in our case only allows for $j=1$, so $P_m(x,y)=P_m(x)$. 
\end{assumption}

\begin{remark}
This assumption is to ensure that $P$ has the same Casimirs as $P_0$. It is always possible to transform a deformed spectral curve into one satisfying this assumption, by redefining what we mean by Casimirs of $P_0$, but here we assume we already have this shape for simplicity. 
\end{remark}

The associated classical spectral curve $\Sigma = \{(x,y)\mid y^2=R(x)\}$ is presented as a two-sheeted covering of the Riemann sphere $x:\Sigma \rightarrow \mathbb{P}^1$.

We assume that the entries of the matrix $\mathcal{L}$ are rational functions of $x$:
$$
\mathcal{L}(x;\hbar)=\sum_{l=1}^N\sum_{j=0}^{m_l}\frac{\mathcal{L}^{(l)}_j}{(x-\lambda_l)^{j+1}}-\sum_{j=1}^{m_{\infty}}\mathcal{L}^{(\infty)}_{j-1}x^{j-1}.
$$
We often omit the dependence on the variables $\hbar$ and even $x$ for simplicity.
The solutions $\Psi(x)$ of the first linear differential equation of \eqref{isoSystem} have essential singularities at $x=\lambda_l$ and $x=\infty$. Let us define $\sigma_3\coloneqq\left(\begin{array}{lr} 1 & 0 \\ 0 & -1\end{array}\right)$. We know that around each pole of $\mathcal{L}(x)$ the function $\Psi(x)$ admits an asymptotic expansion of the form
\beq 
\hat{\Psi}(x) e^{\hbar^{-1}\sigma_3T(x)},
\eeq
where $\hat{\Psi}(x)$ is regular at the poles $P\in \{\infty, \lambda_1,\ldots,\lambda_l\}$ of $\mathcal{L}(x)$ and 
\begin{align*}
T(x)\underset{x\rightarrow P}{\sim} \tilde{t}_{P,0}\log \xi_P + \sum_{j=1}^{m_P} \frac{\tilde{t}_{P,j}}{\xi_P^j},
\end{align*}
where $\xi_P=x^{d_{\infty}}$, with $d_{\infty}$ equal to $-1$ (unramified) or $-2$ (ramified), for $P=\infty$, and $\xi_P=x-\lambda_l$, for $P=\lambda_l$.

The isomonodromic deformation parameters $t_k=T_{\alpha,j}$ of \eqref{isoSystem}, using the multi-index notation $k=(\alpha,j)$, include:
\begin{itemize}
\item $\alpha = \infty$: $T_{\infty,j}\coloneqq\tilde{t}_{\infty,j}=t_{\zeta_{\infty}^+,j}$, for $j=1,\ldots,m_{\infty}$.
\begin{itemize}
\item If $\infty$ is an unramified critical value of $x$, i.e.~$d_{\infty}=-1$ and $x^{-1}(\infty)=\{\zeta_{\infty}^+,\zeta_{\infty}^-\}$, then $t_{\zeta_{\infty}^+,j}=-t_{\zeta_{\infty}^-,j}$.
\item If $\infty$ is a ramified critical value of $x$, i.e.~$d_{\infty}=-2$ and $\zeta_{\infty}^+=\zeta_{\infty}^-$.
\end{itemize}
\item $\alpha = l=1,\ldots,N$: 
\begin{itemize}
\item For $j=1,\ldots,m_{l}$, $T_{l,j}\coloneqq\tilde{t}_{\lambda_l,j}=t_{\zeta_{l}^+,j}=-t_{\zeta_{l}^-,j}$. 
\item For $j=-1$, $T_{l,-1}\coloneqq \lambda_l$.
\end{itemize}
\end{itemize} 

We will also use this multi-index notation for the matrices $\mathcal{R}_k=\mathcal{R}_{\alpha,j}$ of the system~\eqref{isoSystem}.

\begin{remark}
From our Assumption~\ref{assumption}, we have that all $t_i$ are independent of $\hbar$ and coincide with the moduli of the classical spectral curve $y^2=R(x)$.
\end{remark}

\begin{remark}
Since $\mathrm{Tr}\,\mathcal{L} =0$, we have 
\beq\label{eqrmk}
\mathcal{L}^2=-\det\mathcal{L} \cdot \mathrm{Id}.
\eeq
\end{remark}

Let $\mathbb{K}(x,x')\coloneqq \Psi^{-1}(x')\Psi(x)$. Observe that $\tilde{K}(x,x')=A(x)\tilde B(x')-\tilde A(x)B(x')=\mathrm{Tr} \left(\mathbb{K}\begin{pmatrix}1 & 0 \\ 0 & 0\end{pmatrix}\right)$. 
We called $\zeta_i$ the poles of $y\dd x$. Recall the operator defined by:
$$
\mathcal D \coloneqq \hbar^2\frac{\d^2}{\d x^2}- \hbar^2L(x)-R(x),
$$
where
\bea
L(x) &\coloneqq &
 \sum_{i, x(\zeta_i)=\infty}  \sum_{j=1-2d_i}^{m_i} t_{\zeta_i,j} \sum_{0\leq k \leq \frac{1-j}{d_i}-2} x^k  \Big(-\frac{j}{d_i}-k-2\Big) \partial_{\mathcal{B}_{\zeta_i,j+d_i(k+2)}}   \\
&& +  \sum_{i, x(\zeta_i)\neq \infty} \sum_{j=0}^{m_i} t_{\zeta_i,j} \sum_{k=0}^{j} (x-x(\zeta_i))^{-(k+1)} (j+1-k) \partial_{\mathcal{B}_{\zeta_i,j+1-k}}, \nonumber
\eea
with 
$$
\d_{\mathcal{B}_{p,k}}\omega_{g,n}(z_1,\ldots,z_n)\coloneqq \int_{\mathcal{B}_{p,k}}\omega_{g,n+1}(\cdot,z_1,\ldots,z_n) =\Res_{x\to p} \frac{\xi_p^k}{k}\omega_{g,n+1}(\cdot,z_1,\ldots,z_n),
$$
for every second type cycle $\mathcal{B}_{p,k}$, $k\geq 1$, which we introduced in \eqref{2ndTypeCycles}.

From Corollary~\ref{isomonodromic}, the operator $L(x)$ can be re-written in terms of derivatives with respect to the isomonodromic parameters: $t_{\zeta_i,k}$, $k=1,\ldots,m_i$, and $\lambda_l$, for $l=1,\ldots,N$.

\begin{theorem}\label{thmApp}
The operator 
\beq 
\mathcal{O}\coloneqq \mathcal D - \hbar^2 L(x).F - \frac{\hbar^2}{x-x'} \left(\frac{\d}{\d x}+\frac{\d}{\d x'}\right)
\eeq
annihilates the integrable kernel:
\beq\label{eqKDapp}
\mathcal O\, \mathbb{K}(x,x') = 0.
\eeq

\end{theorem}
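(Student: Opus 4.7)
The strategy is to rewrite every piece of $\mathcal{O}\,\mathbb{K}(x,x')$ as a conjugation of the form $\Psi^{-1}(x')\,[\cdots]\,\Psi(x)$ and reduce the statement to a matrix identity that can be verified pole by pole in $x$. I would begin with the Lax equation $\hbar\partial_x\Psi=\mathcal{L}\Psi$, which gives $\hbar^2\partial_x^2\Psi=(\hbar\mathcal{L}'+\mathcal{L}^2)\Psi$. The Cayley--Hamilton identity \eqref{eqrmk} together with Assumption~\ref{assumption} (which ensures $\det\mathcal{L}(x;\hbar)=-R(x)+\sum_{m\geq 1}\hbar^m P_m(x)$) then yields
$$
\mathcal{D}\,\mathbb{K}(x,x')=\hbar\,\Psi^{-1}(x')\mathcal{L}'(x)\Psi(x)-\sum_{m\geq 1}\hbar^m P_m(x)\,\mathbb{K}(x,x')-\hbar^2\,L(x).\mathbb{K}(x,x').
$$
A second application of the Lax equation at $x$ and $x'$ produces the symmetric formula $\hbar^2(x-x')^{-1}(\partial_x+\partial_{x'})\mathbb{K}=\hbar\,\Psi^{-1}(x')\,[\mathcal{L}(x)-\mathcal{L}(x')](x-x')^{-1}\Psi(x)$.

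Next I would invoke Corollary~\ref{isomonodromic} and the relation \eqref{lambdaOp} to rewrite $L(x)$ as a genuine differential operator in the isomonodromic parameters, $L(x)=\sum_k f_k(x)\,\partial_{t_k}$, where the $f_k(x)\in\mathbb{C}(x)$ have prescribed principal parts at $\infty$ and at the $\lambda_l$. The $t_k$-equations $\hbar\partial_{t_k}\Psi=\mathcal{R}_k\Psi$ then give $\hbar\,L(x).\mathbb{K}=\Psi^{-1}(x')\sum_k f_k(x)\,[\mathcal{R}_k(x)-\mathcal{R}_k(x')]\,\Psi(x)$. Putting everything together, the vanishing $\mathcal{O}\,\mathbb{K}=0$ is equivalent to the matrix identity
$$
\Psi^{-1}(x')\,\mathcal{M}(x,x')\,\Psi(x)=\Big(\hbar^{-1}\!\sum_{m\geq 1}\hbar^m P_m(x)+\hbar\,L(x).F\Big)\,\mathbb{K}(x,x'),
$$
where $\mathcal{M}(x,x'):=\mathcal{L}'(x)-\tfrac{\mathcal{L}(x)-\mathcal{L}(x')}{x-x'}-\sum_k f_k(x)\,(\mathcal{R}_k(x)-\mathcal{R}_k(x'))$. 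It suffices to prove that $\mathcal{M}(x,x')$ is a scalar matrix, whose scalar equals the right-hand side coefficient.

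To establish this, I would analyse $\mathcal{M}(x,x')$ as a meromorphic function of $(x,x')$. Its potential poles in $x'$ lie at $\lambda_l$ and $\infty$, arising from $\mathcal{L}(x')$ and from the individual $\mathcal{R}_k(x')$. The coefficients $f_k(x)$ produced by Corollary~\ref{isomonodromic} are precisely tuned so that, at each such pole, the principal part of $\sum_k f_k(x)\mathcal{R}_k(x')$ cancels the principal part of $\mathcal{L}(x')/(x-x')$; this uses the isomonodromic compatibility condition $\hbar\partial_{t_k}\mathcal{L}-\hbar\partial_x\mathcal{R}_k+[\mathcal{L},\mathcal{R}_k]=0$ to identify the singular structures of $\mathcal{R}_k$ with the isomonodromic deformation of $\mathcal{L}$. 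What remains, upon using $\mathrm{Tr}\,\mathcal{L}=\mathrm{Tr}\,\mathcal{R}_k=0$ and the $\mathfrak{sl}_2$-structure, is a scalar matrix whose coefficient can be computed by tracing or by evaluating residues.

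The main obstacle is the final step: matching the scalar on the right via the Jimbo--Miwa--Ueno formula for the isomonodromic $\tau$-function. By JMU, each $\partial_{t_k}F=\partial_{t_k}\log\mathcal{T}$ is a prescribed residue of $\mathrm{Tr}(\hat\Psi^{-1}\hat\Psi'\,dT_k)$ at the associated irregular singularity. Pairing these residues with the rational weights $f_k(x)$ from Corollary~\ref{isomonodromic} must reproduce exactly the scalar $\hbar\,L(x).F+\sum_{m\geq 1}\hbar^{m-1}P_m(x)$. The delicate bookkeeping of Laurent expansions at each $\lambda_l$ and at $\infty$, together with the use of Assumption~\ref{assumption} to control the $\hbar$-corrections $P_m(x)$, is the heart of the argument; everything else is a repackaging of the Lax pair and of the operator $L(x)$ into conjugation form. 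Once this local matching is completed, the matrix identity holds and $\mathcal{O}\,\mathbb{K}=0$ follows.
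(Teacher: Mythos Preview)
Your overall strategy---rewrite $\mathcal{O}\,\mathbb{K}$ in conjugation form $\Psi^{-1}(x')[\cdots]\Psi(x)$, reduce to a matrix identity, analyse poles in $x'$, and invoke the Jimbo--Miwa--Ueno formula---is exactly the architecture of the paper's proof. The paper packages these steps as three lemmas: Lemma~\ref{lemmashape0} gives the conjugation form $\mathcal{O}\,\mathbb{K}=\Psi^{-1}(x')\hbar^2\mathcal{M}(x,x')\Psi(x')\mathbb{K}$ with your $\mathcal{M}$; the JMU identity is \eqref{lemmaMJ}, used at the outset to absorb $-\det\mathcal{L}$ as $\hbar^{-2}R(x)+L.F$ (rather than keeping the $P_m$ separate as you do); and Lemma~\ref{lemmashape1} is the pole analysis in $x'$.

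Two points deserve comment. First, your step ``pole-free $\Rightarrow$ scalar'' is not valid as stated: showing $\mathcal{M}(x,x')$ has no poles in $x'$ (and is bounded at $x'=\infty$) only gives that $\mathcal{M}$ is \emph{constant} in $x'$, not scalar. Since every constituent of $\mathcal{M}$ lies in $\mathfrak{sl}_2$, a scalar $\mathcal{M}$ would already be zero, so your ``identify the scalar via JMU'' step would be vacuous. The paper closes this gap differently: after the pole analysis, it evaluates the (constant in $x'$) expression at $x'=x$, where $\mathcal{L}'(x)-\lim_{x'\to x}\tfrac{\mathcal{L}(x)-\mathcal{L}(x')}{x-x'}=0$ and $\mathcal{R}_k(x)-\mathcal{R}_k(x)=0$, hence $\mathcal{M}\equiv 0$. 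You are missing this ``evaluate at the diagonal'' step.

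Second, the pole cancellation itself is more delicate than you indicate. The zero-curvature equation alone does not identify the principal parts of $\mathcal{R}_k(x')$ with those of $\mathcal{L}(x')/(x-x')$; one needs the explicit JMU form of the deformation matrices. The paper sidesteps this by passing to the WKB gauge $\Psi=\hat\Psi e^{\hbar^{-1}\sigma_3 T}$: since $\hat\Psi$ is regular at the poles of $\mathcal{L}$, the only singular contributions in $x'$ are the explicit terms $\tfrac{y(x')\sigma_3}{x-x'}$ and $L(x).T(x')\sigma_3$, whose cancellation is then checked by a direct Laurent-expansion computation (this is the content of Lemma~\ref{lemmashape1}). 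Your route via the compatibility equation could presumably be made to work, but it requires input you have not supplied.
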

The differential equation \eqref{eqKDapp} is equivalent to  
$$
\left(\mathcal{D}- \frac{\hbar^2}{x-x'} \left(\frac{\d}{\d x}+\frac{\d}{\d x'}\right)\right)(\mathbb{K}(x,x') e^F)=0.
$$

The proof of this theorem will follow from the three following lemmas.

\begin{lemma}
Let $\mathcal T$ be the tau function of the isomonodromic system and $F\coloneqq\log \mathcal T$. Then,
\beq\label{lemmaMJ} 
\frac{1}{2}\mathrm{Tr}\,\mathcal{L}(x;\hbar)^2=-\det \mathcal{L}(x;\hbar)=\hbar^{-2}R(x)+L.F.
\eeq
\end{lemma}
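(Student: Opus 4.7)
The first equality is essentially immediate. Since $\mathcal{L}$ is $2\times 2$ with $\mathrm{Tr}\,\mathcal{L}=0$, the Cayley--Hamilton identity already recorded in \eqref{eqrmk} gives $\mathcal{L}^2 = -\det\mathcal{L}\cdot\mathrm{Id}$, and taking traces yields $\tfrac12\mathrm{Tr}\,\mathcal{L}^2 = -\det\mathcal{L}$.

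For the second equality, the plan is to expand the rational function $-\det\mathcal{L}(x;\hbar) = \tfrac12\mathrm{Tr}\,\mathcal{L}^2(x;\hbar)$ in partial fractions in $x$ and match the principal part at each pole to $\hbar^{-2}R(x) + L(x).F$. Since $\mathcal{L}$ is rational in $x$ with singular set contained in $\{\lambda_1,\ldots,\lambda_N\}\cup\{\infty\}$, so is $-\det\mathcal{L}$. Near each such singularity $P$, the WKB asymptotics $\Psi(x)\sim\hat\Psi(x)\,e^{\hbar^{-1}\sigma_3 T(x)}$ substituted into $\hbar\partial_x\Psi = \mathcal{L}\Psi$ identify the most singular part of $\mathcal{L}$ with $\hat\Psi\,\sigma_3(\partial_x T)\,\hat\Psi^{-1}$. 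The coefficients of $T(x)$ expanded in the local coordinate $\xi_P$ are, by definition, the isomonodromic times $\tilde t_{P,j}=t_{\zeta_P,j}$, and Assumption~\ref{assumption} ensures these match the coefficients of $y\dd x$ in the classical expansion \eqref{KPexpansionydx}. Squaring, the highest-order polar parts of $\tfrac12\mathrm{Tr}\,\mathcal{L}^2$ at each pole $P$ reproduce the corresponding principal part of $\hbar^{-2}R(x)$.

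The subleading polar parts are the Hamiltonians of the isomonodromic flows, and the plan is to identify them with time derivatives of $F=\log\mathcal{T}$ via the Jimbo--Miwa--Ueno tau function identity
\begin{equation*}
\dd\log\mathcal T \;=\; \sum_{P} \Res_{x\to x(P)}\,\mathrm{Tr}\bigl(\hat\Psi^{-1}(x)\hat\Psi'(x)\,\dd T(x)\bigr).
\end{equation*}
Combined with the variational formulas of Corollary~\ref{isomonodromic}, which re-express the second-kind cycle derivations $\partial_{\mathcal{B}_{\zeta_P,j}}$ as $\partial/\partial t_{\zeta_P,j}$ (together with $\partial/\partial\lambda_l$ from \eqref{lambdaOp} for the outermost cycles), the collection of subleading principal parts of $-\det\mathcal{L}$ collapses onto precisely the formula for $L(x)$ given in \eqref{operatorL}, acting on $F$.

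The main obstacle will be the combinatorial bookkeeping: one must match the explicit coefficients $(-j/d_i-k-2)$ and $(j+1-k)$ in the definition of $L(x)$ with those produced by expanding $T(x)$ and its derivatives in local coordinates $\xi_P=(x-x(\zeta_P))^{1/d_P}$, treating separately the ramified ($d_P=-2$) and unramified ($d_P=\pm1$) cases at $\infty$ and the finite poles $\lambda_l$. The JMU formula and Corollary~\ref{isomonodromic} provide the right framework; the remaining manipulation is residue calculus, but it requires care with signs and conventions, in particular in disentangling the Casimir (leading) contribution from the Hamiltonian (subleading) contribution at each singular point.
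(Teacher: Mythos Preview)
Your plan is correct and follows essentially the same route as the paper: write $\Psi=\hat\Psi\,e^{\hbar^{-1}\sigma_3 T}$ near each pole to split $\tfrac12\mathrm{Tr}\,\mathcal L^2$ into the leading piece $\hbar^{-2}y_s^2$ (giving $\hbar^{-2}R(x)$) and a cross term $\mathrm{Tr}(y_s\sigma_3\hat\Psi^{-1}\partial_x\hat\Psi)$, then feed this into the Jimbo--Miwa--Ueno residue formula for $\partial_{t_k}\log\mathcal T$ and invert using the local expansions $\partial_{t_k}T\sim -\xi_P^{-k}/k$ (and the analogue for $\partial_{\lambda_l}T$) to recover exactly the coefficients of $L(x)$. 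The paper organizes the computation by solving the JMU identity for $\partial_{t_k}F$ in terms of $\mathrm{Res}_P\frac{1}{2y_s}\partial_{t_k}T\,(\mathrm{Tr}\,\mathcal L^2-2\hbar^{-2}R)$ rather than phrasing it as a partial-fraction match, but the content and the bookkeeping you flag are the same.
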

\begin{proof}
It was proved by Jimbo--Miwa--Ueno \cite{MJU81} that there exists a tau function $\mathcal T(\{t_k\})$ of the isomonodromic times $t_k=T_{\alpha,j}$ such that
\beq\label{MJeq}
\frac{\d}{\d t_k} \log \mathcal T =-\underset{x=P}{\mathrm{Res}}\; \mathrm{Tr} \left(\frac{\d}{\d t_k} T(x) \sigma_3 \Psi^{-1}(x)\frac{\partial}{\partial x}\Psi(x)\right), 
\eeq
where $P=\infty$, if $\alpha=\infty$, and $P=\lambda_l$, if $\alpha=l$, for $l=1,\ldots,N$. Let $y_s$ be the singular part of $y$. We have
$$
\frac{\d}{\d x}\Psi =\frac{\d}{\d x}\hat{\Psi}e^{\hbar^{-1}\sigma_3 T}+\hbar^{-1}\hat{\Psi}\sigma_3 y_s e^{\hbar^{-1}\sigma_3 T},
$$
which implies 
$$
\mathcal{L}=\frac{\d}{\d x}\Psi\cdot\Psi^{-1}=\frac{\d}{\d x}\hat{\Psi}\cdot\hat{\Psi}^{-1}+\hbar^{-1}\hat{\Psi}\sigma_3 y_s\hat{\Psi}^{-1}.
$$
Therefore
\beq\label{Leq}
\mathrm{Tr}\, \mathcal{L}^2 = \mathrm{Tr}\, \bigg(\frac{\d}{\d x}\hat{\Psi}\cdot\hat{\Psi}^{-1}\bigg)^2+2\hbar^{-2}y_s^2+2\mathrm{Tr}\, \bigg(y_s\sigma_3\hat{\Psi}^{-1}\frac{\d}{\d x}\hat{\Psi}\bigg).
\eeq
For every $P\in\{\infty,\lambda_1,\ldots,\lambda_l\}$, $\zeta_i\in x^{-1}(P)$, we have
\beq\label{beh1}
\frac{\d}{\d t_{\zeta_i,k}}T(x)=-\frac{\xi_P^{-k}}{k},
\eeq
for $k=1,\ldots,m_i$. Around $\zeta_l\in x^{-1}(\lambda_l)$, we have
\beq\label{beh2}
\frac{\d}{\d \lambda_l}T(x)= -\sum_{j=0}^{m_l}-t_{\zeta_l,j}\xi_{\lambda_l}^{-j-1},
\eeq
which behaves as $-y(z)$ around $z=\zeta_l$.
Substituting \eqref{Leq} in \eqref{MJeq}, we get
$$
\frac{\d}{\d t_k} F=-\underset{x=P}{\mathrm{Res}}\;\frac{1}{2y_s(x)}\frac{\d}{\d t_k} T(x) \bigg(\mathrm{Tr}\, \mathcal{L}(x)^2 -\mathrm{Tr}\, \bigg(\frac{\d}{\d x}\hat{\Psi}(x)\cdot\hat{\Psi}(x)^{-1}\bigg)^2-2\hbar^{-2}y_s(x)^2 \bigg).
$$
Since $\hat{\Psi}(x)$ is analytic at $x=P$, we have $\frac{\d}{\d x}\hat{\Psi}(x)\cdot\hat{\Psi}(x)^{-1}=O(\xi_P^2)$ for $x\rightarrow P$. Thus the middle term does not contribute to the residue:
$$
\frac{\d}{\d t_k} F=-\underset{x=P}{\mathrm{Res}}\;\frac{1}{2y_s(x)}\frac{\d}{\d t_k} T(x) \big(\mathrm{Tr}\, \mathcal{L}(x)^2 -2\hbar^{-2} R(x) \big).
$$
Using the behaviors \eqref{beh1} and \eqref{beh2}, we get that $\mathrm{Tr}\, \mathcal{L}(x)^2 = 2(L(x).F+\hbar^{-2}R(x))$.
\end{proof}

\begin{lemma}\label{lemmashape0}
Let $\zeta_{\infty}\in x^{-1}(\infty)$ and $\zeta_{l}\in x^{-1}(\lambda_l)$ be poles of $\omega_{0,1}$ of orders $m_{\infty}$ and $m_l$, $l=1,\ldots,N$, respectively. Let $d_{\infty}\coloneqq \mathrm{ord}_{\zeta_{\infty}}(x)$. We have
\beq\label{shape0}
\mathcal O\, \mathbb{K}(x,x') = \bigg(\Psi(x')^{-1}\hbar^2\left(\frac{\d \mathcal{L}(x;\hbar)}{\d x}-\frac{\mathcal{L}(x;\hbar)-\mathcal{L}(x';\hbar)}{x-x'}-\mathcal{R}(x,x')\right)\Psi(x')\bigg)\mathbb{K}(x,x'),
\eeq
with
\begin{align*}
\mathcal{R}(x,x') & \coloneqq \sum_{j=1-2d_{\infty}}^{m_{\infty}} t_{\zeta_{\infty},j} \sum_{s= 1}^{j+2d_{\infty}} \xi_{\infty}^{s-j-2d_{\infty}}  \Big(\frac{-s}{d_{\infty}}\Big) (\mathcal{R}_{\infty,s}(x)-\mathcal{R}_{\infty,s}(x')) \\
& + \sum_{l=1}^N \bigg(\xi_{\lambda_l}^{-1}(\mathcal{R}_{l,-1}(x)-\mathcal{R}_{l,-1}(x'))+ \sum_{s=1}^{m_l-1}\sum_{j=s}^{m_l-1} t_{\zeta_l,j}\,  \xi_{\lambda_l}^{-j+s-2} s (\mathcal{R}_{l,s}(x)-\mathcal{R}_{l,s}(x')) \bigg).
\end{align*}
\end{lemma}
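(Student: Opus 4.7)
The strategy is to compute $\mathcal O\,\mathbb K(x,x')$ directly, differentiating $\mathbb K = \Psi^{-1}(x')\Psi(x)$ with the aid of the two isomonodromic equations \eqref{isoSystem}, and expressing every contribution in the sandwich form $\Psi^{-1}(x')(\cdots)\Psi(x)$ so that the middle matrix can be identified with the one in the statement. The key algebraic input is the identity \eqref{lemmaMJ}, which produces exact cancellations of the two scalar terms $R(x)\,\mathbb K$ and $\hbar^2 L(x).F\,\mathbb K$ inside $\mathcal O$.

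First, using $\hbar\,\d_x\Psi(x)=\mathcal L(x)\Psi(x)$ twice,
\begin{align*}
\hbar^2\,\d_x^2\mathbb K(x,x') = \Psi^{-1}(x')\bigl[\hbar\,\d_x\mathcal L(x) + \mathcal L(x)^2\bigr]\Psi(x).
\end{align*}
Since $\mathrm{Tr}\,\mathcal L = 0$, Cayley--Hamilton gives $\mathcal L(x)^2 = -\det\mathcal L(x)\,\mathrm{Id}$, and \eqref{lemmaMJ} rewrites this as $\bigl(\hbar^{-2}R(x) + L(x).F\bigr)\mathrm{Id}$. Substituting into $\mathcal D = \hbar^2\,\d_x^2 - \hbar^2 L(x) - R(x)$ and subtracting the $\hbar^2 L(x).F$ piece of $\mathcal O$ cancels both scalar terms, leaving
\begin{align*}
\mathcal O\,\mathbb K = \Psi^{-1}(x')\,\hbar[\d_x\mathcal L(x)]\,\Psi(x) - \hbar^2 L(x).\mathbb K - \tfrac{\hbar^2}{x-x'}(\d_x+\d_{x'})\mathbb K.
\end{align*}

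For the last two pieces I apply the same trick to $\Psi^{\pm1}(x')$: from $\hbar\,\d_{x'}\Psi^{-1}(x') = -\Psi^{-1}(x')\mathcal L(x')$ one obtains
\begin{align*}
\hbar(\d_x+\d_{x'})\mathbb K(x,x') = \Psi^{-1}(x')\bigl[\mathcal L(x)-\mathcal L(x')\bigr]\Psi(x),
\end{align*}
which supplies the $(\mathcal L(x)-\mathcal L(x'))/(x-x')$ contribution inside the bracket of \eqref{shape0}. Analogously, the isomonodromic $t_k$-equations yield $\hbar\,\d_{t_k}\mathbb K = \Psi^{-1}(x')\bigl[\mathcal R_k(x)-\mathcal R_k(x')\bigr]\Psi(x)$. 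Combining this with the explicit presentation $L(x)=\sum_k a_k(x)\,\d_{t_k}$ from Corollary~\ref{isomonodromic} (with $T_{l,-1}=\lambda_l$ and $\mathcal R_{l,-1}$ generating the $\lambda_l$-deformation) realises $L(x).\mathbb K$ as the sandwich $\Psi^{-1}(x')\bigl(\sum_k a_k(x)[\mathcal R_k(x)-\mathcal R_k(x')]\bigr)\Psi(x)$.

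The only genuinely nontrivial step is to identify the sum $\sum_k a_k(x)[\mathcal R_k(x)-\mathcal R_k(x')]$ thus produced with the matrix $\mathcal R(x,x')$ spelled out in the statement. This is pure bookkeeping: one splits $L(x)=L_\infty(x)+L_\Lambda(x)$ as in Corollary~\ref{isomonodromic} and performs the index substitutions $s = j + d_\infty(k+2)$ at the pole over $\infty$ and $s=j+1-k$ at each finite $\lambda_l$, so that the two sums of Corollary~\ref{isomonodromic} become the two sums defining $\mathcal R(x,x')$. Once the indices are aligned the coefficients match term by term, and using $\Psi(x')\mathbb K(x,x')=\Psi(x)$ to put the result back in the sandwich form of \eqref{shape0} completes the proof.
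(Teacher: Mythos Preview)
Your proof is correct and follows essentially the same route as the paper's: both compute the $x$-, $x'$- and $t_k$-derivatives of $\mathbb K$ from the isomonodromic system \eqref{isoSystem}, use $\mathcal L^2=-\det\mathcal L\cdot\mathrm{Id}$ together with \eqref{lemmaMJ} to cancel the scalar pieces $R(x)$ and $\hbar^2 L(x).F$, and then invoke Corollary~\ref{isomonodromic} to turn $L(x).\mathbb K$ into the matrix $\mathcal R(x,x')$. The only cosmetic difference is that you write intermediate expressions in the form $\Psi^{-1}(x')(\cdots)\Psi(x)$ whereas the paper keeps the $\Psi^{-1}(x')(\cdots)\Psi(x')\,\mathbb K$ form throughout; as you note, these are equivalent via $\Psi(x')\mathbb K=\Psi(x)$.
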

\begin{proof}
On the one hand, we have
\begin{align}
\frac{\partial}{\partial x}\mathbb{K}(x,x') & = \Psi^{-1}(x')\frac{\partial}{\partial x}\Psi(x)=\Psi^{-1}(x')\mathcal L(x;\hbar)\Psi(x)= \Psi^{-1}(x')\mathcal L(x;\hbar)\Psi(x')\mathbb{K}(x,x');\label{eq1}\\
\frac{\partial}{\partial x'}\mathbb{K}(x,x') & = -\Psi^{-1}(x')\frac{\partial}{\partial x'}\Psi(x')\Psi^{-1}(x')\Psi(x)=-\Psi^{-1}(x')\mathcal L(x';\hbar)\Psi(x')\mathbb{K}(x,x'); \label{eq2}\\
\frac{\partial^2}{\partial x^2}\mathbb{K}(x,x') & = \Psi^{-1}(x')\left(\frac{\partial}{\partial x}\mathcal L(x;\hbar)+\mathcal L(x;\hbar)^2\right)\Psi(x')\mathbb{K}(x,x'). \label{eq3}
\end{align}
On the other hand, 
\beq\label{eqts}
\frac{\partial}{\partial t_k}\mathbb{K}(x,x') =  \Psi^{-1}(x')\frac{\partial}{\partial t_k}\Psi(x) + \frac{\partial}{\partial t_k}\Psi^{-1}(x')\Psi(x)=\Psi^{-1}(x')(\mathcal{R}_k(x;\hbar)-\mathcal{R}_k(x';\hbar))\Psi(x).
\eeq

Using \eqref{eq1}, \eqref{eq2}, \eqref{eq3} and \eqref{eqrmk}, we can rewrite
\begin{align*}
&(\mathcal D + L(x).F) \mathbb{K}(x,x') - \frac{\hbar^2}{x-x'} \left(\frac{\d}{\d x}+\frac{\d}{\d x'}\right)\mathbb{K}(x,x')=\\
&\bigg(\Psi(x')^{-1}\hbar^2\left(\frac{\d \mathcal{L}(x;\hbar)}{\d x}-\det \mathcal{L}(x;\hbar)-\frac{\mathcal{L}(x;\hbar)-\mathcal{L}(x';\hbar)}{x-x'}\right)\Psi(x') \\ &-(\hbar^2 L(x)+R(x)+\hbar^2L(x).F)\bigg)\mathbb{K}.
\end{align*}

Making use of \eqref{lemmaMJ}, we obtain
\beq\label{shape0}
\mathcal{O}\,\mathbb{K}=\bigg(\Psi(x')^{-1}\hbar^2\left(\frac{\d \mathcal{L}(x;\hbar)}{\d x}-\frac{\mathcal{L}(x;\hbar)-\mathcal{L}(x';\hbar)}{x-x'}\right)\Psi(x')-\hbar^2 L(x)\bigg)\mathbb{K}.
\eeq

Using the shape of the operator $L(x)$ from Corollary~\ref{isomonodromic} in terms of derivatives with respect to the isomonodromic parameters and the action on $\mathbb{K}$ by these derivatives given by \eqref{eqts}, we deduce 
$$
L(x)\mathbb{K}=\Psi^{-1}(x')\mathcal{R}(x,x')\Psi(x')\mathbb{K}.
$$
\end{proof}

Let us define $T(x) \coloneqq \int^x y(x)\dd x$ and $\sigma_3\coloneqq\left(\begin{array}{lr} 1 & 0 \\ 0 & -1\end{array}\right)$. We can write 
$$
\mathbb{K}(x,x') = e^{-\hbar^{-1}\sigma_3 T(x')} \hat{\mathbb{K}}(x,x') e^{\hbar^{-1}\sigma_3 T(x)}
$$
with $\hat{\mathbb{K}}(x,x')$ analytic when $x' \rightarrow \zeta_i$, where the $\zeta_i$ are the poles of $y\dd x$. Let us use the notation $\sigma\coloneqq \hbar^{-1}\sigma_3$.
\begin{lemma}\label{lemmashape1} 
The expression
\beq\label{eqshape1}
e^{\sigma T(x')} (\mathcal{O} \mathbb{K}(x,x'))\hat{\mathbb{K}}(x,x')^{-1}e^{-\sigma T(x')}
\eeq
is a rational function of $x$ and $x'$ with no poles at $x'\rightarrow\infty$ or $x'\rightarrow\lambda_l$.
\end{lemma}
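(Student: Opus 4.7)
My plan is to substitute the formula of Lemma~\ref{lemmashape0},
$$\mathcal O\mathbb K=\Psi(x')^{-1}\hbar^2\tilde M(x,x')\Psi(x')\mathbb K,\qquad \tilde M:=\tfrac{\dd\mathcal L(x)}{\dd x}-\tfrac{\mathcal L(x)-\mathcal L(x')}{x-x'}-\mathcal R(x,x'),$$
into the definition of $E$ and rewrite everything in the regular gauge. Using $\Psi(x')\mathbb K=\Psi(x)$, $\hat{\mathbb K}^{-1}=\hat\Psi(x)^{-1}\hat\Psi(x')$, $\Psi=\hat\Psi\,e^{\sigma T}$ and the commutation $e^{\sigma T(x')}\Psi(x')^{-1}=\hat\Psi(x')^{-1}$ (immediate since $e^{\sigma T}$ is diagonal), I obtain the explicit form
$$E=\hat\Psi(x')^{-1}\hbar^2\tilde M(x,x')\,\hat\Psi(x)\,e^{\sigma T(x)}\hat\Psi(x)^{-1}\hat\Psi(x')\,e^{-\sigma T(x')}.$$

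Rationality of $E$ in $x,x'$ is then analysed by decomposing with respect to the $\sigma_3$-grading. Since $e^{\sigma T(x')}=\mathrm{diag}(e^{\hbar^{-1}T(x')},e^{-\hbar^{-1}T(x')})$, the outer conjugation preserves the diagonal entries of $(\mathcal O\mathbb K)\hat{\mathbb K}^{-1}$ and multiplies its off-diagonal entries by $e^{\pm 2\hbar^{-1}T(x')}$. Using the factorization $\mathbb K=e^{-\sigma T(x')}\hat{\mathbb K}\,e^{\sigma T(x)}$, a direct block-matrix computation shows that the diagonal entries of $(\mathcal O\mathbb K)\hat{\mathbb K}^{-1}$ are already rational in $x,x'$, while the off-diagonal entries factor as $e^{\mp 2\hbar^{-1}T(x')}$ times a rational function; the essential exponentials thus cancel after conjugation and $E$ is a rational matrix-valued function.

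For the absence of poles at $x'\to\infty$ or $x'\to\lambda_l$, the factors $\hat\Psi(x')^{\pm 1}$ are analytic at these points by construction of the regular gauge, and the $x$-dependent block $\hat\Psi(x)\,e^{\sigma T(x)}\hat\Psi(x)^{-1}$ is $x'$-independent; hence the only possible $x'$-poles come from $\tilde M(x,x')$. Near $x'\to\lambda_l$, expanding $\mathcal L(x')=\sum_{j=0}^{m_l}\mathcal L^{(l)}_j(x'-\lambda_l)^{-j-1}$ and $(x-x')^{-1}=\sum_{k\geq 0}(x-\lambda_l)^{-k-1}(x'-\lambda_l)^k$, the negative powers of $(x'-\lambda_l)$ arising in $(\mathcal L(x)-\mathcal L(x'))/(x-x')$ are cancelled term-by-term by the finite-pole part of $\mathcal R(x,x')$ displayed in Lemma~\ref{lemmashape0}; similarly the polynomial-in-$x'$ divergence at $x'\to\infty$ is cancelled by the $L_\infty$ piece of $\mathcal R$. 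The technical heart — and the main obstacle — is the preceding $\sigma_3$-grading analysis: verifying that the off-diagonal entries of $(\mathcal O\mathbb K)\hat{\mathbb K}^{-1}$ carry exactly the exponential factor $e^{\mp 2\hbar^{-1}T(x')}$ required for cancellation demands careful bookkeeping of how the derivatives in $\mathcal O$ act across the $e^{\pm\sigma T}$ factorization of $\mathbb K$, combined with the identification of $\mathcal R(x,x')$ with isomonodromic time-derivatives via \eqref{eqts}.
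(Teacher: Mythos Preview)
Your substitution of Lemma~\ref{lemmashape0} and the resulting formula
\[
E=\hat\Psi(x')^{-1}\hbar^2\tilde M(x,x')\,\hat\Psi(x)\,e^{\sigma T(x)}\hat\Psi(x)^{-1}\hat\Psi(x')\,e^{-\sigma T(x')}
\]
are correct, but the rest of the argument has a genuine gap. In fact the lemma as printed carries a typo: the paper's own proof computes every term with $\mathbb K^{-1}$, not $\hat{\mathbb K}^{-1}$ (look at the displayed identities for $\tfrac{\d}{\d x}\mathbb K\cdot\mathbb K^{-1}$, etc.), and it is the quantity $e^{\sigma T(x')}(\mathcal O\mathbb K)\mathbb K^{-1}e^{-\sigma T(x')}$ that is shown to equal \eqref{lastExp}. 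With $\mathbb K^{-1}$ your Lemma~\ref{lemmashape0} substitution collapses immediately to $\hat\Psi(x')^{-1}\hbar^2\tilde M\,\hat\Psi(x')$, with no exponentials left; with $\hat{\mathbb K}^{-1}$ the statement is actually false, since $E=E'\cdot\hat{\mathbb K}\,e^{\sigma T(x)}\hat{\mathbb K}^{-1}e^{-\sigma T(x')}$ where $E'$ is the $\mathbb K^{-1}$ version.

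Taking the statement at face value, your argument breaks at two places. First, when you claim ``the only possible $x'$-poles come from $\tilde M(x,x')$'' because $\hat\Psi(x')^{\pm1}$ are analytic and $\hat\Psi(x)e^{\sigma T(x)}\hat\Psi(x)^{-1}$ is $x'$-independent, you have overlooked the rightmost factor $e^{-\sigma T(x')}$ in your own formula: this carries an \emph{essential} singularity precisely at $x'=\infty$ and $x'=\lambda_l$. Second, your $\sigma_3$-grading sketch addresses only the outer conjugation by $e^{\sigma T(x')}$; the factor $e^{\sigma T(x)}$ sits in the \emph{middle} of the product, not as an outer conjugation, so no block-diagonal argument removes it. The asserted ``direct block-matrix computation'' that the diagonal entries of $(\mathcal O\mathbb K)\hat{\mathbb K}^{-1}$ are rational is therefore unsupported.

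Even in the corrected $\mathbb K^{-1}$ setting, your pole-cancellation sketch for $\tilde M$ is incomplete: the matrices $\mathcal R_{l,s}(x')$ appearing in $\mathcal R(x,x')$ themselves carry poles at $x'=\lambda_l$, so the cancellation is not merely between $\mathcal L(x')/(x-x')$ and the coefficient structure of $\mathcal R$. The paper sidesteps this by computing each piece of $\mathcal O$ directly in the $\hat{\mathbb K}$ gauge, arriving at \eqref{lastExp} where the only potentially singular $x'$-terms are the explicit scalars $\tfrac{y(x')\sigma}{x-x'}+L(x).T(x')\sigma$, whose cancellation is then checked by elementary local expansions.
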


\begin{proof}
The idea now is to rewrite the operator acting on $\hat{\mathbb{K}}=\hat{\mathbb{K}}(x,x')$, instead of on $\mathbb{K}=\mathbb{K}(x,x')$, and analyze the poles of the equation, as a function of $x'$. We have:
\begin{align*}
\frac{\d}{\d x}\mathbb{K}&=e^{-\sigma T(x')}\Big(\frac{\d}{\d x}\hat{\mathbb{K}}+\hat{\mathbb{K}}\,\sigma\, y(x)\Big)e^{\sigma T(x)}.\\
\frac{\d}{\d x}\mathbb{K} \cdot \mathbb{K}^{-1}&=e^{-\sigma T(x')}\Big(\frac{\d}{\d x}\hat{\mathbb{K}}\,\hat{\mathbb{K}}^{-1}+y(x)\hat{\mathbb{K}}\,\sigma\hat{\mathbb{K}}^{-1}\Big)e^{\sigma T(x')}.\\
\frac{\d}{\d x'}\mathbb{K} \cdot \mathbb{K}^{-1}&=e^{-\sigma T(x')}\Big(-y(x')\sigma+\frac{\d}{\d x'}\hat{\mathbb{K}}\,\hat{\mathbb{K}}^{-1}\Big)e^{\sigma T(x')}.
\end{align*}
\vspace{-0.2cm}
\begin{align*}
\frac{\d^2}{\d^2 x}\mathbb{K}\cdot \mathbb{K}^{-1}&=e^{-\sigma T(x')}\bigg(\frac{\d^2}{\d^2 x}\hat{\mathbb{K}}+2y(x) \frac{\d}{\d x}\hat{\mathbb{K}}\,\sigma +\frac{\d}{\d x}y(x)\hat{\mathbb{K}}\,\sigma+\hbar^{-2}y(x)^2\hat{\mathbb{K}} \bigg)\hat{\mathbb{K}}^{-1}e^{\sigma T(x')}.\\
L(x).\mathbb{K}\cdot \mathbb{K}^{-1}& =e^{-\sigma T(x')}\Big(-L(x).T(x')\sigma+L(x).T(x)\hat{\mathbb{K}}\sigma\hat{\mathbb{K}}^{-1}+L(x).\hat{\mathbb{K}} \hat{\mathbb{K}}^{-1}\Big)e^{\sigma T(x')}.
\end{align*}
Therefore we can write
\beq\label{lastExp}
e^{\sigma T(x')} (\mathcal{O} \mathbb{K}(x,x'))\hat{\mathbb{K}}(x,x')^{-1}e^{-\sigma T(x')}= \frac{y(x')\sigma}{x-x'}+L(x).T(x')\sigma+ F(x,x'),
\eeq
with $F(x,x')$ a rational function of $x$ and $x'$ with no poles at $x'\rightarrow\infty$ or $x'\rightarrow\lambda_l$. Let $z'$ be one of the preimages of $x'$: $x'=x(z')$.
\begin{itemize}
\item Behavior at $x'\rightarrow\infty$: We consider the behavior of $L(x).T(x')$ at $z'\rightarrow \zeta_{\infty}$, with $\zeta_{\infty}\in x^{-1}(\infty)$. Here we call $d=-d_{\infty}$ and $m=m_{\infty}$ to simplify the notations, and use $\xi=x^{-\frac{1}{d}}, \xi'=x'^{-\frac{1}{d}}$ as local coordinates. The only terms of $L_{\infty}(x)$ that may bring poles at $z'\rightarrow \zeta_{\infty}$ when applied to $T(x')$ read
$$
L_{\infty}(x)=  \sum_{j=1+2d}^{m} t_{\zeta_{\infty},j} \sum_{k= 0}^{\frac{j-1}{d}-2} \xi^{-dk}  \Big(\frac{j}{d}-k-2\Big) \frac{\partial}{\partial t_{\zeta_{\infty},j-d(k+2)}}
$$
We have
$$
\frac{s}{d}\frac{\d}{\d t_{\zeta_{\infty},s}}T(x')= -\frac{1}{d}\xi'^{-s}.
$$
Therefore, around $z'= \zeta_{\infty}$, we get
\beq\label{tocancel}
L_{\infty}(x).T(x')=-\frac{1}{d}\sum_{j=1+2d}^{m} t_{\zeta_{\infty},j} \sum_{k= 0}^{\frac{j-1}{d}-2} \xi^{-dk}\xi'^{-j+(k+2)d},
\eeq
which is indeed singular when $z'\rightarrow\zeta_{\infty}$. On the other hand, we consider the local behavior of the first term of the RHS of \eqref{lastExp} around $z'= \zeta_{\infty}$
$$
\frac{y(x')}{x-x'}=-\frac{y(x')}{x'-x}=\frac{1}{d}\sum_{j=0}^m t_{\zeta_{\infty},j}\sum_{k\geq 0}\xi'^{-j+(k+2)d}\xi^{-dk},
$$
whose singular terms around $z'= \zeta_{\infty}$ cancel exactly \eqref{tocancel}.

\item Behavior at $x'\rightarrow\lambda_l$: We consider the behavior of $L(x).T(x')$ at $z'\rightarrow \zeta_l$, with $\zeta_l\in x^{-1}(\lambda_l)$. The only terms of $L_{\Lambda}$ that may bring poles at $z'\rightarrow \zeta_l$ applied to $y(x')\dd x'$ read
\begin{align*}
&\sum_{s=1}^{m_l+1}\sum_{j=s-1}^{m_l} t_{\zeta_l,j}(x-\lambda_l)^{-j+s-2}s\, \d_{\mathcal{B}_{\zeta_l,s}}\omega_{0,1}(z') =\sum_{s=1}^{m_l+1}\sum_{j=s-1}^{m_l} t_{\zeta_l,j}(x-\lambda_l)^{-j+s-2}s \oint_{\mathcal{B}_{\zeta_l,s}}\omega_{0,2}(\cdot,z') \\
& = \sum_{s=1}^{m_l+1}\sum_{j=s-1}^{m_l} \frac{t_{\zeta_l,j}(x-\lambda_l)^{s-1}}{(x-\lambda_l)^{-j+1}} \underset{p_1=\zeta_l}{\mathrm{Res}} \frac{\omega_{0,2}(p_1,z')}{(x(p_1)-\lambda_l)^s}= \underset{p_1=\zeta_l}{\mathrm{Res}} \sum_{j=0}^{m_l} \frac{t_{\zeta_l,j}\, B(p_1,z')}{(x-\lambda_l)^{j+1}}\sum_{s=1}^{j+1}\frac{(x-\lambda_l)^{s-1}}{(x_1-\lambda_l)^s} ,
\end{align*}
where we denoted $x_1=x(p_1)$ and $\omega_{0,2}=B$ is the Bergman kernel. Developing the last sum and dropping the terms which are regular at $p_1\rightarrow\zeta_l$, we obtain
\beq\label{irreg1}
\underset{p_1=\zeta_l}{\mathrm{Res}} \frac{y(p_1)}{x_1-x}B(p_1,z'),
\eeq
which is indeed irregular when $z'\rightarrow \zeta_l$. Observe that this is the local behavior around $z'=\zeta_l$ of the derivative of the second term of the RHS of \eqref{lastExp} with respect to $x'$. Now taking the derivative of the first term as well, we get
\beq\label{irreg2}
\frac{\dd}{\dd x'}\frac{y(x')}{x-x'}=\underset{p_1=z'}{\mathrm{Res}} \frac{y(p_1)}{x-x_1}B(z',p_1).
\eeq
Finally, adding \eqref{irreg1} and \eqref{irreg2}, we obtain an expression which is regular at $z'\rightarrow\zeta_l$:
$$
\frac{1}{2\pi i}\oint_{\gamma}\frac{y(\cdot)}{x-x_1}B(z',\cdot),
$$
where $\gamma$ is a contour around $z'$ and $\zeta_l$.
\end{itemize}
Thus, \eqref{lastExp} is a rational function of $x$ and $x'$ with no poles at $x'\rightarrow\infty$ or $x'\rightarrow\lambda_l$.
\end{proof}

\begin{proof}[Proof of Theorem~\ref{thmApp}]
Using the shape of Lemma~\ref{lemmashape0}, we deduce that $\mathcal{O}\,\mathbb{K}(x,x')=0$ when $x'\rightarrow x$ and hence also the expression given by \eqref{eqshape1}~vanishes when $x'\rightarrow x$. In Lemma~\ref{lemmashape1}, we have shown that \eqref{eqshape1} has no poles as a rational function of $x'$. Thus it must be constant, and since it vanishes at $x'\rightarrow x$, it must be zero, which implies $\mathcal{O}\,\mathbb{K}(x,x')=0$.
\end{proof}

\bibliographystyle{plain}
\bibliography{BibliPainleve}

\end{document}